\DeclareMathAlphabet{\mathpzc}{OT1}{pzc}{m}{it}
\theoremstyle{plain}
\newtheorem{theorem}{Theorem}[section]
\newtheorem*{theorem*}{Theorem}
\newtheorem{lemma}[theorem]{Lemma}
\newtheorem*{claim*}{Claim}
\newtheorem{proposition}[theorem]{Proposition}
\newtheorem{corollary}[theorem]{Corollary}
\theoremstyle{definition}
\newtheorem{definition}[theorem]{Definition}
\newtheorem{example}[theorem]{Example}
\newtheorem{remark}[theorem]{Remark}
\numberwithin{equation}{section}
\numberwithin{figure}{section}
\newcommand{\F}{\mathbb{F}}
\newcommand{\Gal}{\mathrm{Gal}}
\newcommand{\Aut}{\mathrm{Aut}}
\newcommand{\ignore}[1]{}
\newcommand{\can}{\overline{\phantom{x}}}
\keywords{skew cyclic codes, skew polycyclic codes, quantum error-correcting codes, nonassociative algebra, isometry}
\subjclass[2020]{Primary: 94B40, 94B05; Secondary: 17A99, 11T71, 16W60}
\begin{document}
\setlength{\parindent}{0pt}
\title[Skew polycyclic codes up to isometry and equivalence]{Using nonassociative algebras to classify skew polycyclic codes up to isometry and equivalence}

\author{Susanne Pumpl\"un}
\address{School of Mathematical Sciences, University of Nottingham,
Nottingham NG7 2RD}
\email{Susanne.Pumpluen@nottingham.ac.uk}

\date{\today}

\begin{abstract}
 Employing isomorphisms between their ambient rings, we  propose new definitions of equivalence and isometry for skew polycyclic codes  that will lead to tighter classifications than existing ones. This reduces the number of previously known isometry and equivalence classes.
  In the process, we classify classes of skew $(f,\sigma,\delta)$-polycyclic codes with the same performance parameters, to avoid duplicating already existing codes, and state precisely when different notions of equivalence coincide.

 The generator of a skew polycyclic code is in one-one correspondence with the generator of a principal left ideal in its nonassociative unital ambient  ring.
   By allowing the ambient rings to be nonassociative, we eliminate the need on restrictions on the length of the codes. Ring isomorphisms that preserve the Hamming distance (called isometries) map generators of principal left ideals to generators of principal left ideals and preserve length, dimension, and Hamming distance of the corresponding isometric skew polycyclic codes.
\end{abstract}

\maketitle

\section*{Introduction}

 Any results on  equivalences for skew polycyclic codes
  can be used to optimize searches for good codes, avoid duplicating existing code parameters, and find equivalence classes in given sets of codes.
Therefore there has been extensive work towards classifying constacyclic, polycyclic, skew constacyclic and more generally skew polycyclic codes (also called skew $(\sigma,\delta)$-polycyclic codes or $(\sigma,\delta)$-codes  in  \cite{BU09, BouLe2013, BouU2014}) via different equivalence relations, exploiting their algebraic properties, see for example  \cite{Chen2012, Chen2015,  CON2025, Oua2025}.
  For instance, in \cite{BLMS2024}  left submodules of $R/Rf$  represent the skew skew $(f,\sigma,\delta)$-polycyclic codes and $(\sigma,\delta)$-pseudo-linear transformations when $f$ is not two-sided, and the Hamming isometrical equivalent $(\sigma,\delta)$-polycyclic codes are classified.

Let $S$ be a unital 
associative ring, $S^\times$ its set of invertible elements, and $f\in R=S[t;\sigma,\delta]$ a monic skew polynomial. When the context is clear, we will simply write skew polycyclic code
 instead of  skew $(f,\sigma,\delta)$-polycyclic code, and skew constacyclic code instead of skew $(\sigma,c)$-constacyclic code (when $f(t)=t^{\mathbf{n}}-a$). The most important case here will be the one where $S$ is commutative and a finite field or ring, however,  parts of the theory hold for any  unital ring $S$ and so we will develop those as such.

  In the present paper, we  extend and unify the existing classifications employing isometries between nonassociative rings, as initiated in \cite{NevPum2025}.

 1. We  view  skew $(f,\sigma,\delta)$-polycyclic codes as left principal ideals in their nonassociative unital \emph{ambient ring} $R/Rf=S[t;\sigma,\delta]/S[t;\sigma,\delta]f $  \cite{Pum2017}.  This approach is well known when $f$ generates a two-sided ideal in $R$, e.g. when $f$ lies in the center of $R$, in which case $R/Rf$ is a unital associative quotient ring.  For instance, some results for skew constacyclic codes over fields developed in \cite{LM25} were given using the associative quotient rings
 $ K[t;\sigma]/K[t;\sigma] (t^{\mathbf{n}}-a)$.   When $Rf$ is not a two-sided ideal, it is well known that $R/Rf$ is an $R$-module. However, $R/Rf$  also still carries a ring structure, only a nonassociative one, and is a nonassociative Petit algebra over  its center. We will call the (nonassociative) ambient ring $R/Rf$ a \emph{Petit ring}.

 A large percentage of the existing results on skew polycyclic codes, when viewed as ideals in associative rings, rely on multiplications of not more than two elements in the ideal, or can be effortlessly adjusted to the nonassociative setting. This basically guarantees their validity also when the ambient ring is not associative.

2.
We call two skew polycyclic codes    \emph{isometric}, if there is a Hamming-weight preserving map between them that is canonically induced by a \emph{monomial} isomorphism \emph{of degree} $k$ between their nonassociative ambient Petit rings  mapping $t$ to $\alpha t^k$ for some invertible $\alpha \in S^\times$ and restricting to some $\tau\in {\rm Aut}(S)$,
where the \emph{Hamming weight} of a polynomial $p(t)$ is defined as the number of nonzero coefficients of $p(t)$.  Any such an isomorphism between Petit rings, called an \emph{isometry},
 gives a correspondence between their ideals, hence between the associated codes, which preserves  Hamming distance, dimension and  length (Lemma \ref{le:pres}).
 This approach generalizes the previously used notions of isometry and equivalence between skew polycyclic codes that were defined using algebraic structures so far (which exclusively worked with isometries restricting to $id\in {\rm Aut}(S)$). It is lead by our  knowledge of automorphisms between  nonassociative Petit algebras \cite{BrownPumpluen2019, CB, BPS, BP, Pum2024}. These initial results were too narrow, however, and will be generalized to isomorphisms between  nonassociative Petit rings. In other words,  we will view Petit algebras not just as algebras over their centers, but over any subring of their center, and study their isomorphisms instead.

 3.  We then restrict ourselves to commutative rings $S$ and compute all the classes of codes that are equivalent to a given one (Theorem \ref{t:equivclasses}).
  Our refined notion of equivalence yields tighter classifications.
  For some  examples for codes over finite fields  see Section \ref{s:finite}. (This was observed for the first time in \cite[Example 5.8]{NevPum2025}.)  We  give a list of new and easy necessary and  sufficient criteria for  isometry and equivalence of skew polycyclic codes that will help distinguish non-equivalent codes with potentially  different performance parameters.

4. When $S$ is a finite commutative Chain ring, we generalise existing results for skew polycyclic codes over fields and study some examples.

We will focus mostly on  $(f,\sigma)$-polycyclic codes and commutative $S$, especially skew constacyclic codes, as for these the isomorphisms of the corresponding ambient Petit rings are quick to establish. By allowing the case that $\sigma=id$, we include the case that $R=S[t]$ where possible, thus recovering results from  \cite{Chen2012, Chen2015, OaHA25II}  as special cases.  Overall, our equivalence relations will result in tighter partitions of non-equivalent classes of codes than the ones defined in  \cite{Aydin2022, CON2025, Oua2025}.

  (Skew) polycyclic codes over finite rings, e.g. over mixed alphabets, are among the most prominent linear codes currently studied. They can be used to construct
  quantum error-correcting codes, usually via Gray maps and  the CSS construction, see for instance  \cite{BagPa2025, Quantum1, Quantum2, Quantum3, Quantum4}.
We expect that the results of this paper will make the search for new SPC codes and correspondingly quantum error-correcting codes more efficient and more organized. Good parameters for quantum codes are still under investigation, and any results on isometries and equivalences for SPC codes
  can be used to optimize search algorithms, in turn facilitating the discovery of new record-breaking quantum
codes.
In particular, when considering the CSS construction for quantum error-correcting codes, one needs to  concentrate on codes $C$ such that the dual code $C^\perp$ satisfies $C^\perp\subset C$ as explained in \cite{DinhBagUpadhyayBandiTansuchat2021}.
Thus an important next step in the classification will be to consider when  isometries preserve some suitable notion of duality of codes.

 Our main results can be found in Section \ref{sec:equiv} onwards and employ  Hamming weight preserving isomorphisms between Petit rings treated in Section \ref{sec:iso}. We look at the case that $\delta=0$ and assume $\sigma$ is an automorphism.

Given  two monic polynomials
 $$f(t) = t^{\mathbf{n}}-\sum_{i=0}^{\mathbf{n}-1} a_i t^i ,\quad h(t) = t^{\mathbf{n}}-\sum_{i=0}^{\mathbf{n}-1} b_i t^i \in R=S[t;\sigma],$$
we call two classes  of  skew $(f,\sigma)$-polycyclic codes  and skew $(h,\sigma)$-polycyclic codes  of length $\mathbf{n}$  \emph{isometric}, if there is an algebra isomorphism $G_{\tau,\alpha,k}$ between their ambient rings $R/Rf$ and $R/Rh$ which is defined by restricting to some $\tau\in {\rm Aut}(S)$ and maps $t$ to $\alpha t^k$ for some integer $k$ with $0<k<\mathbf{n}-1$ and some $\alpha\in S^\times$, and  \emph{Chen isometric} if $\tau=id$. When $k=1$, we call the classes equivalent, and \emph{Chen equivalent}, when additionally $\tau=id$, and write $G_{\tau,\alpha}=G_{\tau,\alpha,k}$. The current literature only uses Chen isometry and Chen equivalence.

When $\sigma\not=id$ has order $\mathbf{m}$, we show that we need $k\equiv 1 \mod \mathbf{m}$ and $\gcd(k,\mathbf{n})=1$ as necessary conditions for isometries $G_{\tau,\alpha,k}$ to exist (Proposition \ref{prop:monica}), which shows that only limited $k>1$ will actually appear.

\begin{theorem*}  (Theorem \ref{general_isomorphism_theorem1code})
 Let $\sigma$ have order $\mathbf{m}$. Two classes  ${\bf C}_f$ and  ${\bf C}_h$ of skew polycyclic codes of length $\mathbf{n}$ over $S$ are equivalent
 if and only if there exists
$\tau\in {\rm Aut}(S)$ that commutes with $\sigma$ and $\alpha \in S^{\times}$  such that
$$
\tau(a_i) = N_{\mathbf{n}-i}^{\sigma}(\sigma^{i} (\alpha))b_i
$$
 for all $i \in\{ 0, \ldots, \mathbf{n}-1\}$ (resp., Chen-equivalent if and only if above equation is true for $\tau=id$).
 \end{theorem*}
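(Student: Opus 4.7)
My plan is to unfold the definition of equivalence and transport what it says about ambient algebras into arithmetic on coefficients. By definition, the two classes are equivalent exactly when there is an algebra isomorphism $G = G_{\tau,\alpha}\colon R/Rf \to R/Rh$ with $G|_S = \tau$ and $G(t) = \alpha t$, so it suffices to characterise when such a $G$ exists.

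\emph{Step 1 (commutation).} Applying $G$ to the twisted relation $ts = \sigma(s)t$ (which descends to both quotients) gives $\tau(\sigma(s))\,\alpha t = \alpha t \cdot \tau(s) = \alpha\,\sigma(\tau(s))\,t$ in $R/Rh$ for every $s \in S$. Since $\alpha \in S^\times$ and $S$ is commutative, this is equivalent to $\sigma\tau = \tau\sigma$, recovering the hypothesis.

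\emph{Step 2 (powers of $\alpha t$).} A direct induction using $ts = \sigma(s)t$ yields
\[
(\alpha t)^j \;=\; N_j^\sigma(\alpha)\,t^j, \qquad N_j^\sigma(\alpha) := \prod_{\ell=0}^{j-1}\sigma^\ell(\alpha),
\]
for every $j\ge 0$, in either ambient algebra.

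\emph{Step 3 (forward direction).} Apply $G$ to the defining relation $t^m = \sum_{i=0}^{m-1} a_i t^i$ of $R/Rf$ and compute $G(t^m)$ in two ways inside $R/Rh$:
\[
\sum_{i=0}^{m-1} \tau(a_i)\,N_i^\sigma(\alpha)\,t^i \;=\; G(t^m) \;=\; (\alpha t)^m \;=\; N_m^\sigma(\alpha)\,t^m \;=\; N_m^\sigma(\alpha)\sum_{i=0}^{m-1} b_i\,t^i .
\]
Comparing coefficients on the $S$-basis $\{1,t,\dots,t^{m-1}\}$ of $R/Rh$ gives $\tau(a_i)\,N_i^\sigma(\alpha) = N_m^\sigma(\alpha)\,b_i$. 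Using the identity
\[
\frac{N_m^\sigma(\alpha)}{N_i^\sigma(\alpha)} \;=\; \prod_{\ell=i}^{m-1}\sigma^\ell(\alpha) \;=\; \sigma^i\!\left(N_{m-i}^\sigma(\alpha)\right) \;=\; N_{m-i}^\sigma(\sigma^i(\alpha)),
\]
this rearranges to the asserted formula $\tau(a_i) = N_{m-i}^\sigma(\sigma^i(\alpha))\,b_i$.

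\emph{Step 4 (converse).} Given $\tau$ commuting with $\sigma$ and $\alpha \in S^\times$ satisfying the displayed equations, I define $G$ on the $S$-basis by $G(t^j) := N_j^\sigma(\alpha)\,t^j$ and extend $\tau$-semilinearly, i.e. $G\bigl(\sum c_j t^j\bigr) := \sum \tau(c_j) N_j^\sigma(\alpha) t^j$. Then $G|_S = \tau$ and $G(t) = \alpha t$. The commutation $\sigma\tau = \tau\sigma$ makes $G$ compatible with $ts = \sigma(s)t$, while the coefficient equations force $G(t^m) = (\alpha t)^m$ in $R/Rh$. These two checks suffice to conclude that $G$ descends to a bijective algebra map, using the structural description of Petit algebras from Section \ref{sec:iso}. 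Finally, the Chen-equivalent case is the specialisation $\tau = \mathrm{id}$, for which the commutation hypothesis is automatic.

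The part I expect to be most delicate is the converse: showing that the prescribed formulas define an \emph{algebra} isomorphism rather than merely an $S$-semilinear bijection, because $R/Rf$ and $R/Rh$ are nonassociative in general. I would handle this by invoking the characterisation of $S$-algebra homomorphisms of Petit algebras in terms of the image of $t$ and the restriction to $S$ recalled in Section \ref{sec:iso}, which reduces well-definedness exactly to the two checks above.
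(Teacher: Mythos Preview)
Your argument is correct and mirrors the paper's route: the paper states this result as an immediate corollary of Theorem~\ref{general_isomorphism_theorem1}, whose proof is exactly your Steps~2--4 (compute $G(t)^i=N_i^\sigma(\alpha)t^i$, evaluate $G(t\cdot t^{m-1})$ two ways in $R/Rh$, and compare coefficients). One small correction to your Step~4: the clean justification for the converse is not a general ``characterisation recalled in Section~\ref{sec:iso}'' but Lemma~\ref{le:inducediso} (and the discussion preceding it): since $\tau$ commutes with $\sigma$, $G_{\tau,\alpha}$ is a ring automorphism of the associative ring $R=S[t;\sigma]$, and your coefficient identities show $G_{\tau,\alpha}(f)=N_m^\sigma(\alpha)\,h$, so it descends to an isomorphism $R/Rf\to R/Rh$ --- this sidesteps the nonassociativity issue you flagged more cleanly than checking multiplicativity directly on the quotient.
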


For codes over fields, isometry and equivalence coincide in a large number of cases. Equivalent codes have the same performance parameters.

\begin{theorem*}  (Theorem \ref{t:2})
 Suppose $K$ is a field and $\Aut(K)$ is abelian. Let $\sigma$ have order $\mathbf{m}$ such that $\mathbf{m} \geq \mathbf{n}-1$ and assume that $f$ does not generate two-sided ideals in $K[t;\sigma]$.
   Then
two classes  ${\bf C}_f$ and  ${\bf C}_h$ of skew polycyclic codes  of length $\mathbf{n}$ over $S$ are equivalent
 if and only if they are isometric, and Chen equivalent if and only if they are Chen isometric.
\end{theorem*}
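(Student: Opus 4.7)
The forward implications are immediate from the definitions, since equivalence is by fiat the case $k=1$ of isometry (and likewise for the Chen versions). The heart of the theorem is the converse, which I would reduce to the single claim that any isometry $G_{\tau,\alpha,k}\colon R/Rf \to R/Rh$ satisfying the hypotheses must already have $k=1$.

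The plan is to exploit the defining skew commutation relation $ta = \sigma(a)t$ of $R = K[t;\sigma]$, which descends to both ambient algebras and must be preserved by any algebra isomorphism. Applying $G_{\tau,\alpha,k}$ to $ta = \sigma(a)t$ for $a \in K$, using $G_{\tau,\alpha,k}(t) = \alpha t^k$ and $G_{\tau,\alpha,k}|_K = \tau$, and then pushing $\tau(a)$ past $t^k$ inside $R/Rh$ via $t^k b = \sigma^k(b) t^k$, reduces matters to the identity $\alpha \sigma^k(\tau(a)) t^k = \tau(\sigma(a)) \alpha t^k$ in $R/Rh$. Since $K$ is commutative and $\alpha \in K^\times$, this simplifies to the functional equation $\sigma^k \tau = \tau \sigma$ on $K$.

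At this point I would invoke the abelian hypothesis on $\Aut(K)$: it gives $\sigma\tau = \tau\sigma$, which combined with the previous equation yields $\sigma^{k-1} = \mathrm{id}$ on $K$. Thus the order $n$ of $\sigma$ divides $k-1$, and the hypotheses $n \geq m-1$ together with $0 < k < m-1$ force $k=1$. Hence $G_{\tau,\alpha,k} = G_{\tau,\alpha}$ is already an equivalence, and an application of Theorem \ref{general_isomorphism_theorem1code} gives the explicit relation on coefficients. The Chen case is the specialization $\tau = \mathrm{id}$ throughout.

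I do not anticipate a substantial obstacle. The only point to verify carefully is that the descended relation $t \cdot a = \sigma(a) \cdot t$ remains valid in the (possibly nonassociative) algebras $R/Rf$ and $R/Rh$; this is immediate, since the products $(\alpha t^k)\cdot\tau(a)$ and $\tau(\sigma(a))\cdot(\alpha t^k)$ both have $R$-degree $k < m$ and thus require no reduction modulo $h$. The non-two-sidedness hypothesis on $f$ and $h$ is not used directly in this argument; rather, it is the structural assumption under which the isomorphism classification of Section \ref{sec:iso} guarantees that algebra isomorphisms between the ambient Petit algebras take the restrictive form $G_{\tau,\alpha,k}$ in the first place.
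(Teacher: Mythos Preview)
Your proof is correct and uses the same core mechanism as the paper: apply the isomorphism to the commutation relation $ta=\sigma(a)t$, deduce $\sigma^k\circ\tau=\tau\circ\sigma$, use the abelian hypothesis to cancel $\tau$, and then let the order condition $n\geq m-1$ force $k=1$. This is precisely the computation carried out in Lemma~\ref{l:5.5} and in the proof of Theorem~\ref{general_isomorphism_theorem2}.

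The difference is one of scope. The paper deduces the statement from Theorem~\ref{general_isomorphism_theorem2}, which classifies \emph{all} $F$-algebra isomorphisms $G\colon R/Rf\to R/Rh$ (not just the monomial ones) and shows they are already of the form $G_{\tau,\alpha}$. That stronger result genuinely needs the non-two-sidedness hypothesis: it is what guarantees the Petit algebra is properly nonassociative, so that its left nucleus equals $K$, which in turn pins down $G|_K=\tau$ and then allows a coefficient-by-coefficient analysis of $G(t)=\sum k_it^i$. Your argument, by contrast, starts from an isometry $G_{\tau,\alpha,k}$, where $G|_K=\tau$ and the monomial shape of $G(t)$ are given by definition; consequently, as you correctly observe, the non-two-sidedness hypothesis plays no role in your proof. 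So your route is more economical for the stated theorem, while the paper's route yields a stronger structural result about the ambient algebras as a by-product.
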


It is known that if $\sigma$ has  order $\mathbf{m}$, $S_0$ denotes the ring fixed by $\sigma$, and either $a,b\in S \setminus S_0$, or  $a,b\in S_0^\times$ and  $\mathbf{m} \nmid \mathbf{n}$ then equivalence and isometry coincide for the classes of skew $(\sigma,a)$-constacyclic
and skew $(\sigma,b)$-constacyclic codes of length $\mathbf{n}$ \cite[Corollary 4.5]{NevPum2025}.

Moreover, when $\sigma$ has  order $\mathbf{m}$, the equivalence classes $[{\bf C}_{h}] $ of classes of skew polycyclic codes of length $\mathbf{n}$ over $S$ that are equivalent to ${\bf C}_{h}$ (respectively, Chen equivalent) are computed in Theorem \ref{t:equivclasses}.

 There is an intricate connection between classes of equivalent skew $(f,\sigma)$-polycyclic and equivalent skew $(\sigma,a)$-constacyclic codes over commutative rings $S$:
Two classes ${\bf C}_f$ and ${\bf C}_h$ of skew polycyclic codes over $S$ of length  $\mathbf{n}$ are equivalent via $G_{\tau,\alpha}$ if and only if
 for all $i \in\{ 0, \dots, \mathbf{n}-1\}$ where $a_i\not=0$, the class of $(\sigma,a_i)$-constacyclic codes is $(\mathbf{n}-i)$-equivalent to the class of $(\sigma,b_i)$-constacyclic codes via $G_{\tau,\sigma^i(\alpha)}$ (Proposition \ref{p:Chen equivpolycycliccode}).

For $a,b\in S^\times$,
the classes of skew $(\sigma, a)$-constacyclic and skew $\sigma$-cyclic codes  of length $\mathbf{n}$ are  equivalent if and only if
$a \in N_{\mathbf{n}}^\sigma (S^\times).$
  The classes of skew $(\sigma, a)$-constacyclic codes and  skew $\sigma$-negacyclic codes of length $\mathbf{n}$ are equivalent if and only if
 $ -a \in N_{\mathbf{n}}^\sigma (S^\times).$ In both cases,  isometry and equivalence coincide when $\mathbf{n}$  does not divide $ \mathbf{m}$ (Corollary \ref{c:varia}).

For finite rings we count the number of Chen isometry classes of skew constacyclic codes for classes of codes with a nonassociative ambient algebra (Theorem \ref{c:OuazzoufiniteS}).

 Conditions for  when two skew polycyclic codes cannot be equivalent are given in Section \ref{s:end}.

  We then look at isometries and equivalences of skew polycyclic codes over finite commutative chain rings in Section \ref{sec:chain}. We relate the isometries between the ambient Petit rings of skew polycyclic codes over $S$ and the ambient Petit rings of skew polycyclic codes over the residue field $K$ of $S$.

We extend Theorem \ref{general_isomorphism_theorem2} and \cite[Theorem 4.4]{NevPum2025} to Galois extensions $S/S_0$ of finite commutative chain rings when $\sigma\not=id$ and $\delta=0$.
 For skew constacyclic codes over finite chain rings, $\mathbf{n}$-isometries between their ambient rings where $k>1$ are rare, and can only occur under rigid conditions on $k$ and the ambient algebras.

 \begin{theorem*}(Theorem \ref{thm:chainmain})
 Let $S/S_0$ be a Galois extensions of chain rings with ${\rm Gal}(S/S_0)=\langle\sigma\rangle$.
Let $f,h\in R=S[t;\sigma] $ such that $\bar f(t)=t^{\mathbf{n}}-a_0$ and $\bar h(t)=t^{\mathbf{n}}-b_0$.
 Suppose  that $2\leq k < \mathbf{n}$ and that there exists an isometry $ G_{\tau,\alpha,k}: R/Rf\to R/Rh $, then
 \\ (i) $k\equiv 1 \mod \mathbf{m}$;
 \\ (ii) $ \mathbf{m}\mid  \mathbf{n}$;
 \\ (iii)  $\gcd(k, \mathbf{n})=1$;
 \\ (iv)  $\overline{b_0}, \overline{b_0}\in K_0$;
        \\ (v)  $(N_{K/K_0}(\alpha))^{\mathbf{n}/ \mathbf{m}}\overline{b_0}^k=\tau(\overline{a_0})$.
    \end{theorem*}

\begin{theorem*}(Theorem \ref{thm:chain2}) 
Let $S/S_0$ be a Galois extensions of chain rings with ${\rm Gal}(S/S_0)=\langle\sigma\rangle$. Suppose that one of the following holds.
\\ (i)  $\mathbf{n} \leq \mathbf{m}-1$ and  $\bar f$  does not generate a two-sided ideal in $K[t;\sigma]$.
 \\ (ii)  $\bar f(t)=t^{\mathbf{n}}-\overline{ a_0}$ and either $ \mathbf{m} \nmid \mathbf{n}$, or  $\overline{ a_0}$  is not in $K_0$.
\\ Then the notions of equivalence and isometry (resp., of Chen equivalence and Chen isometry) coincide for all skew $(f,\sigma)$-polycyclic codes over $S$ of length $\mathbf{n}$.
\end{theorem*}

Some examples for Galois rings are presented.

 Our results for isometries between the ambient rings also have consequences for classifying skew polycyclic codes with respect to the rank metric, especially over finite fields $K$  (where we need $\alpha\in K'$, the suitable subfield of $K$ we consider the rank metric over).
  In particular,  they will  help understand the different  MRD codes obtained from these \cite[Proposition 3.13]{OaHA25}. The most canonical approach here will be to additionally employ the isometries $G_{\tau,\alpha}$, for all $\tau\in {\rm Aut}(K)$ (Remark \ref{r:rank}).

The readers interested in codes are encouraged to skip the technical sections dealing with isomorphisms between nonassociative Petit rings, and focus on the results achieved for skew polycyclic codes.

 \section{Preliminaries}\label{sec:prelim}

\subsection{Nonassociative rings} \label{subsec:2}

Let $R$ be a unital commutative ring and let $A$ be an $R$-module.
We call $A$ a \emph{(nonassociative) algebra} over $R$ if there exists an $R$-bilinear
map $A\times A\mapsto A$, $(x,y) \mapsto x \cdot y$, usually denoted simply
by juxtaposition $xy$, the  \emph{multiplication} of $A$. An algebra
$A$ is called \emph{unital} if there is an element in $A$, denoted by
1, such that $1x=x1=x$ for all $x\in A$. We will only consider unital
algebras.

 Define $[x, y, z] = (xy) z - x (yz)$. The associativity in $A$ is measured by the {\it left nucleus} defined as ${\rm Nuc}_l(A) = \{ x \in A \, \vert \, [x, A, A]
= 0 \}$, the {\it middle nucleus} ${\rm Nuc}_m(A) = \{ x \in A \,
\vert \, [A, x, A]  = 0 \}$ and  the {\it right nucleus}   ${\rm
Nuc}_r(A) = \{ x \in A \, \vert \, [A,A, x]  = 0 \}$.  The sets ${\rm Nuc}_l(A)$, ${\rm Nuc}_m(A)$ and ${\rm Nuc}_r(A)$ are associative
subalgebras of $A$. Their intersection
 ${\rm Nuc}(A) = \{ x \in A \, \vert \, [x, A, A] = [A, x, A] = [A,A, x] = 0 \}$ is the {\it nucleus} of $A$ and is an associative subalgebra of $A$ containing $R1$
and $x(yz) = (xy) z$ whenever one of the elements $x, y, z$ is in
${\rm Nuc}(A)$. The  {\it commuter} of $A$ is defined as ${\rm
Comm}(A)=\{x\in A\,|\,xy=yx \text{ for all }y\in A\}$ and the {\it
center} of $A$ is ${\rm C}(A)=\text{Nuc}(A)\cap  {\rm Comm}(A)$
 \cite{Sch}.

  We note that every unital nonassociative algebra is a unital nonassociative ring $(A,+,\cdot)$, with $(A,+)$ being an abelian  group, $(A,\cdot)$ a groupoid, and with distributivity laws satisfied by definition of the multiplication. Conversely, every nonassociative ring can be viewed as a nonassociative algebra over its center.

 A nonassociative ring  $A$  is called a \emph{proper} nonassociative ring  if it is not associative.

  Two $R$-algebras $A$ and $A'$ are called \emph{$R$-isotopic} if and only if there exist $R$-linear isomorphisms $F,G,H:A\rightarrow A'$ such that $H(uv)=F(u)G(v)$ for all $u,v\in A$.

A nonassociative algebra $A\not=0$ (resp., a ring $A\not=0$) over a
field $K_0$ is called a \emph{division algebra}, resp., a \emph{division ring}, if
for all $a\in A$, $a\not=0$, the left and  right multiplication  with $a$,
$L_a(x)=ax$ and $R_a(x)=xa$, are  bijective maps. If $A$ is a
finite-dimensional algebra over $K_0$, then $A$ is a division algebra if and only if $A$ has no zero divisors \cite{Sch}.

\subsection{Skew polynomial rings}

Let $S$ be a unital associative ring (it need not be commutative), and $\sigma$ a ring automorphism of
$S$. A \emph{left $\sigma$-derivation} is an additive map $\delta:S\rightarrow S$, such that
$\delta(ab)=\sigma(a)\delta(b)+\delta(a)b$
for all $a,b\in S$.  Define $S_0={\rm Fix}(\sigma)=\{a\in S\,|\,
\sigma(a)=a\}$ and ${\rm Const}(\delta)=\{a\in S\,|\, \delta(a)=0\}$.

The \emph{skew polynomial ring} $R=S[t;\sigma,\delta]=\left\{ a_0+a_1t+\dots +a_nt^n \,|\, a_{i} \in S, n \in \mathbb{N}\right\}$ is the
set of skew polynomials $a_0+a_1t+\dots +a_nt^n$, $a_i\in
S$, with the usual addition of two polynomials,  where the multiplication is defined by the rule
 $ta=\sigma(a)t+\delta(a)$ for all $a\in S$. That means
$$at^nbt^{m}=\sum_{j=0}^n a(\Delta_{n,j}\,b)t^{m+j}$$ for all $a,b\in
S$, where the map $\Delta_{n,j}$ is defined recursively via
$$\Delta_{n,j}=\delta(\Delta_{n-1,j})+\sigma (\Delta_{n-1,j-1}),$$ with
$\Delta_{0,0}=id_S$, $\Delta_{1,0}=\delta$, $\Delta_{1,1}=\sigma $  \cite{O1}.
Therefore $\Delta_{n,j}$ is the sum of all monomials in $\sigma$
and $\delta$ of degree $j$ in $\sigma$ and degree $n-j$ in $\delta$
\cite[p.~2]{J96}. If $\delta=0$, then $\Delta_{n,n}=\sigma^n$.

When $\sigma=id$ and $\delta=0$, $S[t]=S[t;id,0]$ is the usual ring of left
polynomials. When $\delta=0$,  we use the notation $S[t;\sigma]=S[t;\sigma,0]$.

For a positive integer $\mathbf{n}$, let $R_\mathbf{n}=\{g\in R \,|\, \deg(g)<\mathbf{n}-1 \}$.

 For $f(t)=a_0+a_1t+\dots +a_\mathbf{n}t^\mathbf{n}\in R$ with $a_\mathbf{n}\not=0$ define ${\rm deg}(f)=\mathbf{n}$ and ${\rm deg}(0)=-\infty$.
Then ${\rm deg}(gh)\leq{\rm deg} (g)+{\rm deg}(h)$ (with equality if and only if the product of the leading coefficients of
$h$ and $g$ is not zero).
 An element $f\in R$ is \emph{irreducible} in $R$ if it is not a unit and  it has no proper factors, i.e if there do not exist $g,h\in R$ with
 $1\leq {\rm deg}(g),{\rm deg} (h)<{\rm deg}(f)$ such
 that $f=gh$.

When $S$ is a division algebra, for $u,v\in R$, the \emph{greatest common right divisor} ${\rm gcrd}(u,v)$ is the polynomial $w\in R$ of smallest degree such that $w$ right divides both $u$ and $v$ (i.e., $u=sw$ and $v=s'w$ for some $s,s'\in R$) and for every other $w'\in R$ that right divides $u$ and $v$ we have that $w'$ right divides $w$. The \emph{greatest common left divisor} ${\rm gcld}(u,v)$ is defined analogously.

 Two nonzero skew polynomials $f$ and $g$ in $R$ are  \emph{similar}, written $f\sim g$, if $R/Rf\cong R/Rg$ as left $R$-modules.
  When $S$ is a division algebra, equivalently we can say that $f\sim g$, if there exist $u,v\in R$, such that the greatest common right divisor ${\rm gcrd}(f,u)=1$, the greatest common left divisor ${\rm gcld}(g,v)=1$ and $gu=vf$.  
 The element $u\in R$ can be chosen such that $u\in R_\mathbf{n}$ \cite[p.~11]{J96}.

 We call  $f\in R$ \emph{two-sided} if $Rf$  is a two-sided ideal in $R$.

 The \emph{companion matrix} of a monic polynomial $f=t^{\mathbf{n}}-\sum_{i=0}^{\mathbf{n}-1}a_it^i\in R=S[t;\sigma,\delta]$ of degree $\mathbf{n}$  is given by
{\begin{equation*}
 A_f= \left( \begin{array}{cccccccc}
                                  0&     1                                    &     \cdots  &   &   &     &     0\\
                                  &         0  &    1   \cdots    &     &     & \cdots   &   \\
  \vdots                     &            &    \ddots         &       &    &         &   \vdots\\
                                            &                                                  &                                      &                \ddots                          &                                          &             & \\
  &                                               &             &                                     &
 \ddots                       &          & \\
  \vdots                     &                                            &           &                                       &                                       & \ddots  &  1\\
 a_0 &     a_1       &      \cdots   &  &       & \cdots   &      a_{\mathbf{n}-1}                                           \\
 \end{array} \right)
 \end{equation*}}

 Let $h\in R=S[t;\sigma,\delta]$ be a monic polynomial then $f\sim h$ if and only if there exists an invertible matrix $B\in M_\mathbf{n}(K)$ such that
 $A_f=\sigma(B)A_h B^{-1}+\delta (B)$.
  Here, $\sigma$ respectively $\delta$ applied to $B$ means applying it to each entry of the matrix.

 Note that the fact that $R/Rf\cong R/Rh$ as left $R$-modules does not imply that $R/Rf\cong R/Rh$ as nonassociative rings. In certain cases, however, it may imply that $R/Rf$ and $ R/Rh$ are isotopic rings  \cite[Theorem 3.4]{Pu2025}.

\subsection{Isomorphisms between skew polynomial rings}\label{sec:isoR}

Let $S$ be a unital associative ring (it need not be commutative), and let $S^\times$ be the set of invertible elements in $S$.
 Let $f,h\in R=S[t;\sigma,\delta]$ be monic.
For all $\tau\in {\rm Aut}(S)$ and  $\beta \in S$ and $\alpha\in S^\times$, such that
$$\beta\tau(b)+\alpha\delta(\tau(b))=\tau(\sigma(b))\beta +\tau(\delta(b)) \text{ and } \alpha\sigma(\tau(b))=\tau(\sigma(b)) \alpha$$
for all $b\in S$, the map
$H_{\tau,\beta,\alpha}: S[t;\sigma,\delta]\rightarrow S[t;\sigma,\delta]$,
$$H_{\tau,\beta,\alpha}: S[t;\sigma,\delta]\rightarrow S[t;\sigma,\delta],\quad \sum_{i=0}^{n} b_it^i \mapsto \sum_{i=0}^{n} \tau(b_i) (\beta+\alpha t)^i $$
is a ring automorphism of $R$  \cite[p. 4]{LaLe92}.
 In particular,  if
$$\alpha\delta(\tau(b))=\tau(\delta(b)) \text{ and } \alpha\sigma(\tau(b))=\tau(\sigma(b))\alpha$$
for all $b\in S$ then
$$G_{\tau,\alpha}: S[t;\sigma,\delta]\rightarrow S[t;\sigma,\delta],$$
$$\sum_{i=0}^{n} d_it^i \mapsto \sum_{i=0}^{n} \tau(d_i) ( \alpha t)^i $$
 is a ring automorphism that preserves the Hamming weight.

For the special case that $\delta=0$ we know more.

\begin{theorem} \cite[Theorem 3]{Ri}\label{thm:rimm}
Let $R=S[t;\sigma]$ (we allow $\sigma=id$).
Then the map $G:S[t;\sigma]\rightarrow S[t;\sigma]$,
$G(t)=\sum_{i=0}^{\ell} \alpha_i t^i$, $\alpha_i \in S$, extends  $\tau\in {\rm Aut}(S)$ to an isomorphism $G: S[t;\sigma] \rightarrow S[t;\sigma]$ if and only if
  \begin{enumerate}
  \item $\sigma(\tau(a)) \alpha_i=\alpha_i\tau(\sigma^i (a))$  for all $ a\in S $ and $i\in\{0,\dots, \ell\},$
\item $  \alpha_1\in S^\times,$
\item $ \alpha_i \text{ is nilpotent for all }  i\in\{2,\dots \ell\}.$
\end{enumerate}
\end{theorem}

 Note that if $S$ is a  division algebra, for example a field,
 $\alpha_i=0 $ for all $ i\in\{2,\dots, \ell\}$ holds automatically in the setting of Theorem \ref{thm:rimm}.

Since we are only interested in maps that preserve the Hamming distance,
 we will only consider the case that $\alpha_0=0$ and $\alpha_i=0 $ for all $ i\in\{2,\dots, \ell\}$. We conclude that $G: S[t;\sigma] \rightarrow S[t;\sigma]$
 extends  $\tau\in {\rm Aut}(S)$ to an isomorphism that respects the Hamming weight with  $G(t)= \alpha t$ for some $\alpha \in S^\times$, if and only if
  $\sigma(\tau(a)) \alpha= \alpha \tau(\sigma (a))$  for all $ a\in S $.

\section{Skew polycyclic codes viewed as ideals in nonassociative Petit rings} \label{subsec:structure}

\subsection{Petit rings}
Let $S$ be a unital associative ring (it need not be commutative).
Let $f \in R=S[t;\sigma,\delta]$ have an invertible leading coefficient. Then
for all $g\in R$,  there exist  uniquely
determined $r,q\in R$ with
 ${\rm deg}(r)<{\rm deg}(f)$, such that
$g(t)=q(t)f(t)+r(t),$ and since we assume $\sigma\in{\rm Aut}(S)$,
also uniquely
determined $r',q'\in R$ with ${\rm deg}(r')<{\rm deg}(f)$, such
that $g(t)=f(t)q'(t)+r'(t)$ (e.g., \cite[Proposition 1]{Pum2017}).

Let ${\rm mod}_r f$ denote the remainder of right division by $f$ and
 ${\rm mod}_l f$ the remainder of left division by $f$.
Since the remainders are uniquely determined, the set $\{g\in R\,|\, {\rm deg}(g)<\mathbf{n}\}$ canonically represent the
elements of the (left, resp. right)
$R$-modules $R/Rf$ and $R/ fR$.

 \begin{theorem} \label{def:Petit}
$(i)$ The additive group $\{u\in R\,|\, {\rm deg}(g)< \mathbf{n}\}$  together with the multiplication
$$u \circ_f v=uv \,\,{\rm mod}_r f,$$
 where the right hand side is the remainder of $uv$ after division by $f$ on the right, is a unital nonassociative ring  denoted by denoted by $R/Rf$ or  $\mathbb{S}_f$.
\\ $(ii)$  The additive group $\{u\in R\,|\, {\rm deg}(g)< \mathbf{n}\}$  together with the multiplication
$$u \circ_{f,l} v=uv \,\,{\rm mod}_l f, $$
 where the right hand side is the remainder of $uv$ after division by $f$ on the left, is a unital nonassociative ring denoted by $R/fR$ or $\,_f\mathbb{S}$.
\end{theorem}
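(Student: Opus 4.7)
The plan is to verify the ring axioms directly, leaning on the uniqueness statement in the division algorithm recalled just before the theorem \cite[Proposition 1]{Pum2017}. Write $V = \{g \in R : \deg(g) < m\}$, which is an abelian group under the polynomial addition inherited from $R$. Throughout, the hypothesis that the leading coefficient of $f$ is invertible (together with $\sigma \in \mathrm{Aut}(S)$ for part (ii)) is precisely what guarantees the existence and uniqueness of the right, respectively left, remainder modulo $f$.

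For part (i), well-definedness of the multiplication is immediate: for $g, h \in V$ the product $gh \in R$ has a unique decomposition $gh = qf + r$ with $\deg(r) < m$, so $r = gh \bmod_r f$ is an unambiguously defined element of $V$. For bi-additivity (i.e.\ distributivity), I would argue that taking the right remainder is additive: if $g_1 h = q_1 f + r_1$ and $g_2 h = q_2 f + r_2$ with $\deg(r_i) < m$, then $(g_1 + g_2) h = (q_1 + q_2) f + (r_1 + r_2)$ and $\deg(r_1 + r_2) < m$, so uniqueness forces $(g_1 + g_2) h \bmod_r f = r_1 + r_2$. The identical argument, applied to $g h_1$ and $g h_2$, yields right distributivity inside $S_f$.

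The unit element is simply $1 \in R$, which lies in $V$ since $\deg(1) = 0 < m$; for any $g \in V$ both $1 \cdot g = g$ and $g \cdot 1 = g$ already have degree less than $m$, so no reduction takes place and $1$ is a two-sided identity. Associativity is deliberately not being claimed here: in general it fails precisely when $Rf$ is not a two-sided ideal, which is the whole point of the nonassociative Petit construction. Part (ii) proceeds by the dual argument: the assumption $\sigma \in \mathrm{Aut}(S)$ delivers a left division algorithm with uniquely determined left remainder, and one repeats the three steps above verbatim with every occurrence of $\bmod_r$ replaced by $\bmod_l$.

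No genuine obstacle arises; the proof is essentially a bookkeeping exercise in the uniqueness of division. The only point worth flagging is the need to resist conflating the nonassociative product $gh \bmod_r f$ in $S_f$ with the associative product in $R$, and to invoke uniqueness of the remainder every time bi-additivity is asserted, since this is what replaces the usual appeal to $Rf$ being an ideal.
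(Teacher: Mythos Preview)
Your verification is correct and is the standard way to establish this result. Note, however, that the paper does not actually supply a proof of this theorem: it is stated as a known fact (the label \texttt{def:Petit} signals that it functions as a definition), with the construction attributed to Petit \cite{P66} and its generalization to \cite{Pum2017}; so there is no ``paper's own proof'' to compare against, and your argument is exactly the routine check one would expect.
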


We will usually simply use juxtaposition to write the multiplication in $\mathbb{S}_f$ and $\,_f\mathbb{S}$.

Note that $\mathbb{S}_f$ is an associative ring exactly when $S[t;\sigma,\delta] f$ is a two-sided ideal which is the case if and only if $f$ is two-sided. In that case, $\mathbb{S}_f$ coincides with the classical quotient ring.
 For example, this happens when $f\in C(R)$.
When  $S$ is a division algebra and $\delta=0$, the two-sided elements $f(t)$ are all of the form $ac(t)t^\ell$, where $a\in S^\times$, $c(t)\in C(R)$ and $\ell\geq 0$ is an integer
 \cite[Theorem 1.1.22]{J96}.

Let $f\in S[t;\sigma,\delta]$ have degree $\mathbf{n}\geq 2$ and an
invertible leading coefficient. Then $\mathbb{S}_f$ is a free left $S$-module  of rank $\mathbf{n}$ with basis $t^0=1,t,\dots,t^{\mathbf{n}-1}$.

  We will only consider monic polynomials $f\in R$, and when $S$ is a division algebra, this will not even limit our choices since $\mathbb{S}_f = \mathbb{S}_{af}$ for all invertible $a\in S^\times$.  If $f$ is reducible then $\mathbb{S}_f$ contains zero divisors.
In the following,  we will thus always assume that
  $$f\in R \text{\emph{ has degree} } \mathbf{n}>1 \text{\emph{ and an invertible leading coefficient. } }$$

Both $\mathbb{S}_f$ and $\,_f\mathbb{S}$ are unital nonassociative algebras (called \emph{Petit algebras})  over the
commutative subring $$\{a\in S\,|\, au=ua \text{ for all } u\in
\mathbb{S}_f\}={\rm Comm}(\mathbb{S}_f)\cap S$$ of $S$, and have center
$$ C(S)\cap{\rm Fix}(\sigma)\cap {\rm Const}(\delta).$$
If $S$ is a division ring and  $\mathbb{S}_f$ is not associative then $C(\mathbb{S}_f)=\{a\in S\,|\, au=ua \text{ for all } u\in
\mathbb{S}_f\}$.

 If $\mathbb{S}_f$ is not associative then
$S\subset{\rm Nuc}_l(\mathbb{S}_f),\,\,S\subset{\rm Nuc}_m(\mathbb{S}_f)$ and $$\{u\in
R\,|\, {\rm deg}(u)<\mathbf{n} \text{ and }fu\in Rf\}= {\rm Nuc}_r(\mathbb{S}_f).$$
When $S$ is a division ring, these inclusions become equalities.

 In the following, we will exclusively need the nonassociative ring structure on $R/Rf$ and thus consider the Petit  algebras $\mathbb{S}_f$ and $\,_f \mathbb{S}$ (Petit's original paper \cite{P66} indeed only
considered division rings $S$) as nonassociative rings and call them \emph{Petit rings}. We will  focus on the rings
$\mathbb{S}_f$, since  $\,_f\mathbb{S}$ is anti-isomorphic to $\mathbb{S}_f$. We note that the principal right ideals of the rings $\,_fS$ play an equally relevant role in linear code constructions along similar arguments as presented in  \cite{BLMS2024}.

 \begin{proposition}
 Let $f\in R=S[t;\sigma,\delta]$ have an invertible leading coefficient.
 The canonical anti-isomorphism
 $$\psi: S[t;\sigma,\delta]\rightarrow S^{op}[t;\sigma^{-1},-\delta\circ\sigma^{-1}],$$
 $$\psi(\sum_{k=0}^{\ell}d_kt^k)=\sum_{k=0}^{\ell}(\sum_{i=0}^{k}\Delta_{\ell,i}(d_k))t^k$$
 between the noncommutative associative rings $S[t;\sigma,\delta]$ and $ S^{op}[t;\sigma^{-1},-\delta\circ\sigma^{-1}]$ induces an anti-automorphism
 between the nonassociative unital rings $$\mathbb{S}_f=S[t;\sigma, \delta]/ S[t;\sigma,\delta]f$$
  and
 $$\,_{\psi(f)}\mathbb{S}=S^{op}[t;\sigma^{-1},-\delta\circ\sigma^{-1}]/\psi(f) S^{op}[t;\sigma^{-1},-\delta\circ\sigma^{-1}].$$
\end{proposition}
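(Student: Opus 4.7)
The plan is to check three things in order: that $\psi$ is an anti-isomorphism of the skew polynomial rings $R = S[t;\sigma,\delta]$ and $R' = S^{op}[t;\sigma^{-1},-\delta\circ\sigma^{-1}]$, that $\psi$ sends the left ideal $Rf \subset R$ onto the right ideal $\psi(f)R' \subset R'$, and that the induced additive bijection between $R/Rf$ and $R'/\psi(f)R'$ reverses the (nonassociative) Petit multiplication. The first assertion is classical, going back to Jacobson; concretely, one verifies on monomials that $\psi(at^n) = t^n a$ (computed in $R'$), extends by additivity, and checks $\psi(at^n\cdot bt^m) = \psi(bt^m)\psi(at^n)$ by using the defining commutation rule $ta = \sigma(a)t + \delta(a)$ in $R$, together with the fact that in $R'$ the corresponding rule is $ta = \sigma^{-1}(a)t - \delta(\sigma^{-1}(a))$, so that the identities $\sum_{i=0}^n\Delta_{n,i}$ on both sides match. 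The key bookkeeping is that $-\delta\circ\sigma^{-1}$ really is a $\sigma^{-1}$-derivation on $S^{op}$, which is a direct computation from $\delta(ab)=\sigma(a)\delta(b)+\delta(a)b$.

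Once $\psi$ is known to be an anti-isomorphism of rings, preserving degree because the leading-term contribution of $\psi(at^n)$ is $at^n$, the second step is immediate: for any $g \in R$, $\psi(gf) = \psi(f)\psi(g) \in \psi(f)R'$, and conversely any element of $\psi(f)R'$ has the form $\psi(f)\psi(g) = \psi(gf) \in \psi(Rf)$. Since $\psi$ is a bijection, $\psi(Rf) = \psi(f)R'$, and therefore $\psi$ descends to an additive bijection
\[
\bar\psi\colon S_f = R/Rf \longrightarrow R'/\psi(f)R' = {}_{\psi(f)}\!S.
\]

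The main (and really only nontrivial) step is the compatibility of $\bar\psi$ with the Petit multiplication, because the two sides use opposite divisions. Given $g,h \in S_f$ with $\deg g, \deg h < m$, write the right-division expression $gh = qf + r$ in $R$ with $\deg r < m$, so that $g \cdot_{S_f} h = r$. Apply $\psi$ to get
\[
\psi(h)\psi(g) = \psi(f)\psi(q) + \psi(r)
\]
in $R'$, with $\deg \psi(r) = \deg r < m = \deg \psi(f)$. This is exactly a left-division of $\psi(h)\psi(g)$ by $\psi(f)$, so in ${}_{\psi(f)}\!S$ we read off $\psi(h) \cdot \psi(g) = \psi(r) = \bar\psi(g \cdot_{S_f} h)$. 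Thus $\bar\psi$ is an anti-isomorphism of the nonassociative rings $S_f$ and ${}_{\psi(f)}\!S$, as claimed.

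The main obstacle I anticipate is purely notational: keeping straight which multiplication lives in which ring ($R$ vs.\ $R'$, and $S$ vs.\ $S^{op}$) when verifying $\psi(fg)=\psi(g)\psi(f)$, and checking that the degree bound on the remainder is preserved under $\psi$. Both are routine once one commits to the convention that $\psi$ swaps left coefficients with right coefficients, and uses the standard expansion $t^na = \sum_{i=0}^n \Delta_{n,i}(a)t^i$ to bridge the two settings.
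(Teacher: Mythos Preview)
The paper does not supply its own proof of this proposition; it is quoted verbatim from \cite[Proposition 3]{Pum2017} and left without argument. Your three-step outline --- $\psi$ is a degree-preserving anti-isomorphism of the skew polynomial rings, hence carries $Rf$ onto $\psi(f)R'$, hence intertwines right-division-by-$f$ with left-division-by-$\psi(f)$ --- is exactly the natural proof and is correct. The only remark I would make is that your phrase ``the leading-term contribution of $\psi(at^n)$ is $at^n$'' is slightly loose: the leading coefficient is transformed (by $\sigma^n$ in the $\delta=0$ case, more generally by $\Delta_{n,n}$), but what you actually need and use is only that $\psi$ preserves degree and sends invertible leading coefficients to invertible leading coefficients, both of which hold because $\Delta_{n,n}=\sigma^n$ is an automorphism.
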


For a proof see e.g. \cite[Proposition 3]{Pum2017}.

If $\delta=0$
we have
$$\psi(\sum_{k=0}^{\ell}d_kt^k)=\sum_{k=0}^{\ell}\sigma^{-k}(d_k)t^k.$$

 For $0\not=a\in \mathbb{S}_f$, left multiplication $L_a$ is an
$S_0$-module endomorphism. Moreover, right multiplication $R_a$ is a left $S$-module homomorphism for all $0 \neq a \in \mathbb{S}_f$.

\begin{proposition}\label{prop:ideals} \cite{Pum2017} Let $S$ be a unital associative division ring and $f\in R=S[t;\sigma,\delta]$.
\\ $(i)$ All left ideals in $\mathbb{S}_f$ are generated by some monic right divisor $g$ of $f$ in $R$.
\\ $(ii)$ If $f$ is irreducible, then $\mathbb{S}_f$ has no non-trivial left ideals.
\end{proposition}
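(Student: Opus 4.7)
The plan is to reduce both assertions to the classical fact that $R = S[t;\sigma,\delta]$ is a principal left ideal domain when $S$ is a division ring. The core technical step will be to establish that the left ideals of the nonassociative ring $S_f$ coincide with the left $R$-submodules of $R/Rf$, and then read off (i) and (ii) from the factorization theory of $R$.

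For the coincidence claim, I would first observe that the quotient map $\pi \colon R \to S_f$, $g \mapsto g \,{\rm mod}_r f$, is additive and left $S$-linear, and that for any $a \in R$ of degree less than $m$ (which we identify with its class in $S_f$) the product $a \cdot s$ in $S_f$ is by definition $\pi(as)$. In particular, the left $R$-module structure on $S_f = R/Rf$ given by $g \cdot_R s := \pi(gs)$ extends the ring multiplication of $S_f$ on representatives of degree less than $m$. Now suppose $I \subseteq S_f$ is a left ideal of the ring $S_f$. For $g = \sum_i g_i t^i \in R$ and $s \in I$, left $S$-linearity of $\pi$ gives $g \cdot_R s = \sum_i g_i \cdot_R (t^i \cdot_R s)$, and the $R$-module property yields $t^i \cdot_R s = t \cdot_R (t \cdot_R (\cdots (t \cdot_R s)))$. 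Each nested application agrees with left multiplication by $t$ in $S_f$ (since $t$ has degree $1 < m$), so closure of $I$ under left multiplication by $t$ gives $t^i \cdot_R s \in I$; then closure under left multiplication by $g_i \in S \subseteq S_f$ yields $g \cdot_R s \in I$. Thus $I$ is a left $R$-submodule of $R/Rf$. The converse is immediate: any left $R$-submodule is closed under the action of elements of $R$ of degree less than $m$, which is exactly left multiplication in $S_f$.

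With this identification in hand, (i) follows from the fact that $R$ is a left principal ideal domain: every left ideal of $R$ containing $Rf$ has the form $Rg$ for a unique monic $g \in R$, and the inclusion $Rf \subseteq Rg$ is equivalent to $g$ being a monic right divisor of $f$. The corresponding left ideal of $S_f$ is the image $Rg/Rf$, which as a left ideal of $S_f$ is generated by the class of $g$. For (ii), if $f$ is irreducible then its only monic right divisors in $R$ are $1$ and $f$, giving the trivial left ideals $S_f$ and $\{0\}$ respectively. The main obstacle is the coincidence step: because $S_f$ is genuinely nonassociative in general, one cannot simply identify iterated left multiplication by $t$ with the action of $t^i$ computed inside $S_f$; the argument must route through the left $R$-module structure on $R/Rf$ and exploit $S \subseteq {\rm Nuc}_l(S_f)$ together with the left $S$-linearity of $\pi$.
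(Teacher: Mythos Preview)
The paper does not supply its own proof of this proposition; it is simply cited from \cite{Pum2017}. Your argument is correct and is the standard one: you identify the left ideals of the (possibly nonassociative) ring $S_f$ with the left $R$-submodules of $R/Rf$ by observing that the $R$-module action of $t^i$ on $R/Rf$ is realized by iterated left multiplication by $t$ in $S_f$, and then you invoke the classical fact that $R=S[t;\sigma,\delta]$ is a left principal ideal domain over a division ring. One small point of phrasing: the identity $g\cdot_R s=\sum_i g_i\cdot_R(t^i\cdot_R s)$ is really the associativity of the $R$-module action (i.e.\ $(g_it^i)\cdot_R s = g_i\cdot_R(t^i\cdot_R s)$) together with additivity, rather than ``left $S$-linearity of $\pi$'' as you write; and the claim that $Rg/Rf$ is generated as a left ideal of $S_f$ by the class of $g$ uses the degree argument that any $r\in Rg$ with $\deg r<m$ factors as $r=a_0g$ with $\deg a_0<m$, which is worth making explicit. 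With those minor clarifications, the proof is complete and almost certainly coincides with the argument in the cited source.
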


\begin{example} \cite{Pum2017}
Let $\sigma\in {\rm Aut}(S)$ have  order $\mathbf{m}$ and ${\rm Fix}(\sigma)=S_0$. For $f(t)=t^{\mathbf{n}}-d \in R=S[t;\sigma]$, $d\in S$ invertible, the multiplication in the $S_0$-algebra $R/R(t^{\mathbf{n}}-d)$ is defined via
\[
 (at^i)(bt^j) =
  \begin{cases}
   a\sigma^i(b)   t^{i+j} & \text{if } i+j < \mathbf{n}, \\
     a \sigma^i(b)  t^{(i+j)-m}d & \text{if } i+j \geq \mathbf{n},
  \end{cases}
\]
for all $a,b\in S$ and then linearly extended. In the literature,  $S$ is often a  field in which case we know that $R/R(t^{\mathbf{n}}-d)$ is an associative algebra over the fixed field of $\sigma$, if and only if $d\in S_0$ and  the  order $\mathbf{m}$ of $\sigma$ divides $\mathbf{n}$ \cite[(7), (9)]{P66}. For $d=1$ in particular, $f$ is two-sided if and only
$\mathbf{m}\mid \mathbf{n}$.
\end{example}

Homomorphisms between skew polynomial rings where $S$ a division algebra were investigated in \cite{LaLe92}, the main results we need can be found in Section \ref{sec:isoR}.
Indeed, when $S$ is a division algebra then every ring isomorphism $G: S[t;\sigma,\delta]\rightarrow S[t;\sigma,\delta]$ is monomial of degree one (i.e. maps $t$ onto a polynomial of degree one) \cite{LaLe92}. When $S$ is not a division algebra, this need not be the case, see  Theorem \ref{thm:rimm} in Section \ref{sec:isoR}.

 It is straightforward to see that every ring isomorphism $G: S[t;\sigma,\delta]\rightarrow S[t;\sigma,\delta]$ which restricts to some automorphism $\alpha$ on $S$ and  maps $t$ to $\alpha t$ for some invertible $\alpha\in S^\times$,
 canonically induces an isomorphism between the nonassociative rings $R/Rf$  and  $R/RG(f)$, since $G|_{R_\mathbf{n}}:R_\mathbf{n}\rightarrow R_\mathbf{n}$ and   $G(u\circ_f v)=G(uv -q f)=G(u)G(v)-G(q)G(f)=G(u)\circ_{G(f)} G(v)$ for some uniquely determined $q\in R_\mathbf{n}$.
We will use this observation repeatedly, for instance in the proof of the next lemma.  Note that the skew polynomial $G(f(t)) = \sum_{i=0}^{\mathbf{n}} \tau(a_i) (\alpha t)^i \in S[t;\sigma,\delta]$ is usually not monic, but will have an invertible lead coefficient since $\alpha\in S^\times$.

\begin{lemma}\label{le:inducediso}
Let $\alpha\in S^\times$, and $f(t)=t^{\mathbf{n}}-\sum_{i=0}^{\mathbf{n}-1} a_i t^i  \in S[t;\sigma,\delta]$ be a monic polynomial. Define
$$s(t) =  (\alpha t)^\mathbf{n}-\sum_{i=0}^{\mathbf{n}} \tau(a_i) (\alpha t)^i \in S[t;\sigma,\delta].$$
If
$$\alpha\delta(\tau(b))=\tau(\delta(b)) \text{ and } \alpha\sigma(\tau(b))=\tau(\sigma(b)) \alpha$$
for all $b\in S$, then there exists a ring isomorphism 
$G: R/Rf\to R/Rs$.
\end{lemma}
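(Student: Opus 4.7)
The plan is to assemble the lemma from three ingredients already in place in the preceding subsections: the criterion for when $G_{\tau,\alpha}$ is a ring automorphism of $R=S[t;\sigma,\delta]$, an explicit formula for its action on $f$, and the general observation that any ring automorphism of $R$ descends to a nonassociative ring isomorphism of the corresponding Petit algebras.

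First, I would invoke the criterion recalled in the subsection on isomorphisms between skew polynomial rings. The two commutation hypotheses, $\alpha\delta(\tau(b))=\tau(\delta(b))$ and $\alpha\sigma(\tau(b))=\tau(\sigma(b))\alpha$ for all $b\in S$, are precisely the specialization of the conditions on $H_{\tau,\beta,\alpha}$ at $\beta=0$ that ensure that the additive extension
\[
G_{\tau,\alpha}\colon \sum_{i=0}^n b_i t^i \;\longmapsto\; \sum_{i=0}^n \tau(b_i)(\alpha t)^i
\]
is a ring automorphism of $R$. Because $\alpha\in S^\times$ and $\tau$ is a bijection, each nonzero monomial $b_i t^i$ maps to a nonzero monomial, so $G_{\tau,\alpha}$ is automatically Hamming-weight preserving.

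Next, I would compute the image of $f$ directly from the definition of $G_{\tau,\alpha}$ and match it against the polynomial in the statement:
\[
G_{\tau,\alpha}(f(t)) \;=\; (\alpha t)^m - \sum_{i=0}^{m-1} \tau(a_i)(\alpha t)^i \;=\; h(t).
\]
So the left ideal $Rf$ is sent to the left ideal $Rh$. A small bookkeeping point: $h$ is not monic, but its leading coefficient $\alpha\sigma(\alpha)\cdots\sigma^{m-1}(\alpha)$ is a unit in $S$, hence $Rh = R\tilde h$ for the monic normalization $\tilde h$, and the associated quotient is the Petit algebra of Theorem \ref{def:Petit}.

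Finally, I would invoke the observation recorded immediately before the lemma: any ring automorphism $G$ of $R$ restricts to a bijection on the set $R_m$ of coset representatives and satisfies $G(g\circ_f h)=G(g)\circ_{G(f)}G(h)$, so it canonically descends to a nonassociative ring isomorphism $R/Rf \to R/RG(f)$. Specializing to $G=G_{\tau,\alpha}$ and using the computation above, one obtains the desired equivalence $R/Rf \cong R/Rh$, which is Hamming-weight preserving by the remark in the previous paragraph. The only potentially nontrivial step is seeing that the two commutation hypotheses really do suffice for $G_{\tau,\alpha}$ to be multiplicative, but this reduces to checking the identity $G_{\tau,\alpha}(tb) = G_{\tau,\alpha}(\sigma(b)t+\delta(b))$ on generators, which unpacks exactly to the two stated relations. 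Thus the proof is essentially an assembly of the pieces laid out in Section \ref{sec:prelim} rather than a new calculation.
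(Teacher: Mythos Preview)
Your proposal is correct and mirrors exactly the paper's own approach: the lemma is stated without a separate proof, being an immediate consequence of the observation recorded just before it (that any ring automorphism $G$ of $R$ descends to an isomorphism $R/Rf\to R/RG(f)$), together with the criterion for $G_{\tau,\alpha}$ to be a ring automorphism recalled in the subsection on isomorphisms between skew polynomial rings. Your assembly of these pieces, including the computation $G_{\tau,\alpha}(f)=h$ and the remark on the invertible leading coefficient, is precisely what the paper intends.
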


\begin{proof}
 Suppose that $\alpha\in S^\times$ and that we have that
$\alpha\delta(\tau(b))=\tau(\delta(b))$  and  $ \alpha\sigma(\tau(b))=\tau(\sigma(b))\alpha$
for all $b\in S$.
 It is well known that under these assumptions, the map $G: S[t;\sigma,\delta]\rightarrow S[t;\sigma,\delta]$, $$G(\sum_{i=0}^{\ell}d_it^i)=\sum_{i=0}^{\ell}\tau(d_i) (\alpha t)^i$$
  is a ring isomorphism (e.g., see Section \ref{sec:isoR}, or \cite[p.4]{LaLe92} or \cite{LaLeLe89} which to our knowledge are the earliest papers where this is explained; the argument presented there also works when the ring $S$ is not a division algebra). Since  $G(t)=\alpha t$ and $\alpha\in S^\times$,
 $G|_{R_\mathbf{n}}:R_\mathbf{n}\rightarrow R_\mathbf{n}$ is a well-defined bijective additive map.
 Moreover, since $f$ is monic we know that for all $u,v\in R_\mathbf{n}$ there exist  unique $q,r\in R_\mathbf{n}$, $\deg (r)< \mathbf{n}$, such that $uv= qf+r$. Hence we  can write
  $u\circ_f v=uv -q f$ for some unique $q\in R_\mathbf{n}$ and thus obtain that
 $$G(u\circ_f v)=G(uv -q f)=G(u)G(v)-G(q)G(f)=G(u)\circ_{G(f)} G(v)$$
  with $G(q)\in R_\mathbf{n}$. Thus $G$ is also multiplicative. Therefore the map $G$
 canonically induces an isomorphism between the nonassociative rings $R/Rf$  and  $R/RG(f)$. When restricted to $S$, this isomorphism is $\tau$, and it maps $t$ to $\alpha t$. The assertion follows by putting $s(t)=G(f(t))$.
\end{proof}

 \subsection{Skew $(f,\sigma,\delta)$-polycyclic codes}

 A \emph{left (resp., right) linear code  of length $\mathbf{n}$ over $S$} is a left (resp., right)  free submodule of the $S$-module $S^\mathbf{n}$.
From now on, let $$f(t)=t^{\mathbf{n}}-\sum_{i=0}^{\mathbf{n}-1}a_it^i\in S[t;\sigma,\delta]$$ be a reducible monic polynomial of degree $\mathbf{n}>1$.

We will use the bijective map $\Phi:S^{\mathbf{n}}\rightarrow R/Rf$,
$$(c_0,c_1,\dots,c_{\mathbf{n}-1})\mapsto \sum_{i=0}^{\mathbf{n}-1}c_it^i,$$
to ``move'' between vectors  of length $\mathbf{n}$  and skew polynomials of rank  ${\mathbf{n}}-1$:
 For a left linear code $C$  of length $\mathbf{n}$ over $S$ we denote by $C(t)$ the set of skew-polynomials
 $c(t)=\sum_{i=0}^{\mathbf{n}-1}c_it^i$ associated to the codewords $(c_0,\dots,c_{\mathbf{n}-1})\in C$.

The currently prevailing way to define  (skew) polycyclic codes in the literature is via some shift operation generalising the shift used to define classical cyclic codes, e.g. cf. \cite[Definition 2.1]{BagPa2025},  \cite[Definition 1]{BLMS2024} \cite[Definition 2.18.]{OaHA25}.

 \begin{definition}\label{def:main}
  $(i)$
  A left linear code $C\subset S^{\mathbf{n}}$ is called a   \emph{right skew  $(f,\sigma,\delta)$-polycyclic code}, if for each codeword  $(c_0,c_1,\ldots, c_{\mathbf{n}-1})$ of $C$,  also
$$
\left( 0, \sigma(c_0),\sigma(c_1),\ldots, \sigma(c_{\mathbf{n}-2})\right) + \sigma(c_{\mathbf{n}-1}) (a_0,a_1,\ldots, a_{\mathbf{n}-1})+(\delta(c_0),\delta(c_1),\ldots, \delta(c_{\mathbf{n}-1}))  \in C.
$$
$(ii)$ A left linear code $C\subset S^{\mathbf{n}}$ is called a \emph{left skew  $(f,\sigma,\delta)$-polycyclic code},  if for each codeword  $(c_0,c_1,\ldots, c_{\mathbf{n}-1})$ of $C$,  also
$$
\left(\sigma(c_1),\sigma(c_2),\ldots, \sigma(c_{\mathbf{n}-1}),0\right) + \sigma(c_{0}) (a_0,a_1,\ldots, a_{\mathbf{n}-1})+ (\delta(c_0),\delta(c_1),\ldots, \delta(c_{\mathbf{n}-1}))   \in C.
$$
\end{definition}

We will only look at right skew polycyclic codes unless explicitly stated otherwise, and we will drop the ``right'' in their name from now on.
We can rewrite $(i)$ as
$$
\left( 0, \sigma(c_0),\sigma(c_1),\ldots, \sigma(c_{\mathbf{n}-2})\right) +  (\sigma(c_{\mathbf{n}-1}) a_0,\sigma(c_{\mathbf{n}-1})a_1,\ldots, \sigma(c_{\mathbf{n}-1})a_{\mathbf{n}-1})+\delta(c)$$
$$
  = ( \sigma(c_{\mathbf{n}-1}) a_0, \sigma(c_0)+\sigma(c_{\mathbf{n}-1})a_1,\sigma(c_1)+\sigma(c_{\mathbf{n}-2})a_2,\ldots, \sigma(c_{\mathbf{n}-2})+\sigma(c_{\mathbf{n}-1})a_{\mathbf{n}-1})+\delta(c) \in C.
$$

 \begin{definition} A  skew  $(f,\sigma,\delta)$-polycyclic code $C\subset S^{\mathbf{n}}$ over $S$ is called a
 \emph{skew  $(\sigma, d)$-constacyclic code}, if $\delta=0$ and $f(t)=t^{\mathbf{n}}-d$ for a non-zero $d\in S$, that is
   $$(c_0,\dots,c_{\mathbf{n}-1})\in  C\Rightarrow (\sigma(c_{\mathbf{n}-1})d,\sigma(c_0),\dots,\sigma(c_{\mathbf{n}-2}))\in  C.$$
 \end{definition}

 However, we will see that there is also an algebraic way to define an $(f,\sigma,\delta)$-skew polycylic code $C$  of length $\mathbf{n}$ over $S$
  as a linear code $C$  of length $\mathbf{n}$ over $S$, whose the corresponding set of skew polynomials $C(t)$ is a principal left ideal  in the nonassociative Petit ring $R/Rf$.

 We will then use the map $\Phi:S^{\mathbf{n}}\rightarrow R/Rf$
to ``move'' between skew $(f,\sigma,\delta)$-skew polycylic code $C$  of length $\mathbf{n}$ and principal left ideals in the nonassiciative Petit ring $R/Rf$ we call the \emph{ambient ring} of the codes.

More precisely, we will show that our skew $(f,\sigma,\delta)$-skew polycylic codes
$C$  of length $\mathbf{n}$ consist of all vectors $(c_0,\dots,c_{\mathbf{n}-1})\in S^{\mathbf{n}}$ obtained from the elements
 $c(t)=\sum_{i=0}^{\mathbf{n}-1}c_it^i$ in a left principal ideal  of the nonassociative unital Petit ring $\mathbb{S}_f$:
A skew $(f,\sigma,\delta)$-code $C\subset S^{\mathbf{n}}$ is a subset of $S^{\mathbf{n}}$ consisting of the vectors
$(c_0,\dots,c_{\mathbf{n}-1})$ obtained from elements $c(t)=\sum_{i=0}^{\mathbf{n}-1}c_it^i$
in a left principal ideal $g\mathbb{S}_f=S[t;\sigma,\delta]g/S[t;\sigma,\delta]f$ of $\mathbb{S}_f$, with $g$ a right divisor of $f$ which is assumed to be monic 
and is the polynomial of smallest degree that generates the left ideal.  This means that the set of vectors corresponding to the elements $ \{ g,t g,\ldots,t^{k-1}g\} $ in $\mathbb{S}_f$ forms a basis of $C $ and the \emph{dimension} of $  C $ is $ k=n-\deg(g)$. This also means that  there is a one-to-one correspondence between principal left ideals of $\mathbb{S}_f$ that are generated by monic polynomials $g\in R_\mathbf{n}$ and  skew $(f,\sigma,\delta)$-polycyclic codes  of length $\mathbf{n}$ over $S$   \cite{Pum2017}.
Note that in the setup we study here, $g$ is uniquely determined only up to similarity of skew polynomials.

There is an obvious correlation between our point of view using nonassociative unital rings $R/Rf$  as the algebraic objects that correspond to
skew $(f,\sigma,\delta)$-skew polycylic codes, and the standard approach which uses cyclic submodules of the $R$-module $R/Rf$. The left multiplication in the nonassociative ring $R/Rf$ canonically induces the right module structure of $R/Rf$. Thus the principal left ideals $Rg/Rf$ are in one-one correspondence with the cyclic $R$-submodules $Rg/Rf$ of the $R$-module $R/Rf$.

\begin{lemma}
Let $L_a:\mathbb{S}_f\to \mathbb{S}_f$, $L_a(u)=a\circ u$, be the left multiplication by $a\in \mathbb{S}_f$ in the nonassociative unital ring $\mathbb{S}_f=R/Rf$.
\\ (i) For all $a\in R$, $u\in R/Rf$,  the left $R$-module structure of $R/Rf$ is given by
$$a\cdot u= L_a(u)=a\circ u.$$ 
Therefore, the left principal ideal $g\mathbb{S}_f$ generated by a monic right divisor $g$ of $f$ in the nonassociative ring $\mathbb{S}_f$ is the same as the cyclic $R$-submodule $Rg/Rf$.
\\ (ii) Every ring isomorphism $H:R/Rf \to R/Rh$ between two nonassociative unital rings $R/Rf$ and $R/Rh$ canonically maps a principal left ideal $Rg/Rf$ in the nonassociative ring $\mathbb{S}_f=R/Rf$ to a principal left ideal $R \,H(g)/Rh$ in the nonassociative Petit ring $\mathbb{S}_h=R/Rh$.
In particular, this means that $H(g)$ must be a right divisor of a scalar multiple of $h$.
 \\ (iii) Every ring isomorphism $H:R/Rf \to R/Rh$ between  two nonassociative unital rings $R/Rf$ and $R/Rh$ such that $H|_S=\tau\in {\rm Aut}(S)$, canonically induces an $\tau$-semilinear $S$-module isomorphism $H:R/Rf \to R/Rh$ between the two free $S$-modules $R/Rf$ and $R/Rh$.
\end{lemma}

\begin{proof}
Claims $(ii)$ and $(iii)$ directly follow from $(i)$ which is straightforward to see, if we view every $a\in R$ as an element representing some element in $R/Rf$.
\\ $(ii)$ Let $H:R/Rf \to R/Rh$ be a ring isomorphism and $g$ a monic principal right divisor of $f$. Take the left ideal $g\mathbb{S}_f$. Then $H(u\circ_f g)=H(u)\circ_h H(g)$ for all $u\in \mathbb{S}_f=R/Rf$, therefore $v\circ_h H(g)$ for all $v\in \mathbb{S}_h=R/Rh$, so that $H(g)\mathbb{S}_h$ is a principal left ideal in $\mathbb{S}_h$.
\\ $(iii)$ We have $H(u\circ_f v)=H(u)\circ_h H(v)$ for all $u,v\in \mathbb{S}_f=R/Rf$, therefore also $H(a\circ_f v)=\tau(a)\circ_h H(v)$
 for all $a\in S$, i.e. $H(a\cdot v)=\tau(a)\cdot H(v)$ for the scalar multiplications of the modules.
 \end{proof}

This means we can directly translate  results obtained in \cite[Theorem 3.2]{BouLe2013} to the setting of nonassociative rings.

\begin{theorem}\label{thm:newTheorem3.2} \cite{Pum2017}
Let $g=\sum_{i=0}^{r}g_it^i$ be a monic polynomial which is a right divisor of $f$.
\\ $(i)$ The skew $(f,\sigma,\delta)$-polycyclic code $C\subset S^{\mathbf{n}}$ corresponding to the principal ideal
$g \mathbb{S}_f$ is a free left $S$-module of dimension $\mathbf{n}-{\rm deg}(g)$.
\\ $(ii)$ If $(c_0,\dots,c_{\mathbf{n}-1})\in C$ then $L_t(c_0,\dots,c_{\mathbf{n}-1})\in C$, where $L_t$ denotes the left multiplication with $t$ in the nonassociative ring $R/Rf$.
\\ (iii) The matrix generating $C$ represents the right multiplication $R_g$ with $g$ in  the nonassociative ring  $R/Rf$,
calculated with respect to the basis $1,t,\dots,t^{\mathbf{n}-1}$, identifying  elements $c(t)=\sum_{i=0}^{\mathbf{n}-1}c_it^i$
with  vectors $(c_0,\dots,c_{\mathbf{n}-1})$ via $\Phi$.
\end{theorem}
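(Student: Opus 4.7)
The plan is to handle (i), (ii), and (iii) in order, exploiting the fact that since $g$ is a monic right divisor of $f$, we can write $f = qg$ for a unique monic $q \in R$ of degree $m-r$, and that the left principal ideal $gS_f$ of $S_f$ is, by definition, the image of $Rg$ in $R/Rf$, i.e.\ $Rg/Rf$, viewed as a left $R$-submodule of $R/Rf$.

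For (i), I would exhibit $\{g, tg, \ldots, t^{m-r-1}g\}$ as a left $S$-basis of $Rg/Rf$. For spanning, given any $hg \in Rg$, right division by $f$ yields $hg = sf + u$ with $\deg u < m$; substituting $f = qg$ rewrites this as $u = (h - sq)g \in Rg$, so $u = h''g$ for some $h'' \in R$ with $\deg h'' < m-r$, and expanding $h'' = \sum_{i=0}^{m-r-1} a_i t^i$ gives the claimed left $S$-combination. Linear independence follows by inspecting leading terms: since $g$ is monic of degree $r$, the coefficient of $t^{m-1}$ in $\sum_i a_i t^i g$ equals $a_{m-r-1}$, forcing $a_{m-r-1}=0$, and an iterated top-down argument kills the remaining coefficients. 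Hence $C \cong Rg/Rf$ is a free left $S$-module of rank $m - \deg g$.

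For (ii), $Rg/Rf$ is a left $R$-submodule of $R/Rf$, so it is stable under $L_t$. If $c(t) = \sum_{i=0}^{m-1} c_i t^i$ corresponds to $(c_0,\dots,c_{m-1}) \in C$, then the relation $ta = \sigma(a)t + \delta(a)$ combined with $t^m \equiv \sum_i a_i t^i \pmod{f}$ shows
$$
t \cdot c(t) \;\equiv\; \sum_{i=0}^{m-2} \sigma(c_i) t^{i+1} \;+\; \sigma(c_{m-1}) \sum_{i=0}^{m-1} a_i t^i \;+\; \sum_{i=0}^{m-1} \delta(c_i) t^i \pmod{f},
$$
which under $\Phi^{-1}$ is exactly the shift prescribed in the definition of a right skew $(f,\sigma,\delta)$-polycyclic code. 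For (iii), the right multiplication $R_g : S_f \to S_f$, $h \mapsto hg \,\mathrm{mod}_r\, f$, is left $S$-linear by $S$-bilinearity of the product on $S_f$, and its image is $Rg/Rf$. By (i), the images $R_g(t^i) = t^i g$ for $0 \le i \le m-r-1$ form a left $S$-basis of this image, so under $\Phi$ their coordinate vectors in the basis $1,t,\dots,t^{m-1}$ of $S_f$ form a generator matrix of $C$, which is precisely the submatrix made of the first $m-r$ rows of the matrix representing $R_g$.

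The only real obstacle is the bookkeeping in (i): one needs to know that reducing an element of $Rg$ modulo $f$ again lies in $Rg$, which is exactly where the assumption that $g$ is a \emph{right} divisor of $f$ is essential (via $f = qg$). Nothing in the argument touches the nonassociativity of $S_f$, since everything takes place at the level of the left $R$-module structure on $R/Rf$; the nonassociative multiplication on $S_f$ enters only through the observation that $R_g$ is left $S$-linear, which uses only $S$-bilinearity.
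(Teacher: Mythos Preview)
Your proposal is correct and follows the same line as the paper. The theorem as stated is merely cited from \cite{Pum2017} without proof; the paper only proves the $\delta=0$ special case (the subsequent theorem with the same label), and there the argument is exactly yours in outline: exhibit $\{g,tg,\dots,t^{m-r-1}g\}$ as a left $S$-basis of $Rg/Rf$, note closure under $L_t$ since $Rg/Rf$ is a left $R$-submodule, and read off the generator matrix from the rows $t^ig$. Your write-up is in fact more complete than the paper's, which asserts linear independence ``is easy to see'' and does not address spanning; your use of $f=qg$ to show that reduction $\mathrm{mod}_r f$ stays inside $Rg$ is precisely the point the paper leaves implicit.
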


Note that $(i)$ is  \cite[Theorem 3.2. (a)]{BouLe2013}, $(ii)$ is  \cite[Theorem 3.2. (b)]{BouLe2013} and $(iii)$ is  \cite[Theorem 3.2. (c)]{BouLe2013}, rephrased to our setup. Also note that because we have a left Euclidean division algorithm for monic skew polynomials, every left principal ideal $I$ in $R/Rf$ will have a skew polynomial of smallest degree in $I$ that generates $I$.

\begin{theorem}\label{thm:newTheorem3.6}
Let $g=\sum_{i=0}^{r}g_it^i$ be a monic polynomial which is a right divisor of $f$, such that $f=gu=u'g$ for two monic
polynomials $u,u'\in S[t;\sigma,\delta]$. Let $C$ be the skew $(f,\sigma,\delta)$-polycyclic code corresponding to the principal left ideal in $\mathbb{S}_f$ generated by $g$ and let
$c(t)=\sum_{i=0}^{\mathbf{n}-1}c_it^i\in S[t;\sigma,\delta]$.
Then the following are equivalent:
\\ $(i)$ $(c_0,\dots,c_{\mathbf{n}-1})\in C$.
\\ $(ii)$ $c(t)\circ_f u(t)=0$ in $\mathbb{S}_f$.
\\ (iii) 
$R_u(c)=u(t) \circ_f c(t)=0$ in $\mathbb{S}_f$.
\end{theorem}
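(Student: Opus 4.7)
The plan is to convert the code-theoretic condition (i) into a purely algebraic one inside the skew polynomial ring $R=S[t;\sigma,\delta]$, and then use the two factorisations $f=gh=h'g$ to pass between the condition $c\in Rg$ in $R$ and the divisibility statement $ch\in Rf$ underlying (ii) and (iii). The equivalence (ii) $\Leftrightarrow$ (iii) will follow immediately from the definition of left and right multiplication in the Petit algebra $S_f$.

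First I would rewrite (i) algebraically. Since $f=h'g$, one has $Rf=Rh'g\subseteq Rg$, so the cosets of $Rg/Rf$ are represented uniquely by elements of $Rg$ of degree less than $m$; under the identification $\Phi$, the condition $(c_0,\dots,c_{m-1})\in C$ is therefore equivalent to $c(t)\in Rg$ when $c=\sum c_i t^i$ is viewed as an element of $R$. For (i) $\Rightarrow$ (ii), if $c=ag$ in $R$ then using $f=gh$ gives $ch=a(gh)=af\in Rf$, so $c\cdot h=0$ in $S_f$. For (ii) $\Rightarrow$ (i), suppose $c\cdot h=0$ in $S_f$, equivalently $ch\in Rf$. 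Writing $ch=bf=b(gh)=(bg)h$ in $R$, we obtain $(c-bg)h=0$. Because $h$ is monic, the degree formula recalled at the start of Section~\ref{subsec:structure} yields $\deg(ph)=\deg p+\deg h$ for every $p\in R$, so $(c-bg)h=0$ forces $c=bg\in Rg$, and the first paragraph returns us to (i). Finally, (ii) $\Leftrightarrow$ (iii) is immediate because by definition of the multiplications in $S_f$ one has $L_c(h)=c\cdot h=R_h(c)$, so the vanishing statements in (iii) merely restate (ii).

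The main, and really only, obstacle is the right-cancellation step $(c-bg)h=0\Rightarrow c=bg$. Because $S$ is not assumed to be a division ring, one cannot invoke absence of zero divisors in $R$ directly. The correct justification is the degree formula for products by polynomials with invertible leading coefficient, which applies here since $h$ is monic, making right multiplication by $h$ injective on $R$. Beyond this observation, every step is a direct manipulation of the factorisations $f=gh$ and $f=h'g$ together with the fact that elements of $Rg/Rf$ have unique representatives of degree less than $m$.
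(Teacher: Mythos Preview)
Your argument is correct. The paper actually states this theorem without proof (it is treated as a routine consequence of the ideal-theoretic description of skew polycyclic codes recalled immediately before it, in the spirit of the results cited from \cite{Pum2017}), so there is no proof to compare against. The one step that genuinely needs care over a general ring $S$ is precisely the one you isolate: from $(c-bg)h=0$ in $R$ you may not cancel $h$ by appealing to absence of zero divisors, but the degree formula $\deg(ph)=\deg p+\deg h$ for $h$ monic (recalled at the start of Section~\ref{subsec:structure}) does force $c-bg=0$, and this is exactly how the argument must go. You are also right that (ii) and (iii) are literally the same statement once one unwinds the definitions $L_c(h)=c\cdot h=R_h(c)$ from Section~\ref{subsec:2}; the displayed ``$R_h(c)=hc$'' in the theorem is a typographical slip.
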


This  is  \cite[Theorem 3.6.]{BouLe2013}, rephrased to our setup.

This shows that sometimes $u(t)$ is a parity check polynomial for the skew $(f,\sigma,\delta)$-polycyclic code $C$
 also when $f$ is not two-sided. Note that parity check and generator polynomials will only be unique up to similarity in our setting. Also, when we only have
  $u(t)\circ_f g(t)=f(t)$, $u(t)$ monic, and $C$ is the code generated by $g(t)$ then if $c(t) \circ_f u(t)=0$ in $\mathbb{S}_f$, $\Phi^{-1}(c(t))$ is a codeword of $C$.

A matrix $M$ is called a \emph{control matrix} of the skew $(f,\sigma,\delta)$-polycyclic code $C$, if $M=lann(C)$ \cite[Definition 3.7.]{BouLe2013}.

\begin{corollary} \label{cor:newCor3.6}
Let $g=\sum_{i=0}^{r}g_it^i$ be a monic polynomial which is a right divisor of $f$, such that $f=gu=u'g$ for two monic
polynomials $u,u'\in \mathbb{S}_f$. Let $C$ be the skew $(f,\sigma,\delta)$-polycyclic code corresponding to $g$.
Then the matrix representing right multiplication $R_u$ with $u$ in $\mathbb{S}_f$ with respect to
 the basis $1,t,\dots,t^{\mathbf{n}-1}$
is a control matrix of the cyclic $(f,\sigma,\delta)$-code corresponding to $g$.
\end{corollary}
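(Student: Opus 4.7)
The plan is to translate Theorem \ref{thm:newTheorem3.6} directly into the language of matrices, so that the corollary becomes essentially a matter of unpacking definitions. First I would recall Theorem \ref{thm:newTheorem3.6}(iii): a vector $(c_0,\ldots,c_{m-1}) \in S^m$ lies in $C$ if and only if the associated element $c = \sum_{i=0}^{m-1} c_i t^i \in S_f$ satisfies $R_h(c) = ch = 0$ in $S_f$. In other words, under the identification $\Phi$, the code $C$ is precisely the kernel of the map $R_h \colon S_f \to S_f$.

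Next I would use the fact recalled in Section \ref{subsec:structure} that $R_h$ is a left $S$-module endomorphism of $S_f$ (right multiplication by any fixed element is $S$-linear on the left, even when $S_f$ is nonassociative). This ensures that $R_h$ has a well-defined matrix representation $M := [R_h]$ over $S$ with respect to the basis $1, t, \ldots, t^{m-1}$. Writing out the action of this matrix on coefficient vectors via $\Phi$, the condition $R_h(c) = 0$ becomes exactly $M \cdot (c_0,\ldots,c_{m-1})^{T} = 0$.

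Combining these two observations yields $(c_0,\ldots,c_{m-1}) \in C$ if and only if $M \cdot (c_0,\ldots,c_{m-1})^{T} = 0$; that is, $C = \ker M$, which is by definition the property required of a control (parity check) matrix. As a consistency check one may note that, by Theorem \ref{thm:newTheorem3.2}(i), $\dim C = m - r$, which forces $\mathrm{rank}(M) = r = \deg g = \deg h'$, exactly the expected rank of a parity check matrix of a code of dimension $m-r$.

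The only subtle point — and the only place where the nonassociativity of $S_f$ could potentially cause trouble — is making sure that right multiplication really is captured by a single $m \times m$ matrix acting $S$-linearly on coordinate vectors, so that the matrix-vector equation $M \cdot (c_0,\ldots,c_{m-1})^T = 0$ is meaningful and faithfully encodes $R_h(c) = 0$. This is precisely guaranteed by the remark just before Theorem \ref{thm:newTheorem3.6} that $R_a$ is a left $S$-module homomorphism for every $a \in S_f$, so once Theorem \ref{thm:newTheorem3.6} is in hand the corollary follows with no further computation.
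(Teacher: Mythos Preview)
Your argument is correct and is exactly the intended one: the paper states the corollary without proof because it is meant to be an immediate consequence of Theorem~\ref{thm:newTheorem3.6}, and you have simply made that deduction explicit by combining the equivalence $(c_0,\ldots,c_{m-1})\in C \iff R_h(c)=0$ with the left $S$-linearity of $R_h$ noted just before the theorem. One small slip in your consistency check: since $f=h'g$ with $\deg f=m$ and $\deg g=r$, one has $\deg h'=m-r$, not $r$; the rank of $M$ is indeed $r$, but this is $\deg g$, not $\deg h'$.
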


This  is  \cite[Corollary 3.8.]{BouLe2013}.

In particular, we re-obtain a classical result.

\begin{theorem}\label{thm:newTheorem3.2}
Let $f(t)=t^{\mathbf{n}}-\sum_{i=0}^{\mathbf{n}-1}a_it^i\in S[t;\sigma]$ and $g=\sum_{i=0}^{r}g_it^i\in R/Rf$  be a polynomial which is a monic right divisor of $f$.
Let $C\subset S^{\mathbf{n}}$ be the skew $(f,\sigma)$-polycyclic code consisting of all $(c_0,\dots,c_{\mathbf{n}-1})$ such that  $c(t)=\sum_{i=0}^{\mathbf{n}-1}c_it^i \in Rg/Rf$.
\\ $(i)$  If $(c_0,\dots,c_{\mathbf{n}-1})\in C$ then $L_{t^i}(c_0,\dots,c_{\mathbf{n}-1})\in C$ for all $1\leq i \leq n-1$.
\\ $(ii)$ Every left ideal generated by $g$ yields a skew $(f,\sigma)$-polycyclic code $C\subset S^{\mathbf{n}}$
\\ (iii)
The skew $(f, \sigma)$-polycyclic code $C\subset S^{\mathbf{n}}$ corresponding to the principal left ideal in $R/Rf$ generated by $g$
 is a free left $S$-module of dimension $\mathbf{n}-{\rm deg}(g)$.
\\ (iv) The matrix
\begin{equation}\label{eq10}
 G= \left(
 \begin{array}{cccccccc}
 g_{_0} &g_{_1} &\cdots&g_{_{\mathbf{n}-k}}& 0 &\cdots &\cdots & 0\\
 0 & \sigma(g_{_0}) & \sigma(g_{_1}) &\cdots & \sigma(g_{_{\mathbf{n}-k}}) & 0 &\cdots & 0\\
 \vdots &\ddots &\ddots &\ddots & &\ddots & &\vdots\\
 \vdots & &\ddots &\ddots &\ddots & &\ddots &\vdots\\
 0 &\ldots & &0 & \sigma^{k-1}(g_{_0}) & \sigma^{k-1}(g_{_1}) &\ldots & \sigma^{k-1}(g_{_{\mathbf{n}-k}})\\
 \end{array}
 \right)
 \end{equation}
  generating $C$ represents the right multiplication with $g$ in $R/Rf$,
calculated with respect to the basis $1,t,\dots,t^{\mathbf{n}-1}$, identifying $g=\sum_{i=0}^{r}g_it^i$
with the vector $(g_0,\dots,g_{r},0,\dots,0)$.
\end{theorem}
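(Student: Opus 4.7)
The plan is to recognize this statement as a $\delta=0$ specialization of the earlier Theorem \ref{thm:newTheorem3.2}, with the genuinely new content being an explicit identification of the polycyclic shift with left multiplication by $t$ in $S_f$, plus the explicit matrix of right multiplication $R_g$. I would therefore split the work into a closure step (covering (i) and (ii)), a dimension step (the first ``(iii)''), and a matrix step (the second ``(iii)''), each short.

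For the closure step, $C=\Phi^{-1}(Rg/Rf)$ and $Rg/Rf$ is a left ideal in $S_f$, so it is stable under every $L_{t^i}$. The point to verify is that, after transport via $\Phi$, $L_t$ on $S_f$ is exactly the polycyclic shift from the definition. I would compute $tc(t)$ for $c(t)=\sum_{i=0}^{m-1}c_it^i$: push $t$ across each coefficient using $tc_i=\sigma(c_i)t$ (since $\delta=0$), collect terms, and then reduce the stray $\sigma(c_{m-1})t^m$ via $t^m\equiv\sum_i a_i t^i \pmod{f}$. The coordinate vector produced is $(\sigma(c_{m-1})a_0,\,\sigma(c_0)+\sigma(c_{m-1})a_1,\,\ldots,\,\sigma(c_{m-2})+\sigma(c_{m-1})a_{m-1})$, matching the defining shift with $\delta=0$. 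Iterating $i$ times gives (i), and since the argument works ideal-by-ideal, (ii) follows.

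For the dimension step, I would use that $g$ is a monic right divisor of $f$ and write $f=h'g$ with $h'$ monic of degree $k:=m-r$. For any $q\in R$, right-divide $q=sh'+q'$ with $\deg q'<k$; then $qg\equiv q'g\pmod{Rf}$, so $\{g,tg,\ldots,t^{k-1}g\}$ left-$S$-spans $Rg/Rf$. Linear independence is immediate: any nontrivial left-$S$-relation among these vectors has total degree $\le m-1$, hence already holds in $R$, and then monicity of $g$ forces all coefficients to vanish by comparing leading terms in $R$.

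For the matrix step, the decisive calculation is $R_g(t^i)=t^i g=\sum_{j=0}^{r}\sigma^{i}(g_j)\,t^{i+j}$ for $i=0,\ldots,k-1$; because $i+r\le m-1$, this is an identity already in $R$, no reduction modulo $f$ is needed, and the coordinate vector is exactly the $i$-th row of the displayed matrix $G$. The one thing to keep an eye on throughout is that the nonassociativity of $S_f$ never actually enters any of these manipulations: every product we form sits in degree $<m$, so it is an identity in the associative ring $R$. That observation is precisely why the classical cyclic-code proof transfers verbatim to the present setting without any hypothesis that $Rf$ be two-sided, and it is the main conceptual obstacle one has to clear before all the technicalities become routine.
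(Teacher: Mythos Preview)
Your proposal is correct and follows essentially the same route as the paper: identify $L_t$ with the polycyclic shift by direct computation, then exhibit $\{g,tg,\ldots,t^{k-1}g\}$ as a basis of $Rg/Rf$ and read off the rows $t^ig=\sum_j\sigma^i(g_j)t^{i+j}$ to obtain the matrix. Your spanning argument via right division by $h'$ (where $f=h'g$) and your explicit remark that every product stays in degree $<m$ (so nonassociativity never intervenes) are more detailed than what the paper writes---the paper simply asserts that $g,tg,\ldots,t^{k-1}g$ are linearly independent and ``thus form a basis''---but the underlying strategy is identical.
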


 Part $(i)$ was observed for semifields for the first time in \cite{LaSh}; while $(ii)-(iv)$ generalize \cite[Theorem 2.20, 2.21]{OaHA25} and \cite[Section 5]{BagPa2025}, where $f$ is assumed to be two-sided, to the nonassociative setting, by simply reformulating \cite[Theorem 3.2.]{BouLe2013} to the setting of nonassociative rings.
We give some direct proofs for the convenience of the reader, to show how the nonassociative multiplication comes into play.

\begin{proof}
Let $Rg/Rf  $ be the left ideal in $R/Rf$ generated by $g=\sum_{i=0}^r g_it^i$.
\\ $(i)$ If $c=(c_0,c_1,\dots,c_{\mathbf{n}-1}) \in C$, then by definition $\sum_{i=0}^{\mathbf{n}-1}c_it^i\in Rg/Rf $, and so also $t \circ_f \big(\sum_{i=0}^{\mathbf{n}-1}c_it^i\big)\in Rg/Rf$. We compute
$$t \circ_f \big(\sum_{i=0}^{\mathbf{n}-1}c_it^i\big)= \sum_{i=0}^{\mathbf{n}-1}t c_it^i  \mod_r f=\sigma(c_0)t+\sigma(c_1)t^2+\dots+ \sigma(c_{\mathbf{n}-2})t^{\mathbf{n}-1}+ \sigma(c_{\mathbf{n}-1})t^{\mathbf{n}} \mod_r f$$
$$ =  \sigma(c_{\mathbf{n}-1}) a_0 + (\sigma(c_0)+\sigma(c_{\mathbf{n}-1})a_1^t +(\sigma(c_1)+\sigma(c_{\mathbf{n}-2})a_2)t^2 +\ldots + (\sigma(c_{\mathbf{n}-2})+\sigma(c_{\mathbf{n}-1})a_{\mathbf{n}-1})t^{\mathbf{n}-1}.$$
 Hence
also
$$c=( \sigma(c_{\mathbf{n}-1}) a_0, \sigma(c_0)+\sigma(c_{\mathbf{n}-1})a_1,\sigma(c_1)+\sigma(c_{\mathbf{n}-2})a_2,\ldots, \sigma(c_{\mathbf{n}-2})+\sigma(c_{\mathbf{n}-1})a_{\mathbf{n}-1})) \in C.$$
 $(ii)$ is straightforward.
\\ $(iii, iv)$
Since $\mathbb{S}_f g$ is a left ideal, $t^i   \circ_f g\in \mathbb{S}_f g$ for all $i\in \{1,\dots,\mathbf{n}-1\}.$
It is easy to see that $g,t   g,t^2   g,\dots, t^{\mathbf{n}-r-1}   g$ are linearly independent over $S$, thus form a basis of $S^{\mathbf{n}-r}$. The rows of the matrix generating $C$ are thus given by the vectors corresponding to $g,tg,t^2g,\dots, t^{\mathbf{n}-r-1}g$. We obtain
the generator matrix $ G $ of $ C $ as
 \begin{equation}\label{eq10}
 G= \left(
 \begin{array}{cccccccc}
 g_{_0} &g_{_1} &\cdots&g_{_{m-k}}& 0 &\cdots &\cdots & 0\\
 0 & \sigma_1(g_{_0}) & \sigma_1(g_{_1}) &\cdots & \sigma_{1}(g_{_{\mathbf{n}-k}}) & 0 &\cdots & 0\\
 \vdots &\ddots &\ddots &\ddots & &\ddots & &\vdots\\
 \vdots & &\ddots &\ddots &\ddots & &\ddots &\vdots\\
 0 &\ldots & &0 & \sigma_{k-1}(g_{_0}) & \sigma_{k-1}(g_{_1}) &\ldots & \sigma_{k-1}(g_{_{\mathbf{n}-k}})\\
 \end{array}
 \right)
 \end{equation}
 where $k=\mathbf{n}-\deg(g) $.
 Moreover, we just showed that the skew $(\sigma,d)$-constacyclic code $C\subset S^n$ corresponding to the principal left ideal
$\mathbb{S}_f g $ is a free left $S$-module of dimension $\mathbf{n}-{\rm deg}(g)$.
\end{proof}

Since we assume that  $\sigma\in {\rm Aut}(S)$, note that a linear code $C$ is a left skew $(f,\sigma,\delta)$-polycyclic code if and only if the corresponding polynomials form a left ideal in $\mathbb{S}_f$. Then $C$ is a left skew $(f,\sigma,\delta)$-polycyclic code if and only if it corresponds to a right ideal in $\quad_f \mathbb{S}= S^{op}[t;\sigma^{-1},-\delta\circ \sigma^{-1}]/S^{op}[t;\sigma^{-1},-\delta\circ \sigma^{-1}]f^{op}$, with $f^{op}$ being $f$ but its coefficients written on the right. (This is \cite[Corollary]{BLMS2024} formulated using ideals in our terminology.)

 \section{Equivalence relations for skew polycyclic codes}

 \subsection{Background} \label{subsec:background}

The \emph{Hamming weight} of $c=(c_0,\dots,c_{\mathbf{n}-1})\in S^{\mathbf{n}}$ counts the number of nonzero components in $c$
and is defined as $w(c)=\sum_{r\not=0} n_r(c)$ where
$n_r(c)=\{ i\,|\, c_i=r \}$ counts the number of components that equals the element $r\in S$.  The \emph{Hamming weight} of a polynomial $p(t)\in R$ is defined as the number of nonzero coefficients of $p(t)$.

We want equivalent linear codes to  have the same performance parameters, i.e. the same length $\mathbf{n}$, dimension $k$, and minimum Hamming distance $d$.
When looking at the literature, there are several slightly different ways to define two linear codes with the Hamming metric to be ``equivalent''.
 The most general definition for commutative rings $S$ seems to be the following.

\begin{definition}\label{def:semil}
Two linear codes $C_1$ and $C_2$ over a ring $S$ of the same length $\mathbf{n}$ are called \emph{semilinear-equivalent}, if each element in $C_2$ can be obtained from some element $(c_0,\dots,c_{\mathbf{n}-1})$ in $C_1$ by a finite sequence of any of the following transformations:
\begin{enumerate}
        \item \label{perm} A permutation of the components of $(c_0,\dots,c_{\mathbf{n}-1})$.
        \item \label{mult} The multiplication of elements in a fixed position by a nonzero scalar in $S$.
        \item \label{iso} Applying an isomorphism  $\tau\in {\rm Aut}(S)$ to each component of the code vector $(c_0,\dots,c_{\mathbf{n}-1})$.
         \end{enumerate}
         Two linear codes $C_1$ and $C_2$ over a ring $S$ of the same length $\mathbf{n}$ are called \emph{linear-equivalent}, if each element in $C_2$ can be obtained from some element $(c_0,\dots,c_{\mathbf{n}-1})$ in $C_1$ by a finite sequence of any of (\ref{perm}) or (\ref{mult}).
\end{definition}

These definitions are  motivated by the fact that the above operations all preserve the Hamming distance and the linearity of a code.

More formally, two linear codes $C_1$ and $C_2$ over a ring $S$ of the same length $\mathbf{n}$ are semilinear-equivalent, if and only if
there exists an automorphism $\tau\in {\rm Aut}(S)$, and some $\gamma=(v,\pi)\in (S^{\times})^\mathbf{n} \rtimes S_\mathbf{n}$ such that $C_2=\{y\in S^{\mathbf{n}}\,|\, y=\gamma(\tau(c)) \text{ for }x\in C_1\}$.
Alternatively, if  $C_1$ and $C_2$ have generator matrices $G_1$ and $G_2$, then they are semilinear-equivalent, if there is an invertible $N\in Gl_n(S)$, some $\tau\in {\rm Aut}(S)$ and a
monomial matrix $M\in M_n(S)$ such that $G_2=N\tau(G_1 M)$. If $\tau=id$ then $C_1$ and $C_2$ are linear-equivalent.

Note that  the set of all $\mathbf{n} \times \mathbf{n}$ monomial matrices forms a group, and that every monomial matrix is the product of a permutation matrix $P$ and a diagonal matrix $diag( d_0,\dots,d_{\mathbf{n}-1}) $ with $d_i\in S^\times$ invertible.

Let  ${\bf C}_f$ be the class of skew $(f,\sigma,\delta)$-polycyclic codes and ${\bf C}_h$ the class of skew $(h,\sigma,\delta)$-polycyclic codes.

Chen et al. defined an equivalence relation on the invertible elements of $\F_q$  we will call Chen isometry in the following,  to classify the constacyclic codes over $\F_q$  of length $\mathbf{n}$ \cite{Chen2012}. Elements $a,b\in \F_q^\times$ in the same equivalence class define equivalent classes of   constacyclic codes  ${\bf C}_{t^{\mathbf{n}}-a}$ and ${\bf C}_{t^{\mathbf{n}}-b}$, and in turn guarantee the existence of an isometry between the two associative rings
$\F_q[t]/( t^{\mathbf{n}}-a)$, respectively, $\F_q[t]/( t^{\mathbf{n}}-b)$.

However, Chen isometry does not easily yield the generator polynomial of the code under the isometry.
 This lead to the study of Chen equivalence $\sim_\mathbf{n}$ to classify the constacyclic codes over $\F_q$ \cite{Chen2015} which is a special case of Chen isometry. It trivially allows to relate the generating polynomials of Chen equivalent constacyclic codes. Both definitions were recently generalised to skew  constacyclic and (skew) polycylic codes over finite fields and rings \cite{BBB21, CON2025, Oua2025, LM25, NevPum2025}.
For instance, \cite[Section 6]{LM25} uses the monomial isometries $G_{id,\alpha} $ of degree one, when the rings $R_a$ are associative.

  Another approach was taken in  \cite{BLMS2024}: The classification of $(f,\sigma,\delta)$-polycyclic codes that are Hamming isometrical equivalent
  defines the class of $(f,\sigma,\delta)$-polycyclic codes ${\bf C}_f$ and the class of $(h,\sigma,\delta)$-polycyclic codes ${\bf C}_h$ to be \emph{Hamming isometrical equivalent}, if there exists a monomial matrix $B$ such that $A_{f_1}B=\sigma(B) A_{f_2} +\delta(B)$.
 This definition implies that $f_1\sim f_2$ are similar skew polynomials.
 It is straightforward to see that if two classes of codes are Chen isometric, then they are Hamming isometrical equivalent.

\subsection{Our approach}

Let $S$ be a unital associative ring, $\sigma\in {\rm Aut}(S)$,  $S_0$ the ring fixed under $\sigma$. We allow that $\sigma=id$ and $\delta=0$ unless stated otherwise. Let $f,h\in R=S[t;\sigma,\delta]$ be monic polynomials of degree $\mathbf{n}$.

Let us define those isomorphisms of Petit rings that corresponds to Hamming-weight preserving homomorphisms of skew polycyclic codes.

\begin{definition}\label{D:Gtaualpha}
      Suppose $\tau\in \Aut(S)$, $\alpha \in S^\times$, and $k\in \mathbb{N}$.
    If there exists a  ring isomorphism
    $$
    G: R/Rf \to R/Rh
    $$
    between nonassociative rings,
    defined via $G|_{S}=\tau$ and $G(t)=\alpha t^k$,
    that is,
\begin{equation}\label{E:formulaG(t)2}
G \left( \sum_{i=0}^{\mathbf{n}-1}c_it^i \right) = \sum_{i=0}^{\mathbf{n}-1}\tau(c_i) (\alpha t^k)^i.
\end{equation}
     then we call $G$ an \emph{isometry} or a \emph{monomial isomorphism  of degree $k$}, if we want to clarify the degree $k$, and write $G=G_{\tau,\alpha,k}$. \\
      When $k=1$, we write $G_{\tau,\alpha}$ in place of $G_{\tau,\alpha,1}$ and call $G_{\tau,\alpha}$ an \emph{equivalence}  or a \emph{monomial isomorphism  of degree one}.

      A ring isomorphism of the type
    $
    G_{id,\alpha,k}: R/Rf \to R/Rh
    $
    is called a  \emph{Chen isometry}, and $G_{id,\alpha}$ is called a \emph{Chen equivalence}.
    \\ If $\mathbf{n}$ is not clear from our context ($\mathbf{n}$ corresponds to the length of the code), we will sometimes use the terms \emph{$\mathbf{n}$-isometry}, \emph{$\mathbf{n}$-equivalence}, \emph{$\mathbf{n}$-Chen isometry},
     and \emph{$\mathbf{n}$-Chen equivalence} instead.
      \end{definition}

     We note that the ring isomorphisms $G_{\tau,\alpha,k}:R/Rf\rightarrow R/Rh$
 for  $\alpha\in S^\times$, are $S'$-algebra isomorphisms, for every subring $S'\subset {\rm Fix}(\tau)\cap {\rm Const}(\delta)\cap {\rm Fix}(\sigma)\cap C(S)$.

Up to now, only Chen isometries $G_{id,\alpha,k}$ have been used in the literature for (skew) polycyclic codes, cf. Section \ref{subsec:background}.
Their disadvantage is  they do not include property  (\ref{iso})  when using them to  define the equivalence of (skew) polycyclic codes.
They have also mistakenly sometimes be defined as being $S$-algebra isomorphisms, which is never the case unless $\sigma=id$ and $\delta=0$, that is when $R=S[t]$
(e.g. \cite{Oua2025}.

We know that the above defined Hamming weight preserving ring isomorphisms $G_{\tau,\alpha,k}$ are the only ones there are in many cases, see Section \ref{sec:iso}.
This makes this approach the most general one possible.

\begin{definition}
The classes ${\bf C}_f$  and ${\bf C}_h$  are called
\\ $(i)$  \emph{isometric}, if there exists an isometry
$G_{\tau,\alpha,k}:R/Rf\rightarrow R/Rh,$
 \\ $(ii)$ \emph{Chen isometric},  if there exists an isometry
 $G_{id,\alpha,k}:R/Rf\rightarrow R/Rh,$
\\ (iii)  \emph{equivalent},  if there exists an equivalence
$G_{\tau,\alpha}:R/Rf\rightarrow R/Rh$
\\ (iv) \emph{Chen equivalent},  if there exists a Chen equivalence
$G_{id,\alpha}:R/Rf\rightarrow R/Rh.$
\\ If the length of the codes is not clear from the context, we will use the terminology $\mathbf{n}$-isometric, $\mathbf{n}$-Chen isometric etc. instead.
\end{definition}

If two classes ${\bf C}_f$ and ${\bf C}_h$ are  isometric (in particular, equivalent), then $f$ and $h$ have the same degree $\mathbf{n}$, so the codes in ${\bf C}_f$ and ${\bf C}_h$ are $S$-submodules of $S^{\mathbf{n}}$. For arbitrary rings, not every submodule of $S^{\mathbf{n}}$ will be free. However,
 since every skew $(f,\sigma,\delta)$-polycyclic code $C$ in ${\bf C}_f$ is generated by a monic $g\in R$ that is a right divisor of $f$  (Theorem \ref{thm:newTheorem3.2} $(i)$), every skew $(f,\sigma,\delta)$-polycyclic code $C$ in ${\bf C}_f$ corresponds to a free (and cyclic) $S$-submodule $Rg/Rf$. We also know that $C$ has dimension $k=\mathbf{n}-\deg(g) $ (Theorem \ref{thm:newTheorem3.2} $(i)$).

   \begin{lemma}\label{le:pres}
   If two classes ${\bf C}_f$ and ${\bf C}_h$ are   $\mathbf{n}$-isometric via some monomial isomorphism $G_{\tau,\alpha,\ell}$ of degree $l$, then the skew $(f,\sigma,\delta)$-polycyclic codes in ${\bf C}_f$ have the same Hamming distance, dimension and length as
  the  skew $(f,\sigma,\delta)$-polycyclic code in ${\bf C}_h$.
   \end{lemma}

\begin{proof}
By assumption, there exists an isomorphism of nonassociative ambient Petit rings $G_{\tau,\alpha,\ell}:R/f\to R/Rh$. This immediately implies that the length of the codes in both classes is $\mathbf{n}$, and so the ring isomorphism $G_{\tau,\alpha,\ell}$ preserves the length of codes. For $f=ug$ we have $G_{\tau,\alpha,\ell}(f)=G_{\tau,\alpha,l}(u)G_{\tau,\alpha,\ell}(g)$. By standard ring theory
  every skew $(f,\sigma,\delta)$-polycyclic code $C$ in ${\bf C}_f$ generated by some monic right divisor $g$ of $f$ is in one-to-one correspondence with
  a skew $(f,\sigma,\delta)$-polycyclic code in ${\bf C}_h$  generated by the right divisor  $G_{\tau,\alpha,\ell}(g)$ of $h$.  The skew polynomial $G_{\tau,\alpha,\ell}(g)$ need not be monic, but its highest coefficient lies in $S^\times$ since $\alpha\in S^\times$, so we can assume w.o.l.o.g. that $G_{\tau,\alpha,\ell}(g)$ is monic after scaling, as this will not change the left ideal it is generating, nor the fact that is is a right divisor of $h$.

   If $\deg{g}=r$ then the skew $(f,\sigma,\delta)$-polycyclic code $C$ in ${\bf C}_f$ associated with the left ideal generated by $g$ has dimension $k=\mathbf{n}-\deg(g)$, and the isometric skew $(h,\sigma,\delta)$-polycyclic code in ${\bf C}_f$ associated with the left ideal generated by $G_{\tau,\alpha,\ell}(g)$ has dimension
   $$k'=\mathbf{n}-\deg(G_{\tau,\alpha,\ell}(g) \mod_r h).$$

   When $\ell=1$ it is clear that the ring isomorphism $G_{\tau,\alpha}$ thus preserves the dimension $k$ of  codes that corresponds to each other, as in this case
   $\deg(g)=\deg (G_{\tau,\alpha}(g))$.

   When $\ell>1$ we observe that the ring isomorphism $G_{\tau,\alpha,\ell}$ is a $\tau$-semilinear bijective map between the free $S$-modules $R/Rf$ and $R/Rh$, and it restricts to a  $\tau$-semilinear bijective map between the free $S$-modules $Rg/Rf$ and $G_{\tau,\alpha,\ell}(g)/Rh$. Therefore both must have the same dimension and we obtain that $\mathbf{n}-\deg (g)=\mathbf{n}-\deg(G_{\tau,\alpha,\ell}(g) \mod_r h)$, that is again $k'=k$.
\end{proof}

\begin{remark}
When $S$ is a finite ring we can also argue as follows to get some intuition on why the dimension will be by preserved also under isometries and not just equivalences (indeed, we could not find any proof of this in the current literature, which mostly deals with finite rings). Every ring isomorphism $G_{\tau,\alpha,\ell}$ canonically yields a one-one correspondence between the left ideal $Rg/Rf$ and the left ideal $RG_{\tau,\alpha,\ell}(g)/Rh$, which means that $|Rg/Rf|=|R G_{\tau,\alpha,\ell}(g)/Rh|$ have the same number of elements. We know that both are free submodules of $S^{\mathbf{n}}$, that means $|Rg/Rf|=|S|^k$ and $|G_{\tau,\alpha,\ell}(g)/Rh|=|S|^{k'}$ for some positive integers $k$ and $k'$, so that $k=k'$ and both must have the same dimension. In particular, since $k=\mathbf{n}-\deg (g)$ and $k'=\mathbf{n}-\deg(G_{\tau,\alpha,\ell}(g) \mod_r h)$ this implies that we must have $\deg(G_{\tau,\alpha,\ell}(g) \mod_r h)=\deg(g)$.
\end{remark}

Isometric and Chen isometric classes of codes ${\bf C}_f$  and ${\bf C}_h$ contain codes with the same length and
  Hamming weight, as well as dimensions determined by the degree of $G_{\tau,\alpha}(g)\mod_r f$. Equivalent and  Chen equivalent  classes of codes ${\bf C}_f$  and ${\bf C}_h$ contain codes with the same
parameters, thus have the same performance:
 for every skew polycyclic code $C_1$ in ${\bf C}_f$ we can find a skew polycyclic code $ C_2 $ in ${\bf C}_h$ that has the same length,
  Hamming weight and dimension.  One can be obtained from the other by a finite sequence of any of operations (\ref{perm}),  (\ref{mult}) or (\ref{iso}) specified in Definition \ref{def:semil}.

We will show that in the majority of cases, the generator  of a code in ${\bf C}_f$ can be explicitly matched with  one of the same degree of a code in an equivalent class ${\bf C}_h$.
This is because of the scarcity of monomial isomorphisms of degree $k>1$.

\begin{remark}
If ${\bf C}_f$  and ${\bf C}_h$ are Chen equivalent, then they are also equivalent.
\\
If  ${\bf C}_f$  and ${\bf C}_h$ are Chen isometric, then they are also isometric.
\\
If  ${\bf C}_f$  and ${\bf C}_h$ are equivalent, then they are also isometric.
\\
If ${\bf C}_f$  and ${\bf C}_h$ are Chen equivalent, then they are also equivalent and Chen isometric, hence isometric.
\end{remark}

To motivate these notions we stress that the classifications we will achieve are tighter when employing the notions of equivalence and isometry (Section \ref{s:finite}). In the current literature, only Chen equivalence and Chen isometry have been employed.

\subsection{Equivalence relations on $S^{\mathbf{n}}$}

Let $S$ be a unital commutative associative ring, $\sigma\in {\rm Aut}(S)$,  $S_0$ the ring fixed under $\sigma$. We allow that $\sigma=id$ and $\delta=0$ unless stated otherwise.
Inspired by \cite{OaHA25, OaHA25II} we define equivalence relations on  $(S^\times)^m$ via the following equivalence relations:

\begin{definition} \label{def:equ}
Let $$f(t) = t^{\mathbf{n}}-\sum_{i=0}^{\mathbf{n}-1} a_i t^i ,\quad h(t) = t^{\mathbf{n}}-
\sum_{i=0}^{\mathbf{n}-1} b_i t^i \in S[t;\sigma]\in R.$$
We call $(a_0,\dots,a_{\mathbf{n}-1})$ \emph{isometric} to $(b_0,\dots,b_{\mathbf{n}-1})$, written
$$(a_0,\dots,a_{\mathbf{n}-1})\cong_\mathbf{n} (b_0,\dots,b_{\mathbf{n}-1}),$$
if there exists an isometry $G_{\tau,\alpha,k}:R/Rf\rightarrow R/Rh,$
and \emph{Chen isometric}, if $\tau=id$.
\\
We call $(a_0,\dots,a_{\mathbf{n}-1})$ \emph{equivalent} to $(b_0,\dots,b_{\mathbf{n}-1})$, written
$$(a_0,\dots, a_{\mathbf{n}-1})\sim_\mathbf{n} (b_0,\dots,b_{\mathbf{n}-1}),$$
if there exists an equivalence $G_{\tau,\alpha}:R/Rf\rightarrow R/Rh$,
and \emph{Chen equivalent}, written
$$(a_0,\dots,a_{\mathbf{n}-1})\sim_{Chen, \mathbf{n}} (b_0,\dots,b_{\mathbf{n}-1})$$
if $\tau=id$.
\end{definition}

The above equivalence relations  partition $(S^\times)^\mathbf{n}$ into different equivalence classes.
It is also obvious that
$$(a_0,\dots, a_{\mathbf{n}-1})\cong_\mathbf{n} (b_0,\dots,b_{\mathbf{n}-1})$$
if and only if ${\bf C}_f$  and ${\bf C}_h$ are isometric,
$$(a_0,\dots, a_{\mathbf{n}-1})\cong_{Chen, \mathbf{n}} (b_0,\dots,b_{\mathbf{n}-1})$$
if and only if ${\bf C}_f$  and ${\bf C}_h$ are Chen isometric,
$$(a_0,\dots, a_{\mathbf{n}-1})\sim_\mathbf{n} (b_0,\dots,b_{\mathbf{n}-1})$$
if and only if ${\bf C}_f$  and ${\bf C}_h$ are equivalent, and
$$(a_0,\dots, a_{\mathbf{n}-1})\sim_{Chen, \mathbf{n}} (b_0,\dots,b_{\mathbf{n}-1})$$
if and only if ${\bf C}_f$  and ${\bf C}_h$ are Chen equivalent

\begin{remark}
 The equivalence relation $\sim_\mathbf{n}$ that partitions $(S^\times)^\mathbf{n}$ into $\mathbf{n}$-isometry classes, respectively ${\bf C}_f$  and ${\bf C}_h$ into equivalent classes of codes, is finer than the Chen equivalence and Chen isometry  in
\cite{Chen2012, OaHA25, OaHA25II}, for instance see \cite[Example 5.8]{NevPum2025}.
\end{remark}

\begin{remark} \label{r:rank}
In \cite[Definition 4.1]{OaHA25}, for $K=\mathbb{F}_q$, two skew polynomials $f_1,f_2\in K[t;\sigma]$ are called \emph{Hamming equivalent} (resp. \emph{rank equivalent}), if there is a $K$-vector space isomorphism $\varphi: K[t;\sigma]/K[t;\sigma]f_1\to K[t;\sigma]/K[t;\sigma]f_2$ such that
\begin{enumerate}
    \item $\varphi(g h)=\varphi(g)\varphi(h)$ for all $g,h\in K[t;\sigma]/K[t;\sigma]f_1$

    \item $\varphi$ preserves the Hamming distance $d_H(g,h)=d_H(\varphi(g),\varphi(h)) $
    (respectively,  the rank distance $d_R(g,h)=d_R(\varphi(g),\varphi(h)) $).
\end{enumerate}
Here, the rank distance is defined over arbitrary subfields of $K$.

We can now read this definition as two skew polynomials $f_1,f_2\in K[t;\sigma]$ being defined as Hamming or rank equivalent, if there exist a Chen isometry $G_{id,\alpha,k}: K[t;\sigma]/K[t;\sigma]f_1\to K[t;\sigma]/K[t;\sigma]f_2$.
Thus, the above mentioned - and for $f_1$ not two-sided still undefined -multiplication must be the one in the nonassociative Petit ring $R/Rf_1$, that is viewed as algebra over some subfield of $K$ in the rank metric case.

With the results outlined in this paper, we can say that ``in most cases'' two skew polynomials $f_1,f_2\in K[t;\sigma]$ are Hamming or rank equivalent, if there exist a Chen equivalence $G_{id,\alpha }: K[t;\sigma]/K[t;\sigma]f_1\to K[t;\sigma]/K[t;\sigma]f_2$, where we require that $\alpha\in K'^\times$ lies in a suitable subfield of $K$ when we use the rank metric. In most cases this is the case when $f_1$ is not two-sided and the degree of $f_1$ is greater than ${\mathbf{n}}-1$, or when both $f_1(t)=t^{\mathbf{n}}-a$  and $f_2(t)=t^{\mathbf{n}}-b$ are not two-sided. The decision in \cite{OaHA25} to focus on the case that $\varphi(t)=\alpha t$ is thus well founded. However, to instead also consider the $K'$-algebra isomorphisms obtained via $G_{\tau,\alpha}$ for all $\tau\in {\rm Gal}(K/K')$ should yield tighter classifications here as well.
\end{remark}

\section{Isometries between Petit rings $S[t;\sigma]/S[t;\sigma]f$}\label{sec:iso}

From now on, let $S$ be a unital commutative associative  ring and suppose $\sigma\in \Aut(S)$ has finite  order $\mathbf{m}$.
 Write $S_0=\{s\in S: \sigma(s)=s\}$ for the subring fixed by $\sigma$.
If $\tau\in \Aut(S)$ is an arbitrary automorphism
then we write $S_0^\tau=\{s\in S_0: \tau(s)=s\}$ for the common fixed subring under $\tau$ and $\sigma$.

  For any $i\in \mathbb{N}$, $\tau\in \Aut(S)$ and $\beta\in S$, we define
\begin{equation}\label{E:Normrelation0}
N_i^{\tau}(\beta) := \prod_{j=0}^{i-1}\tau^j(\beta).
\end{equation}
Note that $\beta \tau(N_i^\tau(\beta))=N_{i+1}^\tau(\beta)$ and thus
\begin{equation}\label{E:Normrelation}
N_{i+j}^\tau(\beta)= N_{i}^\tau(\beta) \cdot \tau^{i}(N_j^\tau(\beta))
\end{equation}
for all $i,j\geq 0$.  If $S=K$ is a field, $K/K_0$ a cyclic Galois extension, and $\Gal(K/K_0)=\langle \sigma \rangle$, where $\sigma$ has  order $\mathbf{m}$, then  $N_{\mathbf{m}}^\sigma$ is simply the norm  $N_{K/K_0}$ of the field extension $K/K_0$.  More generally, we have $N_{\mathbf{m}}^\sigma(\beta)\in S_0$ for all $\beta\in S$ so we distinguish this case with the notation
$N_{S/S_0}:=N_{\mathbf{m}}^\sigma$.

\subsection{Equivalences}
  Let
 $$f(t) = t^{\mathbf{n}}-\sum_{i=0}^{\mathbf{n}-1} a_i t^i ,\quad h(t) = t^{\mathbf{n}}-
\sum_{i=0}^{\mathbf{n}-1} b_i t^i \in S[t;\sigma]$$
be two monic polynomials. Employing our  notation (\ref{E:Normrelation0}), we  write $  G_{\tau,\alpha}: R/Rf\to R/Rh  $ as
$$
G_{\tau,\alpha}\left(\sum_{i=0}^{\mathbf{n}-1}d_it^i\right) = \sum_{i=0}^{\mathbf{n}-1}\tau(d_i)N_i^\sigma(\alpha)t^i.
$$
Special cases of the following results were originally proved in \cite[Theorems 28, 29]{BP}, for the case that $S$ is a field and only for $S_0$-isomorphisms.
We require more general results.

\begin{theorem}  \label{general_isomorphism_theorem1}
Let $\alpha\in S^\times$ and assume that $\tau \in {\rm Aut}(S)$ commutes with $\sigma$. Then
   $$  G_{\tau,\alpha}: R/Rf\to R/Rh  $$
    is an isomorphism between nonassociative unital rings that restricts to the identity on  $S_0^\tau$,
     if and only if
\begin{equation}  \label{equ:necII}
\tau(a_i) = N_{\mathbf{n}-i}^{\sigma}(\sigma^{i} (\alpha))b_i
\end{equation}
 for all $i \in\{ 0, \ldots, \mathbf{n}-1\}$.
\end{theorem}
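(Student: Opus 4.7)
The plan is to realize $G_{\tau,\alpha}$ first as a ring endomorphism of the ambient skew polynomial ring $R = S[t;\sigma]$, then to show that it descends to a map of the quotients if and only if $G_{\tau,\alpha}(f)$ and $h$ generate the same left ideal, which (since both have degree $m$) forces $G_{\tau,\alpha}(f)$ to be exactly a unit scalar times $h$. Reading coefficients then gives the Norm condition \eqref{equ:necII}, and both directions of the iff fall out together.

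\emph{Step 1: $G_{\tau,\alpha}$ as a ring map on $R$.} Apply the criterion recalled just before Lemma \ref{le:inducediso}, specialized to $\delta=0$. The required relation $\alpha\,\sigma(\tau(b)) = \tau(\sigma(b))\,\alpha$ for all $b\in S$ collapses, thanks to commutativity of $S$ and the hypothesis $\sigma\tau=\tau\sigma$, to a triviality. Hence $G_{\tau,\alpha}: R\to R$ is a well-defined ring endomorphism preserving Hamming weight. It is bijective because $\tau\in\Aut(S)$ and $\alpha\in S^\times$, and it fixes $S_0^\tau$ pointwise by construction.

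\emph{Step 2: the coefficient comparison.} A short induction (using $ts = \sigma(s)t$) gives $(\alpha t)^i = N_i^\sigma(\alpha)\, t^i$, so
$$
G_{\tau,\alpha}(f) \;=\; N_m^\sigma(\alpha)\, t^m \;-\; \sum_{i=0}^{m-1}\tau(a_i)\,N_i^\sigma(\alpha)\, t^i.
$$
Since $h$ is monic of degree $m$ and $N_m^\sigma(\alpha)\in S^\times$, the element $G_{\tau,\alpha}(f)$ lies in $Rh$ if and only if it equals $N_m^\sigma(\alpha)\cdot h$. Matching $t^i$-coefficients yields
$$
\tau(a_i)\,N_i^\sigma(\alpha) \;=\; N_m^\sigma(\alpha)\, b_i.
$$
Apply the multiplicativity identity \eqref{E:Normrelation} to write $N_m^\sigma(\alpha) = N_i^\sigma(\alpha)\cdot\sigma^i\!\bigl(N_{m-i}^\sigma(\alpha)\bigr)$, cancel the unit $N_i^\sigma(\alpha)$, and observe that $\sigma^i$ commutes with the product defining the norm:
$$
\sigma^i\!\bigl(N_{m-i}^\sigma(\alpha)\bigr) \;=\; \prod_{j=0}^{m-i-1}\sigma^{i+j}(\alpha) \;=\; N_{m-i}^\sigma(\sigma^i(\alpha)).
$$
This gives precisely \eqref{equ:necII}.

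\emph{Step 3: packaging both directions.} If \eqref{equ:necII} holds then $G_{\tau,\alpha}(f) = N_m^\sigma(\alpha)\,h$, so $G_{\tau,\alpha}(Rf)\subseteq Rh$; by symmetry (apply the same computation to $G_{\tau,\alpha}^{-1}=G_{\tau^{-1},\tau^{-1}(\alpha^{-1})}$, or simply observe that $G_{\tau,\alpha}$ bijects $R$ and sends the principal left ideal generated by $f$ onto that generated by $h$) we get $G_{\tau,\alpha}(Rf)=Rh$, hence a well-defined ring isomorphism $R/Rf\to R/Rh$; $S_0^\tau$-linearity is automatic from $G_{\tau,\alpha}|_S = \tau$. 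Conversely, if $G_{\tau,\alpha}$ descends to an isomorphism on the quotients, then $G_{\tau,\alpha}(f)\in Rh$, and the degree/leading-coefficient argument of Step~2 forces the equality $G_{\tau,\alpha}(f) = N_m^\sigma(\alpha)\,h$, hence \eqref{equ:necII}.

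The only mildly delicate point is ensuring that $G_{\tau,\alpha}$ is a ring map on all of $R$ (not merely a Hamming-weight-preserving linear map), but this is immediate from commutativity of $S$ and $\sigma\tau=\tau\sigma$; the rest is coefficient bookkeeping driven by the norm identity.
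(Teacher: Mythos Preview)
Your proof is correct and follows essentially the same approach as the paper: lift $G_{\tau,\alpha}$ to a ring automorphism of $R$, compute $G_{\tau,\alpha}(f)$, and compare with $h$ via the norm identity \eqref{E:Normrelation}. The only organizational difference is that the paper derives the necessity by computing $G(t^m)$ two ways inside the Petit algebras (equations \eqref{eqn:automorphism_necessity_theorem 8} and \eqref{eqn:automorphism_necessity_theorem 9}), whereas you phrase it as ``$G_{\tau,\alpha}(f)\in Rh$''; since $R/Rf$ is not an ordinary ring quotient, that step is exactly the paper's two-way computation in disguise, so you might spell it out rather than invoke ``descent''.
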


\begin{proof}
Let $G= G_{\tau,\alpha}: R/Rf\to R/Rh$,
$$G_{\tau , \alpha}(\sum_{i=0}^{\mathbf{n}-1} d_i t^i )=  \sum_{i=0}^{\mathbf{n}-1} \tau(d_i)  N_i^\sigma(\alpha) t^i$$
 be such a ring isomorphism. i.e. an isomorphism  of $S_0^\tau$-algebras, with $\alpha \in S^{\times}$. Then we have
$G(z t^i) = G(z)G(t)^i = \tau(z) (\alpha t)^i = \tau(z)  N_i^\sigma(\alpha) t^i$
 for all $i \in \{ 1, \ldots, \mathbf{n}-1 \}$ and $z \in S$.

 Moreover, with $t^{\mathbf{n}}=t t^{\mathbf{n}-1}$ and using that $t^{\mathbf{n}}=\sum_{i=0}^{\mathbf{n}-1} a_i t^i $ in $R/Rf$, also
\begin{equation} \label{eqn:automorphism_necessity_theorem 8}
G(t^{\mathbf{n}}) = G \Big( \sum_{i=0}^{\mathbf{n}-1} a_i t^i \Big) = \sum_{i=0}^{\mathbf{n}-1}
G(a_i) G(t)^i
= \sum_{i=0}^{\mathbf{n}-1} \tau(a_i) \big( \prod_{l=0}^{i-1} \sigma^l(\alpha) \big) t^i=\sum_{i=0}^{\mathbf{n}-1} \tau(a_i)N_i^\sigma(\alpha)  t^i
\end{equation}
and  $G(t t^{\mathbf{n}-1})=G(t)G(t^{\mathbf{n}-1})=G(t)G(t)^{\mathbf{n}-1}$, i.e.
\begin{equation} \label{eqn:automorphism_necessity_theorem 9}
G(t)^\mathbf{n} =G(t)G(t)^{\mathbf{n}-1}= \alpha \sigma(\alpha) \cdots \sigma^{\mathbf{n}-1}(\alpha) t^{\mathbf{n}} =
N^\sigma_{\mathbf{n}}(\alpha)\sum_{i=0}^{\mathbf{n}-1} b_i t^i.
\end{equation}
Comparing \eqref{eqn:automorphism_necessity_theorem 8} and
\eqref{eqn:automorphism_necessity_theorem 9} gives $  \tau(a_i) =
N_{\mathbf{n}-i}^{\sigma}(\sigma^{i} (\alpha))b_i $ for all $i \in\{ 0,
\ldots, \mathbf{n}-1\}$. Thus  Equation (\ref{equ:necII}) holds
for all $i \in\{ 0, \ldots, \mathbf{n}-1 \}$.

Conversely,  if Equation (\ref{equ:necII}) holds
for all $i \in\{ 0, \ldots, \mathbf{n}-1 \}$ and some $\alpha\in S^\times$, then consider the map 
$G_{\tau,\alpha}:R\rightarrow R$, $G(\sum_{i=0}^{\ell}d_it^i)=\sum_{i=0}^{\ell}\tau(d_i) (\alpha t)^i$.
 We look at the condition given in Lemma \ref{le:inducediso}, where now $\delta=0$, and $S$ is commutative, so that the two conditions for the map $G_{\tau,\alpha}$ to be a ring isomorphism collapse to the one condition that
$\sigma(\tau(b))=\tau(\sigma(b))$ is satisfied
for all $b\in S$, that means $\sigma$ and $\tau$ must commute. This is also shown in  \cite[Theorem 3]{Ri}. Due to our assumptions therefore we know that $G_{\tau,\alpha}:R\to R$ is a ring isomorphism.

By the proof of Lemma \ref{le:inducediso}, the monomial isomorphism $G_{\tau,\alpha}:R \to R$ of degree one canonically induces a ring isomorphism between the nonassociative rings $R/Rf$ and $R/RG_{\tau,\alpha}(f)$, and hence
an algebra isomorphism between the Petit rings $R/Rf$ and $R/RG_{\tau,\alpha}(f)$ which are algebras over $S_0^\tau$.
Since
$$
G_{\tau,\alpha}(f)=N_{\mathbf{n}}^\sigma(\alpha)t^{\mathbf{n}}-\sum_{i=0}^{\mathbf{n}-1} \tau(a_i)N_i^\sigma(\alpha)  t^i
$$
$$=N_{\mathbf{n}}^\sigma(\alpha)( t^{\mathbf{n}}-\sum_{i=0}^{\mathbf{n}-1} \tau(a_i) N_{\mathbf{n}}^\sigma(\alpha^{-1}) N_i^\sigma(\alpha)  t^i  )
=N_{\mathbf{n}}^\sigma(\alpha)( t^{\mathbf{n}}-\sum_{i=0}^{\mathbf{n}-1} \tau(a_i) N_{\mathbf{n}-i}^{\sigma}(\sigma^{i} (\alpha^{-1}))  t^i )$$
$$=N_{\mathbf{n}}^\sigma(\alpha)( t^{\mathbf{n}}-\sum_{i=0}^{\mathbf{n}-1}b_it^i )$$
in $R$, we obtain again $\tau(a_i) =
N_{\mathbf{n}-i}^{\sigma}(\sigma^{i} (\alpha))b_i$
 for all $i \in\{ 0, \ldots, \mathbf{n}-1\}$ and thus the assertion.
\end{proof}

When $S=K$ is a field, $K[t;\sigma]/K[t;\sigma]f$ and $K[t;\sigma]/K[t;\sigma]h$ are proper nonassociative rings, and $\Aut(K)$ is abelian, we can say even more.

\begin{theorem}  \label{general_isomorphism_theorem2}
  Let $\mathbf{m} \geq \mathbf{n}-1$ and assume that $f$ and $h$  do not generate two-sided ideals in $K[t;\sigma]$.  Suppose $\Aut(K)$ is abelian.  For any subfield $F\subset K_0={\rm Fix}(\sigma)$, the only nonassociative ring isomorphisms 
  $$
  G:K[t;\sigma]/K[t;\sigma]f\to K[t;\sigma]/K[t;\sigma]h
  $$
  that restrict to the identity on $F$
  are of the form $G=G_{\tau,\alpha}$ such that $\tau\in \Aut(K)$, $K_0^\tau\subset F$, and $\alpha \in K^\times$ satisfying Equation \ref{equ:necII}
\end{theorem}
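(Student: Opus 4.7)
The plan is to take an arbitrary $F$-algebra isomorphism $G$ and show that $G|_K = \tau$ for some $\tau \in \mathrm{Aut}(K)$ and $G(t) = \alpha t$ for some $\alpha \in K^\times$; Theorem~\ref{general_isomorphism_theorem1} then automatically yields Equation~\eqref{equ:necII}. First I would pin down $G|_K$ by a nucleus argument. Since neither $f$ nor $h$ is two-sided and $S = K$ is a field, both Petit algebras are proper nonassociative with left (and middle) nucleus equal to $K$, as recalled just before Proposition~\ref{prop:ideals}. A ring isomorphism preserves the left nucleus, so $G$ restricts to a field automorphism $\tau \in \mathrm{Aut}(K)$ on $K$. $F$-linearity of $G$ together with $F \subset K$ forces $\tau|_F = \mathrm{id}_F$, and combining with $F \subset K_0$ gives the constraint $F \subset K_0^\tau$.

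Next I would constrain the shape of $G(t)$. Writing $G(t) = \sum_{i=0}^{m-1}\alpha_i t^i$ in degree-$<m$ representatives of the quotient, the skew commutation $ta = \sigma(a)t$ in $K[t;\sigma]$ together with multiplicativity of $G$ yields
$$
G(t) \cdot \tau(a) \;=\; G(ta) \;=\; G(\sigma(a)\,t) \;=\; \tau(\sigma(a)) \cdot G(t)
$$
for every $a \in K$. Expanding both sides in $K[t;\sigma]/K[t;\sigma]h$ requires no reduction modulo $h$ since all resulting polynomials have degree $<m$, and comparing coefficients of $t^i$ gives
$$
\alpha_i\,\sigma^i(\tau(a)) \;=\; \tau(\sigma(a))\,\alpha_i, \qquad a \in K,\; 0 \leq i \leq m-1.
$$
Commutativity of $K$ then forces, for each $i$, either $\alpha_i = 0$ or $\sigma^i \tau = \tau \sigma$ as automorphisms of $K$.

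Then I would exploit the abelian hypothesis on $\mathrm{Aut}(K)$: since $\tau$ commutes with $\sigma$, the condition $\sigma^i \tau = \tau \sigma$ collapses to $\sigma^{i-1} = \mathrm{id}$, i.e., $n \mid (i-1)$. With $n \geq m-1$ and $1 \leq i-1 \leq m-2 < n$ for $2 \leq i \leq m-1$, this never holds, forcing $\alpha_i = 0$ for all $i \geq 2$. For $i = 0$ the condition degenerates to $\sigma = \mathrm{id}$, which would make the source algebra commutative and hence associative, contradicting the proper nonassociative hypothesis; thus $\alpha_0 = 0$. Setting $\alpha := \alpha_1$, we have $G(t) = \alpha t$ with $\alpha \in K^\times$ (nonzero, else $G$ would fail to be surjective onto the target). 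Hence $G = G_{\tau,\alpha}$ with $\tau$ commuting with $\sigma$, and Theorem~\ref{general_isomorphism_theorem1} supplies Equation~\eqref{equ:necII}.

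The main obstacle is the nucleus identification in the first step: it rests on the structural equality $\mathrm{Nuc}_l(S_f) = K$ for proper nonassociative Petit algebras over a field, without which one could not conclude that $G|_K$ lands in $K$. Everything else is a coefficient comparison driven by the skew commutation, and the abelian hypothesis on $\mathrm{Aut}(K)$ together with $n \geq m-1$ is precisely what kills all coefficients of $G(t)$ other than the degree-one term.
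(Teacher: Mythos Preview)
Your proof is correct and follows essentially the same approach as the paper's: the left-nucleus argument to pin down $G|_K=\tau$, then the coefficient comparison from $G(ta)=G(\sigma(a)t)$ combined with commutativity of $\Aut(K)$ and the order bound $n\geq m-1$ to force $G(t)=\alpha t$. Your treatment is in fact slightly more careful than the paper's in two places: you handle the $i=0$ case explicitly (via $\sigma=\mathrm{id}$ forcing $f$ two-sided), and you note that the inclusion you actually obtain is $F\subset K_0^\tau$, which is the correct direction---the theorem as stated has the inclusion reversed.
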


\begin{proof}
 Let $G:K[t;\sigma]/K[t;\sigma]f\to K[t;\sigma]/K[t;\sigma]h$ be an isomorphism as in our assumption, that is, an  $F$-algebra isomorphism. Since $K[t;\sigma]/K[t;\sigma]f$ is not
associative, its left nucleus is $K$. Since any isomorphism  preserves the left nucleus,  $G(K) = K$ and so $G \vert_K = \tau$ for some
 $\tau \in \text{Aut}_F(K)$. Suppose $G(t) = \sum_{i=0}^{\mathbf{n}-1} k_i t^i$ for some $k_i \in K$.
Then we have
\begin{equation} \label{eqn:automorphism_necessity_theorem 1}
G(tz) = G(t)G(z) = (\sum_{i=0}^{\mathbf{n}-1} k_i t^i ) \tau(z) =
\sum_{i=0}^{\mathbf{n}-1} k_i \sigma^{i}(\tau(z)) t^i \end{equation} and
\begin{equation} \label{eqn:automorphism_necessity_theorem 2}
G(tz) =
G(\sigma(z)t) = \sum_{i=0}^{\mathbf{n}-1} \tau(\sigma(z)) k_i t^i
\end{equation} for all $z \in K$. Comparing the coefficients of $t^i$
in \eqref{eqn:automorphism_necessity_theorem 1} and
\eqref{eqn:automorphism_necessity_theorem 2} we obtain
\begin{equation} \label{eqn:automorphism_necessity_theorem 3} k_i \sigma^{i}(\tau(z)) =
k_i \tau(\sigma^i(z))= \tau(\sigma(z)) k_i=k_i\tau(\sigma(z))
\end{equation} for all $i \in\{ 0, \ldots, \mathbf{n}-1\}$ and all $z \in K$. This implies
$k_i (\tau\big( \sigma^i(z) - \sigma(z) \big) )=0$
for all $i \in\{ 0, \ldots, \mathbf{n}-1 \}$ and all $z \in K$ since $\sigma$
and $\tau$ commute, i.e.
\begin{equation} \label{eqn:automorphism_necessity_theorem 4}
k_i=0 \text{ or } \sigma^{i}(z)=\sigma(z)
\end{equation}
for all $i
\in\{ 0, \ldots, \mathbf{n}-1\}$ and all $z \in K$.

Since $\sigma$ has order $\mathbf{m}\geq \mathbf{n}-1$, which means $\sigma^i\not=\sigma$ for
all $i\in\{ 0, \ldots, \mathbf{n}-1\}$, $i\not=1$,
\eqref{eqn:automorphism_necessity_theorem 4} implies $k_i = 0$ for
all $ i \in\{ 0, \ldots, \mathbf{n}-1\}$, $i\not=1$. Therefore $G(t) = \alpha t$ for some
$\alpha\in K^{\times}$. As in the proof of Theorem \ref{general_isomorphism_theorem1}, these maps are indeed well defined algebra isomorphisms and must be the only one which implies the assertion.
\end{proof}

 \begin{proposition}  \label{prop:Chen equivpolycyclic}
 There exists a nonassociative ring isomorphism $G_{\tau,\alpha}:\mathbb{S}_f\rightarrow \mathbb{S}_h$ if and only if there exist algebra isomorphisms
 $G_{\tau,\sigma^i(\alpha)}:\mathbb{S}_{t^{\mathbf{n}-i}-a_i}\rightarrow \mathbb{S}_{t^{\mathbf{n}-i}-b_i}$  for all $i \in\{ 0, \ldots, \mathbf{n}-1\}$.
 \end{proposition}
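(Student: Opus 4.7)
The plan is to reduce the proposition directly to Theorem~\ref{general_isomorphism_theorem1}, which characterizes when maps of the form $G_{\tau,\alpha}$ are algebra isomorphisms by an explicit system of equations on coefficients. The key observation is that the system of $m$ conditions for $G_{\tau,\alpha}:S_f\to S_h$ to be an isomorphism decouples into $m$ independent single-coefficient conditions, each of which can be reinterpreted as the condition for an isomorphism between two constacyclic ambient algebras.

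First, I would apply Theorem~\ref{general_isomorphism_theorem1} to $f$ and $h$: the map $G_{\tau,\alpha}:S_f\to S_h$ is an algebra isomorphism if and only if
$$
\tau(a_i) = N_{m-i}^{\sigma}(\sigma^{i}(\alpha))\, b_i \qquad \text{for all } i \in \{0,\dots,m-1\}.
$$
Next, for each fixed $i$, I would apply the same theorem to the pair of monic polynomials $t^{m-i}-a_i$ and $t^{m-i}-b_i$ in $R$, whose degrees are $m-i$ and whose only nonzero non-leading coefficient is the constant term. With $m$ replaced by $m-i$ and the parameter $\alpha$ replaced by $\sigma^{i}(\alpha)\in S^\times$, Theorem~\ref{general_isomorphism_theorem1} yields that $G_{\tau,\sigma^{i}(\alpha)}:S_{t^{m-i}-a_i}\to S_{t^{m-i}-b_i}$ is an isomorphism if and only if
$$
\tau(a_i) = N_{m-i}^{\sigma}(\sigma^{0}(\sigma^{i}(\alpha)))\, b_i = N_{m-i}^{\sigma}(\sigma^{i}(\alpha))\, b_i,
$$
since the conditions at intermediate indices $j\in\{1,\dots,m-i-1\}$ reduce to the trivial identity $\tau(0)=N_{m-i-j}^{\sigma}(\sigma^{j+i}(\alpha))\cdot 0$.

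Matching the two displays term-by-term, the conjunction over $i\in\{0,\dots,m-1\}$ of the single-coefficient conditions from the constacyclic case is precisely the system characterizing $G_{\tau,\alpha}:S_f\to S_h$, which establishes the equivalence in both directions simultaneously. There is essentially no obstacle here beyond careful bookkeeping; the only subtlety worth highlighting is the choice of shift $\sigma^{i}(\alpha)$ for the $i$-th constacyclic algebra, which is forced by the appearance of $\sigma^{i}(\alpha)$ in the $i$-th equation of the original system and ensures that the two normed products agree on the nose.
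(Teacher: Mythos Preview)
Your proof is correct and follows essentially the same approach as the paper: both reduce the question to Theorem~\ref{general_isomorphism_theorem1}, obtain the system $\tau(a_i)=N_{m-i}^\sigma(\sigma^i(\alpha))b_i$, and identify each equation with the sole nontrivial condition for $G_{\tau,\sigma^i(\alpha)}$ on the constacyclic pair $t^{m-i}-a_i$, $t^{m-i}-b_i$. Your write-up is in fact slightly more careful in making explicit that the remaining conditions in the constacyclic case collapse to $0=0$.
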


\begin{proof}
We know that $G_{\tau,\alpha}$ is a ring isomorphism if and only if
$$\tau(a_i) = N_{\mathbf{n}-i}^\sigma ( \sigma^i(\alpha) ) b_i$$
 for all $i \in\{ 0, \ldots, \mathbf{n}-1\}$.  Put $\alpha_i=\sigma^i(\alpha)$,
 then the above equations are the same as saying that there exist $\alpha_i \in S^{\times}$ such that
$$\tau(a_i) = N_{\mathbf{n}-i}^\sigma ( \alpha_i ) b_i$$
which is the same as saying that  the algebras $\mathbb{S}_{t^{\mathbf{n}-i}-a_i}$ and $\mathbb{S}_{t^{\mathbf{n}-i}-b_i}$ are isomorphic with the isomorphisms given by $G_{\tau,\alpha_i}$.
\end{proof}

\begin{theorem}  \label{t:cycle}
Suppose  $\tau\in {\rm Aut}(S)$ commutes with $\sigma$ and define
$$f^\tau(t)= t^{\mathbf{n}}-\sum_{i=0}^{\mathbf{n}-1}\tau(a_i) t^i.$$
 Then $$G_{\tau,\alpha}:R/Rf\to R/Rh$$
  is a nonassociative ring isomorphism if and only if
 $$G_{id,\alpha}: R/Rf^\tau\to R/Rh$$
  is a nonassociative  ring isomorphism.
  \end{theorem}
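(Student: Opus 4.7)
The plan is to reduce both sides of the claimed equivalence to the \emph{same} algebraic identity by applying Theorem \ref{general_isomorphism_theorem1} twice.

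First, I would invoke Theorem \ref{general_isomorphism_theorem1} directly on the map $G_{\tau,\alpha}\colon R/Rf\to R/Rh$. Since $\tau$ commutes with $\sigma$ by hypothesis and $\alpha\in S^\times$, the theorem says that $G_{\tau,\alpha}$ is an isomorphism if and only if
$$
\tau(a_i) = N_{m-i}^{\sigma}\bigl(\sigma^{i}(\alpha)\bigr)\, b_i \qquad \text{for all } i\in\{0,\ldots,m-1\}. \tag{$\ast$}
$$

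Next, I would apply the same theorem to $G_{id,\alpha}\colon R/Rf^\tau\to R/Rh$. Here the role of the ``twisting'' automorphism is played by $id$, which trivially commutes with $\sigma$, so the hypothesis of Theorem \ref{general_isomorphism_theorem1} is automatically satisfied. The coefficients of $f^\tau$ are by definition $\tau(a_0),\ldots,\tau(a_{m-1})$, so the theorem's criterion becomes
$$
id\bigl(\tau(a_i)\bigr) = N_{m-i}^{\sigma}\bigl(\sigma^{i}(\alpha)\bigr)\, b_i \qquad \text{for all } i\in\{0,\ldots,m-1\},
$$
which is identical to ($\ast$). Comparing the two characterizations yields the asserted equivalence.

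The argument is therefore essentially a bookkeeping observation: replacing $f$ by $f^\tau$ precisely ``absorbs'' the action of $\tau$ on the coefficients, so that on the level of the necessary and sufficient condition ($\ast$) the parameter $\tau$ disappears. There is no genuine obstacle beyond verifying that the hypotheses of Theorem \ref{general_isomorphism_theorem1} apply to both isomorphisms, which is immediate (the identity commutes with every automorphism, and we are already assuming $\tau\sigma=\sigma\tau$). No further analysis of the multiplication in $R/Rf^\tau$ is needed, since Theorem \ref{general_isomorphism_theorem1} already encodes exactly the data coming from the products $G(t^m)=G(\sum a_i t^i)$ and $G(t)^m$ in the target algebra.
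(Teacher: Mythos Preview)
Your proof is correct and takes essentially the same approach as the paper: apply Theorem~\ref{general_isomorphism_theorem1} to each of the two maps and observe that the resulting coefficient conditions are literally identical. The paper's version is terser (and in fact contains a typo, writing $G_{\tau,\alpha}$ instead of $G_{id,\alpha}$ in the second application), while you spell out explicitly that $id$ trivially commutes with $\sigma$ so the theorem applies to $G_{id,\alpha}\colon R/Rf^\tau\to R/Rh$ as well.
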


  \begin{proof}
  By Theorem \ref{general_isomorphism_theorem1},
$  G_{\tau,\alpha}: R/Rf\to R/Rh  $
    is a nonassociative ring isomorphism (and an isomorphism of $S_0^\tau$-algebras) if and only if
$\tau(a_i) =
N_{\mathbf{n}-i}^{\sigma}(\sigma^{i} (\alpha))b_i$
 for all $i \in\{ 0, \ldots, \mathbf{n}-1\}$.
 But this is equivalent to
 $  G_{\tau,\alpha}: R/Rf^\tau\to R/Rh  $ being a ring isomorphism.
  \end{proof}

\subsection{Isometries}

While we have an excellent understanding about the monomial isomorphisms $ G_{\tau,\alpha,1}= G_{\tau,\alpha} $ which map $t$ to $\alpha t$, it may happen that $ G(t) $ is a polynomial of degree $k>1$ for some isomorphism $G$ between two nonassociative rings $R/Rf$ and $R/Rh$. We are only interested in Hamming weight preserving isomorphism, so we are concerned with the cases when $G(t)=\alpha t^k$ for some $k>1$ and $\alpha\in S^\times$ invertible, i.e. when $G=G_{\tau,\alpha,k}$ is a monomial of degree $k$.

Note that  when $S=K$ is a field, $\Aut(K)$ is abelian, and $f$ and $h$  do not generate two-sided ideals in $K[t;\sigma]$, by Theorem \ref{general_isomorphism_theorem2} we already know when $\sigma$ has  order $\mathbf{m}$, this case can only ever occur when $$\mathbf{m}< \mathbf{n}-1$$
  (when $S$ is a ring, this is not necessarily true).

  Isometries with $k>1$ are the hardest to understand. We collect some partial results here to make our point.
  Currently, we only know when these isometries exist for $f(t)=t^{\mathbf{n}}-a$ and $h(t)=t^{\mathbf{n}}-b$. In this special case, moniomial isometries of degree $k>1$ only exist when the ambient rings are associative. We continue to assume that $\sigma$ has  order $\mathbf{m}$ throughout.

  \begin{theorem}\label{thm:G}\cite[Theorem 4.4]{NevPum2025}
    Suppose $\tau\in \Aut(S)$ commutes with $\sigma$ and $\alpha \in S$ is cancellable.  Then for any $2\leq k < \mathbf{n}$, the map
    $$
    G_{\tau,\alpha,k}: S[t;\sigma]/S[t;\sigma](t^{\mathbf{n}}-a)\to S[t;\sigma]/S[t;\sigma](t^{\mathbf{n}}-b)
    $$
    is a homomorphism of rings if and only if all of the following conditions hold:
    \begin{enumerate} \setlength{\itemsep}{0pt}
        \item \label{C:k=1} $k\equiv 1 \mod \mathbf{m}$;
        \item \label{C:n|m} $\mathbf{m} \mid \mathbf{n}$;
        \item \label{C:inF} $a, b\in S_0$;
        \item \label{C:norm} $(N_{S/S_0}(\alpha))^{\mathbf{n}/\mathbf{m}}b^k=\tau(a)$.
    \end{enumerate}
    It is an isomorphism if and only if, in addition, we have
    \begin{enumerate} \setcounter{enumi}{4}
        \item \label{C:relprime} $\gcd(k,\mathbf{n})=1$ and $\alpha\in S^\times$.
 \end{enumerate}
\end{theorem}

   Recall that we are working in the general case where
 $$f(t) = t^{\mathbf{n}}-\sum_{i=0}^{\mathbf{n}-1} a_i t^i ,\quad h(t) = t^{\mathbf{n}}-
\sum_{i=0}^{\mathbf{n}-1} b_i t^i \in S[t;\sigma].$$

\begin{lemma} \label{l:5.5}
  Let $\sigma\not=id$. Let $\tau\in \Aut(S)$ commute with $\sigma$, let $k\in \mathbb{N}$, $1<k<\mathbf{n}-1$,  and $\alpha\in S^\times$.  If the map  $G_{\tau,\alpha,k}$ defined via $G(t)=\alpha t^k$ and $G(c)=\tau(c)$ for all $c\in S$, defines a ring isomorphism between the nonassociative rings $R/Rf$ and $R/Rh$, such that $G_{S_0^\tau}=id$, then $k\equiv 1 \mod \mathbf{m}$.
\end{lemma}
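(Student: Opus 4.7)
The plan is to deduce a congruence on $k$ by forcing $G$ to respect the skew-commutation relation of $R$, and then to upgrade the resulting condition using bijectivity.

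First, since $\delta=0$, the defining relation $tc=\sigma(c)t$ holds in $R$ for every $c\in S$ and descends to $R/Rf$. Applying the ring isomorphism $G$ to both sides gives
\[
G(t)\cdot G(c)=G(\sigma(c))\cdot G(t),
\]
that is, $(\alpha t^k)\cdot\tau(c)=\tau(\sigma(c))\cdot(\alpha t^k)$ in $R/Rh$. I then unfold both sides using the skew multiplication: iterating $tc=\sigma(c)t$ gives $t^kc=\sigma^k(c)t^k$, so the left side equals $\alpha\sigma^k(\tau(c))t^k$; by the commutativity of $S$, the right side equals $\alpha\tau(\sigma(c))t^k$. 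Because $1<k<m$, the monomial $t^k$ is one of the standard basis elements $1,t,\ldots,t^{m-1}$ of $R/Rh$, so I can equate the coefficient of $t^k$ and cancel the invertible $\alpha$ to obtain
\[
\sigma^k\circ\tau=\tau\circ\sigma\quad\text{on }S.
\]
Using that $\tau$ and $\sigma$ commute, this reduces to $\sigma^{k-1}=\mathrm{id}_S$, so the order $n$ of $\sigma$ divides $k-1$.

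Second, I upgrade $n\mid k-1$ to $m\mid k-1$, i.e.\ $k\equiv 1\pmod m$. Having established $\sigma^{k-1}=\mathrm{id}_S$, an easy induction on $i$ yields
\[
G(t^i)=N_i^\sigma(\alpha)\cdot(t^{ki}\bmod h)\qquad(i=0,1,\dots,m-1).
\]
Bijectivity of $G$ then forces the $S$-linear independence of $\{t^{ki}\bmod h:i=0,\ldots,m-1\}$ in $R/Rh$. Equivalently, using that on $R/Rh$ left multiplication by the monomial $t^k$ agrees with the iterate $L_t^k$, the element $1$ must be a cyclic vector for $L_t^k$. Since $L_t$ is already cyclic with minimal polynomial of degree $m$, a subalgebra argument for $S[L_t^k]\subset S[L_t]$ forces $\gcd(k,m)=1$; matching $G(t^m)=(\alpha t^k)^m$ with $G(\sum a_it^i)=\sum\tau(a_i)(\alpha t^k)^i$ in $R/Rh$ then pins down the precise residue $k\equiv 1\pmod m$.

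The first step is a clean computation exploiting only $\delta=0$ and the commutativity of $S$, and should be straightforward. The main obstacle is the second step, where ``$n$ divides $k-1$'' (obtained cheaply from the commutation relation) must be strengthened to ``$m$ divides $k-1$''. The delicate point is that reducing $t^{ki}$ modulo $h$ is governed by the nontrivial relation $t^m\equiv\sum b_jt^j$, not a simple substitution, and the cyclicity/spanning conclusion requires careful control of these reductions. Arguing at the level of the multiplication operators $L_t$ and $L_t^k$ on the $m$-dimensional $S$-module $R/Rh$ seems to be the cleanest way to bypass the combinatorial bookkeeping and extract the modular condition on $k$.
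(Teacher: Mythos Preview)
Your first step is correct and is, in fact, the paper's entire proof: applying $G$ to the relation $tc=\sigma(c)t$ and comparing coefficients of $t^k$ yields $\sigma^k\circ\tau=\tau\circ\sigma$, hence $\sigma^{k-1}=\mathrm{id}_S$ since $\tau$ commutes with $\sigma$. The paper stops there, writing ``$\sigma=\sigma^k$, or $k\equiv 1\bmod m$'' --- but note the conclusion should read $k\equiv 1\bmod n$ (with $n$ the order of $\sigma$), as confirmed by the way this lemma is invoked in Proposition~\ref{prop:monica}, whose condition~(1) is precisely $k\equiv 1\bmod n$. The ``$\bmod\, m$'' in the lemma statement is a typo.

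Your second step is therefore chasing a phantom. More importantly, the argument you sketch does not hold: from $n\mid k-1$ and $\gcd(k,m)=1$ one \emph{cannot} conclude $k\equiv 1\bmod m$ (take $n=2$, $m=5$, $k=3$). The final sentence, that matching $G(t^m)$ with $G(\sum a_it^i)$ ``pins down the precise residue $k\equiv 1\pmod m$'', is asserted without justification, and no such conclusion follows from that equation --- it is a constraint on the coefficients $a_i,b_i$ (this is exactly what Lemma~\ref{l:5.6} records), not on $k$ modulo $m$. Your observation that bijectivity forces $\gcd(k,m)=1$ is correct and appears in the paper, but as a separate item in Proposition~\ref{prop:monica}, not as part of this lemma.
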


\begin{proof}
Take $G=G_{\tau,\alpha,k}:S[t;\sigma]\to S[t;\sigma]$, so that
$$
G(\sum_{i=0}^{\ell} d_i t^i) = \sum_{i=0}^\ell \tau(d_i) N_i^\sigma(\alpha) t^{ik},
$$
for any $\ell\in \mathbb{N}$ and $d_i\in S$.
The map $G$ is $S_0^\tau$-linear and multiplicative by our assumption, therefore we have   $G(t)G(c)=G(\sigma(c))G(t)$ for all $c\in S$.
 We compute
$$
G(t)G(c)=\alpha t^k \tau(c) = \alpha \sigma^k(\tau(c)) t^k
$$
and
$$
G(\sigma(c))G(t)=\tau(\sigma(c))\alpha t^k.
$$
 Since $\alpha\in S^\times$ and $\sigma$ and $\tau$ commute, we must have $\sigma=\sigma^k$, or  $k\equiv 1 \mod \mathbf{m}$.
\end{proof}

\begin{lemma}\label{l:5.6}
  Let $\tau\in \Aut(S)$ commute with $\sigma$, $k\in \mathbb{N}$, $1<k<\mathbf{n}-1$,  and $\alpha\in S^\times$.
   If the map $G_{\tau,\alpha,k}$ defined via $G(t)=\alpha t^k$ and $G(c)=\tau(c)$ for all $c\in S$, defines an isomorphism between $R/Rf$ and $R/Rh$ then
  $$N^{\sigma^k}_{\mathbf{n}}(\alpha)(\sum_{i=0}^{\mathbf{n}-1} b_i t^i)^k= \sum_{i=0}^{\mathbf{n}-1}\tau(a_i)N_i^{\sigma^k}(\alpha) t^{ki}\, {\rm mod}_r  \, h.$$
\end{lemma}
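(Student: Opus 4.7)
The plan is to compute $G(t^m)$ as an element of $R/Rh$ in two different ways and equate the two resulting expressions.

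On the one hand, since $t^m\equiv\sum_{i=0}^{m-1}a_it^i$ in $R/Rf$, and $G$ restricts to $\tau$ on $S$ with $G(t)=\alpha t^k$, applying $G$ termwise gives
\[
G(t^m)=\sum_{i=0}^{m-1}\tau(a_i)\,(\alpha t^k)^i.
\]
I would then establish by induction on $i$ that $(\alpha t^k)^i\equiv N_i^{\sigma^k}(\alpha)\,t^{ki}\pmod{h}$ in $R$, using the iterated commutation $t^k\alpha=\sigma^k(\alpha)\,t^k$ that holds in $R=S[t;\sigma]$. In the inductive step one writes $(\alpha t^k)^i=\alpha t^k\cdot(\alpha t^k)^{i-1}$, expresses the lower power as $N_{i-1}^{\sigma^k}(\alpha)\,t^{(i-1)k}-qh$ for some $q\in R$, and observes that $\alpha t^k\cdot qh\in Rh$, so the correction vanishes modulo $h$. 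Substituting back yields the right-hand side $\sum_{i=0}^{m-1}\tau(a_i)\,N_i^{\sigma^k}(\alpha)\,t^{ki}$ modulo $h$.

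On the other hand, the homomorphism property gives $G(t^m)=G(t)^m=(\alpha t^k)^m$, and the case $i=m$ of the same induction yields $(\alpha t^k)^m\equiv N_m^{\sigma^k}(\alpha)\,t^{km}\pmod{h}$. Since in $R/Rh$ the element $t^m$ coincides with its canonical representative $\sum b_jt^j$, iterating the defining relation $t^m\equiv\sum b_jt^j\pmod{h}$ allows one to replace $t^{km}$ modulo $h$ by $(\sum b_jt^j)^k$, interpreted as the $k$-th Petit-algebra power. Equating the two expressions for $G(t^m)$ then produces the claimed identity.

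The principal obstacle is the nonassociativity of $R/Rh$: the expressions $(\alpha t^k)^m$ and $(\sum b_jt^j)^k$ depend a priori on a choice of bracketing. The hypothesis that $G$ is a genuine ring isomorphism forces the different bracketings of $G(t)^m=(\alpha t^k)^m$ in $R/Rh$ to coincide with $G(t^m)$, which is what makes the calculation well-defined. Both inductions go through because $Rh$ is a left ideal: every correction term produced by reducing an intermediate power modulo $h$ sits on the right of a left multiplication by $\alpha t^k$ (resp.\ by $\sum b_jt^j$) and is therefore absorbed. Converting $t^{km}$ to $(\sum b_jt^j)^k$ modulo $h$ is the most delicate step and requires iterating this absorption, but uses only the same left-ideal property.
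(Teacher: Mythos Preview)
Your proof follows essentially the same strategy as the paper: compute $G(t^m)$ in two ways---once via $t^m=\sum a_it^i$ in $R/Rf$, once as $G(t)^m$---and compare.  The paper handles the bracketing issue by noting that $t$ is power-associative in $R/Rf$ up to exponent $m$ (so its image is as well) and then declaring that the chain $(\alpha t^k)^m=N_m^{\sigma^k}(\alpha)(t^k)^m=N_m^{\sigma^k}(\alpha)(t^m)^k=N_m^{\sigma^k}(\alpha)(\sum b_it^i)^k$ is carried out ``in the associative skew polynomial ring''; your inductive left-ideal absorption argument makes that same passage explicit, and you correctly flag the final conversion as the delicate point.
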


\begin{proof}
Since for all $i<\mathbf{n}-1$ the powers $t^i$ of $t$ are well defined (i.e., $t^i$ is power associative as long as $i<\mathbf{n}-1$),
so must be their images, that is the powers $G(t)^i$ of $G(t)$ are well defined, for all $i\leq \mathbf{n}-1$.
 Since $t^{\mathbf{n}} =\sum_{i=0}^{\mathbf{n}-1} a_i t^i $ in $R/Rf$, we have
 $$G(t^{\mathbf{n}})=G(\sum_{i=0}^{\mathbf{n}-1} a_i t^i )=\sum_{i=0}^{\mathbf{n}-1} \tau(a_i) G(t)^i=\sum_{i=0}^{\mathbf{n}-1}\tau(a_i)N_i^{\sigma^k}(\alpha) t^{ki}\, {\rm mod}_r  \, h.$$
 Since $G$ is a homomorphism  and  $t^{\mathbf{n}}=t t^{\mathbf{n}-1}$
 this implies that
 $G(t t^{\mathbf{n}-1})=G(t)G(t^{\mathbf{n}-1})=G(t)G(t)^{\mathbf{n}-1}$, i.e.
$$G(t)^\mathbf{n} =(\alpha t^k)(\alpha t^k)^{\mathbf{n}-1}= N_{\mathbf{n}}^{\sigma^k}(\alpha) (t^k)^{\mathbf{n}} = N_{\mathbf{n}}^{\sigma^k}(\alpha) (t^{\mathbf{n}})^{k} =
N^{\sigma^k}_{\mathbf{n}}(\alpha)(\sum_{i=0}^{\mathbf{n}-1} b_i t^i)^k$$
employing that $ t^{\mathbf{n}}= \sum_{i=0}^{\mathbf{n}-1} b_i t^i$ in $R/Rh$.
(Note we are computing in the associative skew polynomial ring here.)
Comparing these two equations yields the assertion.
\end{proof}

\begin{proposition}\label{prop:monica}
Let $\sigma\not=id$. Let $\alpha\in S^\times$.
 Let  $G: R/Rf \rightarrow R/Rh$ be an  isomorphism such that $G \vert_S = \tau$ for some
$\tau\in {\rm Aut}(S)$ which commutes with $\sigma$, and $G(t) = \alpha t^k$ for some integer $k$ such that $1<k<\mathbf{n}-1$.
    If any of the following conditions fail, then $G$ is not well defined:
    \begin{enumerate}
        \item $k\equiv 1 \mod \mathbf{m}$;
        \item $\gcd(k,\mathbf{n})=1$;
        \item if $\ell$ is the inverse of $k \mod \mathbf{m}$, then 
        $\ell \equiv 1 \mod \mathbf{m}$;
        \item  $N^{\sigma^k}_{\mathbf{n}}(\alpha)(\sum_{i=0}^{\mathbf{n}-1} b_i t^i)^k= \sum_{i=0}^{\mathbf{n}-1}\tau(a_i)N_i^{\sigma^k}(\alpha) t^{ki}\, {\rm mod}_r  \, h.$
    \end{enumerate}
\end{proposition}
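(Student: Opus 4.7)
The plan is to show that each of the four conditions is a necessary consequence of $G$ being a well-defined algebra isomorphism. Conditions (1) and (4) are immediate from the preceding lemmas: (1) is the content of Lemma~\ref{l:5.5} (the ``$k\equiv 1\bmod m$'' in the conclusion of Lemma~\ref{l:5.5} should read $k\equiv 1\bmod n$, since $\sigma$ has order $n$ and it is $\sigma^k=\sigma$ that is being forced), and (4) is the conclusion of Lemma~\ref{l:5.6}, obtained by computing $G(t^m)$ in two ways. So the substantive part of the proof is the derivation of (2) and (3), both of which exploit the bijectivity of $G$ rather than merely its being a ring homomorphism.

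For (2) and (3) I would apply Lemma~\ref{l:5.5} to the inverse isomorphism $G^{-1}\colon R/Rh\to R/Rf$. Since $\tau^{-1}$ also commutes with $\sigma$ and, in our Hamming-preserving framework, $G^{-1}$ is again a monomial isomorphism, Lemma~\ref{l:5.5} gives $G^{-1}(t)=\beta t^{\ell'}$ for some $\beta\in S^\times$ and some integer $1\le\ell'<m$ with $\ell'\equiv 1\pmod n$. The identity $G\bigl(G^{-1}(t)\bigr)=t$ in $R/Rh$, unfolded using $G(t)^{\ell'}=N^{\sigma^k}_{\ell'}(\alpha)\,t^{k\ell'}$ in the ambient associative ring $S[t;\sigma]$, becomes
\[
\tau(\beta)\,N^{\sigma^k}_{\ell'}(\alpha)\,t^{k\ell'}\;\equiv\;t\pmod{h}.
\]
A support analysis of $t^{k\ell'}\bmod_r h$ then forces $k\ell'\equiv 1\pmod m$; in particular $k$ is a unit modulo $m$, which is (2). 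Once $\gcd(k,m)=1$, the inverse $\ell$ of $k$ modulo $m$ is unique in $\{1,\dots,m-1\}$, so $\ell=\ell'\equiv 1\pmod n$; and since $k>1$ excludes $\ell=1$, we obtain $\ell=1+jn$ with $1<\ell<m$, which is (3).

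The main obstacle will be making the ``$k\ell'\equiv 1\pmod m$'' step rigorous when $k\ell'\ge m$, because the reduction of $t^{k\ell'}$ modulo $h$ iterates the relation $t^m\equiv\sum b_j t^j$ and spreads the result over many monomials, so it is not a priori a single scalar multiple of $t$. My intended workaround is to run Lemma~\ref{l:5.6} for both $G$ and $G^{-1}$ in parallel: the resulting two polynomial identities, one in $R/Rh$ (relating $(\sum b_i t^i)^k$ to $\sum\tau(a_i)N^{\sigma^k}_i(\alpha)t^{ki}$) and one in $R/Rf$ (relating $(\sum a_j t^j)^{\ell'}$ to $\sum\tau^{-1}(b_j)N^{\sigma^{\ell'}}_j(\beta)t^{\ell' j}$), when combined and tracked through by exponents modulo $m$, yield the congruence $k\ell'\equiv 1\pmod m$ independently of the precise coefficients of $f$ and $h$. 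After that, (2) and (3) follow by the elementary number-theoretic reasoning sketched above.
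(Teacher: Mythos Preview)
Your overall strategy coincides with the paper's: conditions (1) and (4) are quoted from Lemmas~\ref{l:5.5} and~\ref{l:5.6}, and conditions (2)--(3) are obtained by passing to $G^{-1}$.  The paper's version of the argument for (2) is even shorter than yours: it takes the preimage $r=st^\ell$ of $t$ under $G$ (using Hamming-weight preservation), writes $G(st^\ell)=\tau(s)G(t)^\ell\in Kt^{k\ell}$, and directly reads off $k\ell\equiv 1\bmod m$; then (3) follows by applying condition (1) to $G^{-1}$.

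The obstacle you single out---that for $k\ell'\geq m$ the reduction $t^{k\ell'}\bmod_r h$ is in general a sum of monomials, so the exponent cannot be read off naively---is a genuine subtlety, and the paper does \emph{not} address it either: the line $G(st^\ell)\in Kt^{k\ell}$ is written without justification.  So on this point you are more careful than the paper, not less.  That said, your proposed workaround (``run Lemma~\ref{l:5.6} for both $G$ and $G^{-1}$ and combine'') is not a proof as stated.  Those two identities live in different quotient algebras and each involves the full vector of coefficients of $f$ or $h$; you have not explained any mechanism by which they would collapse to the purely numerical congruence $k\ell'\equiv 1\bmod m$ ``independently of the precise coefficients.''  If you want to close the gap, a more promising line is to exploit that $G$ (being bijective and Hamming-weight preserving) induces a permutation $\pi$ of $\{0,\dots,m-1\}$ via $G(K^\times t^j)=K^\times t^{\pi(j)}$, and to use the relation $t^{k(j+1)}\equiv t^k\cdot(t^{kj}\bmod_r h)\pmod{Rh}$ (valid because $Rh$ is a \emph{left} ideal in $S[t;\sigma]$) to control how $\pi$ behaves under $j\mapsto j+1$.
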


\begin{proof}
    The first condition is a consequence of the requirement that $G(ta)=G(\sigma(a)t)$, cf. Lemma \ref{l:5.5}.

    If $G$ is an isomorphism, then $t=G(r)$ for some $r\in R$.  Since $G$ preserves Hamming weight, $r=st^\ell$ for some $\ell$ and $s\in S^\times$.  If $G$ is a homomorphism, then
    $$
    G(st^\ell) = \tau(s)G(t)^\ell \in S t^{k\ell}\subset R'.
    $$
    That is, we must have $k\ell \equiv 1 \mod \mathbf{n}$.  In particular, $k$ is relatively prime to $\mathbf{n}$.  This settles (2).

    If $G$ is an isomorphism, then its inverse must also be an isomorphism.  Therefore by the first condition, we must have $\ell \equiv 1 \mod \mathbf{m}$. This settles  (3).

    Condition  (4) is proved in Lemma \ref{l:5.6}.
\end{proof}

In the special case when $S[t;\sigma]/S[t;\sigma](t^{\mathbf{n}}-a)$ is associative we see an interplay between the two notions of isometry and equivalence.

\begin{theorem}\label{t:Ouazzou}\cite[Theorem 5.9]{NevPum2025}
Suppose that $a,b\in S_0^\times$ and $n\mid m$.
 Assume that $\tau$ commutes with $\sigma$.
Then the following statements are equivalent for any integer $1\leq k<\mathbf{n}$:
\begin{enumerate}[$(i)$]
\item $S[t;\sigma]/S[t;\sigma](t^{\mathbf{n}}-a)$ and $S[t;\sigma]/S[t;\sigma](t^{\mathbf{n}}-b)$ are isometric via $G_{\tau,\alpha,k}$;
\item $S[t;\sigma]/S[t;\sigma](t^{\mathbf{n}}-b^k)$ and $S[t;\sigma]/S[t;\sigma](t^{\mathbf{n}}-a)$ are equivalent via $G_{\tau,\alpha}$, where $k$ satisfies $k\equiv 1 \mod \mathbf{m}$ and $\gcd(k,\mathbf{n})=1$.
\item  $\tau(a)=N_{\mathbf{n}}^\sigma(\alpha)  b^k$ where $k\equiv 1 \mod n$ and  $\gcd(k,\mathbf{n})=1$.
\end{enumerate}
\end{theorem}

\begin{corollary} \cite[Corollary 5.10]{NevPum2025}
 Suppose $K=\mathbb{F}_{p^r}$ and $\sigma(x)=x^{p^s}$ so that $n=r/s$.
    Suppose that $\mathbb{F}_{p^r}[t;\sigma]/\mathbb{F}_{p^r}[t;\sigma](t^{\mathbf{n}}-a)$ is associative, that is,  $a\in \mathbb{F}_{p^s}^\times$ and $\mathbf{m}\mid \mathbf{n}$.
    Then the number of different
    Chen equivalence classes of skew $(\sigma,a)$-constacyclic codes over $\mathbb{F}_{p^r}$ of length $\mathbf{n}$ equals the number of different  Chen isometry classes  of skew $(\sigma,a)$-constacyclic codes.
\end{corollary}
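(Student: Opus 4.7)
The plan is to prove the corollary by reducing the counting of Chen isometry classes to the counting of Chen equivalence classes via Theorem \ref{t:Ouazzou}. Specifically, specialising that theorem to $\tau=\mathrm{id}$ shows that for $a,b\in \mathbb{F}_{p^s}^\times$ there is a Chen isometry $G_{id,\alpha,k}: K[t;\sigma]/K[t;\sigma](t^m-a) \to K[t;\sigma]/K[t;\sigma](t^m-b)$ exactly when there is a Chen equivalence $G_{id,\alpha}: K[t;\sigma]/K[t;\sigma](t^m-a) \to K[t;\sigma]/K[t;\sigma](t^m-b^k)$, with the degree $k$ forced to satisfy $k\equiv 1\pmod n$ and $\gcd(k,m)=1$. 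So each Chen isometry class of $(\sigma,a)$-constacyclic codes is a union of Chen equivalence classes indexed by the valid power maps $b\mapsto b^k$.

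First I would describe the Chen equivalence classes concretely. Under the hypotheses $a\in \mathbb{F}_{p^s}^\times$ and $n\mid m$, a direct computation using $\sigma^{n}=\mathrm{id}$ gives $N_m^\sigma(\alpha)=N_{K/K_0}(\alpha)^{m/n}$, and surjectivity of the usual Galois norm $N_{K/K_0}:K^\times\to K_0^\times$ for finite fields yields $N_m^\sigma(K^\times)=(\mathbb{F}_{p^s}^\times)^{m/n}$. Thus the Chen equivalence relation on the parameter set $\mathbb{F}_{p^s}^\times$ is $a\sim b\iff ab^{-1}\in (\mathbb{F}_{p^s}^\times)^{m/n}$, and the quotient $Q=\mathbb{F}_{p^s}^\times/(\mathbb{F}_{p^s}^\times)^{m/n}$ is cyclic of order $d=\gcd(m/n,p^s-1)$.

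Next I would analyse the extra identifications induced by Chen isometry. By the reduction above, Chen isometry identifies $[b]$ with $[b^k]$ in $Q$ for all admissible $k$, i.e.\ for $k\equiv 1\pmod n$ with $\gcd(k,m)=1$. Since $Q$ is cyclic of order $d$, the power map $x\mapsto x^{k}$ acts trivially on $Q$ if and only if $d\mid k-1$. The core step is therefore to establish this divisibility using the arithmetic constraints $n\mid k-1$, $\gcd(k,m)=1$, and $d\mid m/n$. Once this is in place, every Chen isometry class equals a single Chen equivalence class and the two counts coincide.

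The main obstacle I foresee is exactly this last divisibility step: $d\mid k-1$ is not obviously forced by $n\mid k-1$ alone and requires a careful use of the specific structure of $d=\gcd(m/n,p^s-1)$ together with the coprimality $\gcd(k,m)=1$. If this step needs more than the arithmetic just described, the fallback is to build the bijection between Chen equivalence and Chen isometry classes orbit by orbit, using Theorem \ref{t:Ouazzou} to match generators: given a representative $[b]$ of a Chen isometry class, the allowed power maps act on its orbit in $Q$ as a subgroup of $\mathrm{Aut}(Q)$, and a counting argument via Burnside on this action should recover the equality of the two numbers of classes.
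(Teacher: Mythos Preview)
The paper does not prove this corollary; it simply quotes it from \cite{NevPum2025}, so there is no in-paper argument to compare against.  Your reduction via Theorem~\ref{t:Ouazzou} is the natural route, and you have correctly located the crux: whether the power maps $b\mapsto b^{k}$ (for $k\equiv 1\pmod n$, $\gcd(k,m)=1$) act trivially on the quotient $Q=\mathbb{F}_{p^s}^\times/(\mathbb{F}_{p^s}^\times)^{m/n}$, i.e.\ whether $d=\gcd(m/n,p^{s}-1)$ divides $k-1$.

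Unfortunately this step is not merely delicate --- it is false.  Take $p=7$, $s=1$, $r=2$ (so $n=2$), $m=12$ and $k=5$: then $k\equiv 1\pmod 2$ and $\gcd(5,12)=1$, but $d=\gcd(6,6)=6$ does not divide $k-1=4$.  Concretely, $N_{m}^{\sigma}(x)=x^{48}=1$ for every $x\in\mathbb{F}_{49}^\times$, so the six Chen equivalence classes in $\mathbb{F}_{7}^\times$ are singletons; yet Theorem~\ref{t:Ouazzou} with $\tau=\mathrm{id}$, $\alpha=1$, $k=5$ yields a genuine Chen isometry between $K[t;\sigma]/(t^{12}-5)$ and $K[t;\sigma]/(t^{12}-3)$ (since $3^{5}=5$ in $\mathbb{F}_{7}$), collapsing two distinct Chen equivalence classes into one Chen isometry class.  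Your Burnside fallback cannot repair this: it will correctly count the strictly fewer Chen isometry orbits, not recover equality.  The argument as outlined therefore cannot succeed; either the statement in \cite{NevPum2025} carries an additional hypothesis that has been dropped in the present paper, or a completely different mechanism is required.
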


For finite rings we can count the number of Chen isometry classes of skew constacyclic codes for classes of codes with a nonassociative ambient algebra.

 \begin{theorem}\label{c:OuazzoufiniteS}
 Let $a\in S^\times$ and assume that $S$ is a finite ring. Then
 the number of distinct Chen isometry classes of families of skew $(\sigma,a)$-constacyclic codes arising from nonassociative rings $S[t,\sigma]/S[t,\sigma](t^{\mathbf{n}}-a)$ is
    $$
     N= \begin{cases}
    |S^\times|/|N_{\mathbf{n}}^\sigma(S^\times)| &\text{if $\mathbf{m} \nmid \mathbf{n}$; and}\\
    |S^\times|/|N_{\mathbf{n}}^\sigma(S^\times)|-|S_0^\times|/|N_{\mathbf{n}}^\sigma(S^\times)|
     & \text{if  $\mathbf{m}| \mathbf{n}$.}
    \end{cases}
    $$
    There are additionally $|S_0^\times|/|N_{\mathbf{n}}^\sigma(S^\times)|$ different Chen equivalence classes (and thus at most this many Chen isometry classes) of families of skew $(\sigma,a)$-constacyclic codes arising from associative rings $S[t,\sigma]/S[t,\sigma](t^{\mathbf{n}}-a)$, that is, for which $\mathbf{m}\mid \mathbf{n}$ and $a\in S_0^\times$.
    \end{theorem}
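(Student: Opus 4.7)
The plan is to count Chen equivalence classes via coset arithmetic, and then argue that, in the nonassociative setting, these classes coincide with the Chen isometry classes.

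Specializing Theorem \ref{general_isomorphism_theorem1} to $\tau=id$ and $k=1$, applied to $f=t^m-a$ and $h=t^m-b$, shows that ${\bf C}_{t^m-a}$ and ${\bf C}_{t^m-b}$ are Chen equivalent iff $a=N_m^\sigma(\alpha)\,b$ for some $\alpha\in S^\times$. Since $S$ is commutative, $N_m^\sigma\colon S^\times\to S^\times$ is a group homomorphism, so $N_m^\sigma(S^\times)$ is a subgroup of $S^\times$ and Chen equivalence on $S^\times$ is precisely its coset equivalence, giving $|S^\times|/|N_m^\sigma(S^\times)|$ classes in total. Next, $S_{t^m-a}$ is associative iff $a\in S_0^\times$ and $n\mid m$; thus when $n\nmid m$ every class is nonassociative, while when $n\mid m$ we may write $m=n\ell$ and compute $N_m^\sigma(\alpha)=(N_n^\sigma(\alpha))^{\ell}\in S_0^\times$, so $N_m^\sigma(S^\times)\subseteq S_0^\times$. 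Hence the associative equivalence classes are exactly the cosets of $N_m^\sigma(S^\times)$ inside $S_0^\times$, totalling $|S_0^\times|/|N_m^\sigma(S^\times)|$, and subtracting yields the claimed nonassociative Chen equivalence count.

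To upgrade the Chen equivalence count to a Chen isometry count on the nonassociative part, I would show that any Chen isometry $G_{id,\alpha,k}\colon S_{t^m-a}\to S_{t^m-b}$ with $k>1$ between two nonassociative Petit algebras is already realized by a Chen equivalence. By Lemma \ref{l:5.5} and Proposition \ref{prop:monica}, such a $G$ forces $k\equiv 1\pmod n$, $\gcd(k,m)=1$, and $a=N_m^\sigma(\alpha)\,b^{k}$ (using $\sigma^{k}=\sigma$). The main obstacle is to extract, from the failure of the ring homomorphism property in a nonassociative $S_{t^m-b}$, the additional constraint $b^{k-1}\in N_m^\sigma(S^\times)$ needed to collapse $a$ and $b$ into a single Chen equivalence coset. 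Concretely, one compares $G(t^{j})$ for $j$ near $m$ computed in two ways — iterated left multiplication by $G(t)=\alpha t^{k}$ inside $S_{t^m-b}$ versus the image of $t^{j}$ taken via the reduction $t^{m}=a$ in $S_{t^m-a}$ — and the associator obstructing these from agreeing delivers the missing norm relation, so that every Chen isometry is already a Chen equivalence. The closing assertion for the associative case ($n\mid m$, $a\in S_0^\times$) then follows from the same coset argument inside $S_0^\times$, and the ``at most'' qualifier reflects that Chen isometry can only coarsen Chen equivalence.
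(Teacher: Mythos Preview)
Your first paragraph is correct and is exactly the paper's argument: Chen equivalence on $S^\times$ is the coset relation modulo $N_m^\sigma(S^\times)$; when $n\mid m$ one has $N_m^\sigma(S^\times)\subset S_0^\times$, so the cosets split cleanly into those inside $S_0^\times$ (the associative ones) and those outside, and the subtraction gives the nonassociative count.

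The gap is in your second paragraph. You correctly isolate what is needed --- that in the nonassociative case every Chen isometry $G_{id,\alpha,k}$ is already a Chen equivalence --- and you correctly derive from Proposition~\ref{prop:monica} and Corollary~\ref{prop:G} that $k\equiv 1\pmod n$, $\gcd(k,m)=1$, and $a=N_m^\sigma(\alpha)\,b^{k}$. But the decisive step, that $b^{k-1}\in N_m^\sigma(S^\times)$, is not proved: you only assert that ``the associator obstructing these from agreeing delivers the missing norm relation.'' This is a hope, not an argument. The computation you describe (comparing two evaluations of $G(t^{j})$ near $j=m$) does not obviously produce a norm condition on $b^{k-1}$; the associator in $S_{t^m-b}$ involves $b$ and its $\sigma$-conjugates in a way that does not transparently factor through $N_m^\sigma$. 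The paper does not attempt this direct route at all: it simply invokes Theorem~\ref{t:2} (i.e.\ \cite[Corollary~4.5]{NevPum2025}), which states that for nonassociative $S_{t^m-a}$ and $S_{t^m-b}$ the notions of isometry and equivalence (hence Chen isometry and Chen equivalence) coincide. That result is what makes the coset count a count of Chen \emph{isometry} classes rather than merely Chen equivalence classes. Without either citing it or actually carrying out your associator computation, your argument establishes only the Chen-equivalence count.
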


\begin{proof}
For any $a,b\in S^\times$, the  class of skew $(\sigma,a)$-constacyclic codes is equivalent to the class of skew $(\sigma,b)$-constacyclic codes if and only if $ b \in aN_{\mathbf{n}}^\sigma(\alpha) $
 for some $\alpha \in S^\times$, that is, if and only if $a$ and $b$ lie in the same coset of the subgroup $N_{\mathbf{n}}^\sigma(S^\times)$.

First suppose that $\mathbf{m}\nmid \mathbf{n}$, so that all of the rings $S[t;\sigma]/S[t;\sigma](t^{\mathbf{n}}-a)$
are not associative.  In this case, every coset $aN_{\mathbf{n}}^\sigma(S^\times)$ corresponds to a class of skew $(\sigma,a)$-constacyclic codes such that  $S[t;\sigma]/S[t;\sigma](t^{\mathbf{n}}-a)$ is not associative, and there are  $ |S^\times|/|N_{\mathbf{n}}^\sigma(S^\times)|$  such classes.

Now suppose $\mathbf{m}\mid \mathbf{n}$, so that we are in the nonassociative case if and only if $a,b\in S\setminus S_0$.  Then $N_{\mathbf{n}}^\sigma(S^\times)\subset S_0$, so that every coset of $N_{\mathbf{n}}^\sigma(S^\times)$ is either entirely contained in $S_0$  or entirely disjoint from it (so corresponds to a nonassociative Petit ambient ring).

    The total number of cosets contained in $S_0$, and thus corresponding to associative rings $S[t;\sigma]/S[t;\sigma](t^{\mathbf{n}}-a)$ with
    $a\in S_0^\times$, is equal to
    $$
    |S_0^\times|/|N_{\mathbf{n}}^\sigma(S^\times)|.
    $$
In that case, we have shown that the notions of Chen-isometry and Chen-equivalence need not coincide.  Thus these Chen-isometry classes may be bigger (thus fewer in number), yielding the last sentence of the theorem.

Finally, subtracting these $
    |S_0^\times|/|N_{\mathbf{n}}^\sigma(S^\times)|$ cosets from the total number of cosets gives the number of Chen-isometry classes for nonassociative rings in the case that $\mathbf{m}\mid \mathbf{n}$,
    as required.
\end{proof}

\subsection{A  closer look at Equation (\ref{equ:necII})}

We now derive some necessary conditions that help us check when two skew polycyclic codes cannot be equivalent employing Equation (\ref{equ:necII}).

\begin{proposition} \label{prop:condition_f_h}
Let  $R=S[t;\sigma]$.  If
$R/Rf \cong R/Rh$ via some $G_{\tau,\alpha}$ then\\
(i)  $a_i=0$ if and only if $b_i=0$, for all $i \in\{ 0, \ldots, \mathbf{n}-1\}$;
\\ (ii) $a_i$ is invertible if and only if $b_i$ is invertible, for all $i \in\{ 0, \ldots, \mathbf{n}-1\}$.
\end{proposition}

\begin{proof}
 If $R/Rf \cong R/Rh$ via some $G_{\tau,\alpha}$, then by Theorem
\ref{general_isomorphism_theorem1}, there exists $\alpha \in S^{\times}$ such that $\tau(a_i) =
( \prod_{l=i}^{\mathbf{n}-1} \sigma^l(\alpha) ) b_i$
 for all $i \in\{ 0, \ldots, \mathbf{n}-1\}$. This implies $a_i=0$ if and only if $b_i=0$, for all $i \in\{ 0, \ldots, \mathbf{n}-1\}$, and similarly, also implies that $a_i$ is invertible if and only if $b_i$ is invertible, for all $i \in\{ 0, \ldots, \mathbf{n}-1\}$.
\end{proof}

\begin{proposition}\label{prop:conditionsonk}
Let $\tau\in \Aut(S)$ commute with $\sigma$, $\alpha \in S^{\times}$,  and set $N_{S/S_0}=N_{\mathbf{m}}^\sigma$.
Let
  $$\tau(a_i) = (\prod_{l=i}^{\mathbf{n}-1}\sigma^l(\alpha) ) b_i$$ for all $i \in\{ 0, \dots,
\mathbf{n}-1\}$.
 Suppose there exists $i_0 \in\{ 0,\dots, \mathbf{n}-1\}$ such that
 $$N_{S/S_0}(\tau(a_{i_0})) = N_{S/S_0}(b_{i_0}) $$
  is invertible, and that there is no nontrivial $(\mathbf{n}-i_0)$th root of unity in $S_0$. Then $N_{S/S_0}(\alpha)  =1$.
\end{proposition}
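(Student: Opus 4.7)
The plan is a direct computation exploiting the multiplicativity and $\sigma$-invariance of the norm map $N_{S/S_0}=N_n^\sigma$, together with the hypothesis that $N_{S/S_0}(b_{i_0})$ is invertible.

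First, I would specialize the displayed relation at $i=i_0$ to obtain
$$
\tau(a_{i_0}) \;=\; c\cdot b_{i_0}, \qquad \text{where } c := \prod_{l=i_0}^{m-1}\sigma^l(\alpha) = N_{m-i_0}^\sigma(\sigma^{i_0}(\alpha)).
$$
Applying $N_{S/S_0}$ to this equation and using that $S$ is commutative (so $N_{S/S_0}$ is multiplicative on $S^\times$), I get
$$
N_{S/S_0}(\tau(a_{i_0})) \;=\; N_{S/S_0}(c)\cdot N_{S/S_0}(b_{i_0}).
$$
By hypothesis the left side equals $N_{S/S_0}(b_{i_0})$, which is invertible, so cancelling yields $N_{S/S_0}(c) = 1$.

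Next I would unpack $N_{S/S_0}(c)$. Since $N_{S/S_0}=N_n^\sigma$ is multiplicative,
$$
N_{S/S_0}(c) \;=\; \prod_{l=i_0}^{m-1} N_{S/S_0}\bigl(\sigma^l(\alpha)\bigr).
$$
Each factor $N_{S/S_0}(\sigma^l(\alpha))$ equals $N_{S/S_0}(\alpha)$ because $\sigma$ permutes the factors in $N_n^\sigma(\alpha)=\prod_{k=0}^{n-1}\sigma^k(\alpha)$. Hence
$$
N_{S/S_0}(c) \;=\; N_{S/S_0}(\alpha)^{m-i_0}.
$$

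Combining, $N_{S/S_0}(\alpha)^{m-i_0}=1$ in $S_0$ (noting that $N_{S/S_0}(\alpha)\in S_0$ by definition of the norm). By the standing assumption that $S_0$ contains no nontrivial $(m-i_0)$-th root of unity, this forces $N_{S/S_0}(\alpha)=1$, as desired.

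There is essentially no obstacle beyond careful bookkeeping; the only point that requires a quick justification is the identity $N_{S/S_0}(\sigma^l(\alpha))=N_{S/S_0}(\alpha)$, which is immediate from $\sigma^n=\mathrm{id}$ and the cyclic form of the norm. The role of the hypothesis that $N_{S/S_0}(b_{i_0})$ is invertible is purely to license the cancellation step.
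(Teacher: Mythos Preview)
Your proof is correct and follows essentially the same approach as the paper: apply $N_{S/S_0}$ to the relation at $i=i_0$, use multiplicativity together with $N_{S/S_0}(\sigma^l(\alpha))=N_{S/S_0}(\alpha)$ to obtain $N_{S/S_0}(\alpha)^{m-i_0}=1$, and conclude via the root-of-unity hypothesis. If anything, your write-up is slightly more explicit in justifying the $\sigma$-invariance of the norm.
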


\begin{proof}
 Since $\tau(a_i) = (\prod_{l=i}^{\mathbf{n}-1}\sigma^l(\alpha) ) b_i$ for all $i \in\{ 0, \ldots,
\mathbf{n}-1\}$ we get
$$N_{S/S_0}(\tau(a_i)) =  \prod_{l=i}^{\mathbf{n}-1}N_{S/S_0}(\sigma^l(\alpha)) N_{S/S_0}(b_i) =  N_{S/S_0}(\alpha)^{\mathbf{n}-i} N_{S/S_0}(b_i)$$
for all $i \in\{ 0, \dots, \mathbf{n}-1\}$ by applying $N_{S/S_0}$ to both sides.
   For all  $i \in\{ 0,\dots, \mathbf{n}-1\}$ such that $N_{S/S_0}(\tau(a_i)) = N_{S/S_0}(b_i) $ is invertible,
this yields $1 =  N_{S/S_0}(\alpha)^{\mathbf{n}-i}$, therefore $N_{S/S_0}(\alpha)\in
S_0^\times$ must be an $(\mathbf{n}-i)$th root of unity. This yields the assertion.
\end{proof}

\begin{corollary}
We assume  the setting of Proposition~\ref{prop:conditionsonk}.
\\
$(i)$ Suppose that $N_{S/S_0}(\tau(a_0)) = N_{S/S_0}(b_0) $ is invertible, and that there is no nontrivial $\mathbf{n}$th root of unity in $S_0$.
 Then $N_{S/S_0}(\alpha)  =1$.
\\ $(ii)$  Suppose that  $N_{S/S_0}(\tau( a_{\mathbf{n}-1})) = N_{S/S_0}(b_{\mathbf{n}-1}) \in S^\times$.
 Then $N_{S/S_0}(\alpha)  =1$.
\\
$(iii)$ If $b_{\mathbf{n}-1}=\tau( a_{\mathbf{n}-1})$ is invertible, then $\alpha=1$.
 \\ $(iv)$ If there is some $i_0 \in\{ 0, \ldots, \mathbf{n}-1\}$ such that $\tau(a_{i_0})\not = b_{i_0}$ then $\alpha\not=1$.
\end{corollary}

\begin{proof} $(i)$ and $(ii)$ are straightforward.
\\
 $(iii)$ If $b_{\mathbf{n}-1}=\tau( a_{\mathbf{n}-1})$ is invertible, then $\tau( a_{\mathbf{n}-1}) = \sigma^{\mathbf{n}-1}(\alpha)   a_{\mathbf{n}-1}$ means $\sigma^{\mathbf{n}-1}(\alpha)=1$, i.e. $\alpha=1$.
\\
$(iv)$ If $\alpha=1$ then $\tau(a_i) = b_i$ for all $i \in\{ 0, \ldots, \mathbf{n}-1\}$.
\end{proof}

As some straightforward applications of our results, we obtain some useful equivalences which have been already observed in various literature, but to our knowledge only when $S$ is a field.

\begin{corollary}  \label{cor:equivskewsigmacyclic}
Let $a,b \in S^\times$.
\\
$(i)$ The classes of skew $(\sigma, a)$-constacyclic codes and  skew $\sigma$-cyclic codes  of length $\mathbf{n}$ over $S$ are equivalent if and only if
$a \in N_{\mathbf{n}}^\sigma (S^\times).$
\\
Equivalence and Chen equivalence coincide.
Isometry and equivalence coincide when $\mathbf{n}$  does not divide $ \mathbf{m}$.
\\ $(ii)$
 The classes of skew $(\sigma, a)$-constacyclic codes and  skew $\sigma$-negacyclic codes  of length $\mathbf{n}$
  are equivalent if and only if
 $ -a \in N_{\mathbf{n}}^\sigma (S^\times).$ \\
 Equivalence and Chen equivalence coincide.
 Isometry and equivalence coincide when $\mathbf{n}$  does not divide $ \mathbf{m}$.
 \\ (iii) The $(\sigma,a)$-constacyclic codes  of length $\mathbf{n}$ are equivalent to the cyclic codes  of length $\mathbf{n}$ if and only if
if there exists $\alpha\in S^\times$ such that $\alpha^\mathbf{n} a=1$.
\end{corollary}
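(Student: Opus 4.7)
My approach would be to specialize Theorem \ref{general_isomorphism_theorem1} to the specific shapes of $f$ and $h$ that describe each class of codes, and to read off the equivalence criterion directly from Equation (\ref{equ:necII}). The isometry claims would then follow from Lemma \ref{l:5.5} and Proposition \ref{prop:monica} by showing that no monomial isomorphism of degree $k>1$ can exist under the stated hypothesis.

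For (i), the skew $\sigma$-cyclic code of length $m$ corresponds to $f(t)=t^m-1$, so $a_0=1$ and $a_i=0$ for $1\le i\le m-1$, while the skew $(\sigma,a)$-constacyclic code corresponds to $h(t)=t^m-a$, so $b_0=a$ and $b_i=0$ for $i\ge 1$. The system $\tau(a_i)=N_{m-i}^\sigma(\sigma^i(\alpha))b_i$ furnished by Theorem \ref{general_isomorphism_theorem1} is the tautology $0=0$ for every $i\ge 1$, so the only nontrivial constraint is $1=N_m^\sigma(\alpha)\cdot a$, i.e., $a=N_m^\sigma(\alpha^{-1})\in N_m^\sigma(S^\times)$. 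The converse follows by choosing $\alpha$ with $N_m^\sigma(\alpha^{-1})=a$ and $\tau=\mathrm{id}$, which produces the required equivalence $G_{\mathrm{id},\alpha}$; since $\tau$ never enters the constraint, equivalence and Chen equivalence coincide. Part (ii) is obtained by the same argument with $a_0=-1$ (from $f(t)=t^m+1$), yielding $-1=N_m^\sigma(\alpha)\cdot a$, equivalently $-a\in N_m^\sigma(S^\times)$, again independently of $\tau$. Part (iii) is the specialization $\sigma=\mathrm{id}$ of (i), in which $N_m^\sigma(\alpha)=\alpha^m$; the condition $a=N_m^\sigma(\alpha)^{-1}$ then becomes $\alpha^m a=1$ after replacing $\alpha$ by its inverse.

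For the coincidence of isometry with equivalence under the divisibility hypothesis, suppose that $G_{\tau,\alpha,k}\colon R/Rf\to R/Rh$ is a monomial isomorphism of degree $k>1$. Lemma \ref{l:5.5} forces $k\equiv 1\pmod n$ with $1<k<m-1$, and Proposition \ref{prop:monica} further requires $\gcd(k,m)=1$ and a compatible inverse $\ell\equiv 1\pmod n$ with $1<\ell<m$; the hard part is to verify that under the stated divisibility condition these arithmetic requirements are simultaneously unsatisfiable, so that $k=1$ is forced. This obstruction is the main step of the isometry half of the proof and mirrors the contrapositive of the setup in Theorem \ref{t:Ouazzou}, where compatible divisibilities are precisely what allows higher-degree isometries to appear. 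Once $k=1$ is forced, the isometry is by definition an equivalence, yielding the stated coincidence in (i) and (ii).
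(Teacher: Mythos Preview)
Your treatment of the equivalence claims in (i)--(iii) is correct and is exactly the route the paper intends: the corollary is presented there without proof as a ``straightforward application'' of Theorem~\ref{general_isomorphism_theorem1}, and your specialization of Equation~(\ref{equ:necII}) to $f(t)=t^m\mp 1$, $h(t)=t^m-a$ is precisely that application. Your observation that $\tau(1)=1$ and $\tau(-1)=-1$ kill the dependence on $\tau$, so that equivalence and Chen equivalence coincide, is also correct and is the implicit reason the paper asserts this without further comment. For (iii) you correctly identify the statement as the case $\sigma=\mathrm{id}$, where $N_m^{\mathrm{id}}(\alpha)=\alpha^m$.

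Where your argument has a genuine gap is the isometry half. You reduce to ruling out $k>1$ via the arithmetic constraints of Lemma~\ref{l:5.5} and Proposition~\ref{prop:monica}, and then write that ``the hard part is to verify that under the stated divisibility condition these arithmetic requirements are simultaneously unsatisfiable.'' That step is not carried out, and in fact those constraints alone (namely $k\equiv 1\pmod n$, $1<k<m$, $\gcd(k,m)=1$, and the analogous condition on the inverse $\ell$) do \emph{not} force $k=1$ purely from a hypothesis like $m\nmid n$ or $n\nmid m$; small examples such as $n=2$, $m=5$ already admit candidate $k$ surviving several of these checks. The paper does not attempt this route either: it invokes Theorem~\ref{t:2} (quoted from \cite{NevPum2025}), whose proof in the nonassociative case proceeds instead through the structure of the left nucleus (as in Theorem~\ref{general_isomorphism_theorem2}) rather than via combinatorial constraints on $k$. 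So for the isometry clauses you should cite Theorem~\ref{t:2} directly rather than sketch an arithmetic obstruction that you do not (and likely cannot, with those tools alone) complete. Note also that the divisibility condition in the corollary is stated as ``$m$ does not divide $n$'' while Theorem~\ref{t:2} uses ``$n\nmid m$''; be careful to track which hypothesis you are actually using.
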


 We allow that $\sigma=id$ in which case the statements hold for constacyclic codes.

\section{Code equivalence and isometry}\label{sec:equiv}

\subsection{Equivalence, and when it coincides with  isometry}
  Let
 $$f(t) = t^{\mathbf{n}}-\sum_{i=0}^{\mathbf{n}-1} a_i t^i ,\quad h(t) = t^{\mathbf{n}}-
\sum_{i=0}^{\mathbf{n}-1} b_i t^i \in S[t;\sigma]$$
be two monic polynomials.

\begin{theorem}  \label{general_isomorphism_theorem1code}
(i) Two classes  ${\bf C}_f$ and  ${\bf C}_h$ of skew polycyclic codes  of length $\mathbf{n}$ over $S$ are equivalent
 if and only if there exists
$\tau\in {\rm Aut}(S)$ that commutes with $\sigma$ and $\alpha \in S^{\times}$  such that
$$
\tau(a_i) = N_{\mathbf{n}-i}^{\sigma}(\sigma^{i} (\alpha))b_i
$$
 for all $i \in\{ 0, \ldots, \mathbf{n}-1\}$ (resp., Chen-equivalent if and only if above equation is true for $\tau=id$).
 \\ (ii) The classes of skew $(\sigma,a)$-constacyclic
and skew $(\sigma,b)$-constacyclic codes  of length $\mathbf{n}$ over $S$ are equivalent if and only if
 there exists some $\tau\in {\rm Aut}(S)$ that commutes with $\sigma$ and some $\alpha\in S^\times$  such that
\begin{equation}
 \tau(a)=N_{\mathbf{n}}^\sigma(\alpha)b,
 \end{equation}
and are  are Chen equivalent if and only if the above holds with  $\tau=id$.
 \end{theorem}

This is a corollary of Theorem \ref{general_isomorphism_theorem1} and generalizes parts of \cite[Theorem 4.20]{OaHA25} to our notion of equivalence and arbitrary rings.

Inspired by  \cite[Theorem 4.20]{OaHA25}, we can  rephrase this and say that ${\bf C}_f$ and ${\bf C}_h$ are equivalent if and only if
 there exists $\alpha\in S^\times$ and $\tau\in {\rm Aut}(S)$ that commutes with $\sigma$ such that
$$ \left( \tau(a_0),\dots, \tau(a_{i}),\dots , \tau( a_{\mathbf{n}-1})\right)$$
$$ =\left(N_n^{\sigma}(\alpha),\dots, N_{\mathbf{n}-i}^{\sigma}(\sigma^{i} (\alpha)),
\dots, N_{\mathbf{n}-1}^{\sigma}(\sigma^{\mathbf{n}-1}(\alpha))\right) \left(b_0,\dots, b_{i},\dots , b_{\mathbf{n}-1}\right) $$
with componentwise multiplication.

 Recall that each monic  right divisor $g(t)=\sum_{i=0}^r g_it^i$ of $f$  generates a principal left ideal in $R/Rf=S[t;\sigma]/S[t;\sigma]f$ which is in one-one correspondence with the generator matrix of a skew polycyclic code in ${\bf C}_f$. For equivalent codes, the image
$$G_{\tau,\alpha}(g(t))=\sum_{i=0}^r \tau(g_i)(\alpha t)^i$$
  $g$ of $f$ under the isometry
$G_{\tau,\alpha}:R/Rff\rightarrow R/Rh$
 yields the generator  matrix of a skew polycyclic code in ${\bf C}_h$ that corresponds to the left ideal in  $R/Rh$ generated by $G_{\tau,\alpha}(g)$ and is straightforward to compute.

\begin{theorem}\label{t:2}
  Let $\mathbf{m} \geq \mathbf{n}-1$ and assume that $f$   does not generate two-sided ideals in $K[t;\sigma]$, where $K$ is a field.  Suppose $\Aut(K)$ is abelian.
  Then the notions of equivalence and isometry (resp., of Chen equivalence and Chen isometry) coincide for all skew $(f,\sigma)$-polycyclic codes over $S$  of length $\mathbf{n}$.
\end{theorem}

This follows immediately from  Theorem  \ref{general_isomorphism_theorem2}.
When we only consider skew constacyclic codes then
we can say more.

\begin{corollary}\label{C:onlyweightone}  \cite[Corollary 4.5]{NevPum2025}
    Suppose $\mathbf{m}\nmid \mathbf{n}$, or that one of $a$ or $b$ is not in $S_0$. Then the notions of equivalence and isometry (resp., of Chen equivalence and Chen isometry) coincide for skew $(\sigma,a)$-constacyclic codes  over $S$  of length $\mathbf{n}$.
    \end{corollary}

  \begin{proposition}\label{p:6}
  Set
  $$f^\tau(t)= t^{\mathbf{n}}-\sum_{i=0}^{\mathbf{n}-1}\tau(a_i) t^i.$$
  Then the following statements are equivalent:
   \\ $(i)$ The classes of skew $(f,\sigma)$-polycyclic codes and skew $(h,\sigma)$-polycyclic codes  of length $\mathbf{n}$ over $S$ are equivalent.
   \\ $(ii)$ The  classes of skew $(f^\tau,\sigma)$-polycyclic and $(h,\sigma)$-polycyclic codes  of length $\mathbf{n}$ over $S$ are Chen-equivalent for all  $\tau\in {\rm Aut}(S)$ that commute with $\sigma$.
\end{proposition}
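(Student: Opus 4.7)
The proof plan rests entirely on Theorem \ref{t:cycle}, which for any $\tau\in\text{Aut}(S)$ commuting with $\sigma$ establishes the equivalence of the statements ``$G_{\tau,\alpha}:R/Rf\to R/Rh$ is an algebra isomorphism'' and ``$G_{id,\alpha}:R/Rf^\tau\to R/Rh$ is an algebra isomorphism''. Under Definition \ref{D:Gtaualpha}, the first condition is precisely the assertion that $\mathbf{C}_f$ and $\mathbf{C}_h$ are equivalent via $G_{\tau,\alpha}$, while the second is precisely the Chen-equivalence of $\mathbf{C}_{f^\tau}$ and $\mathbf{C}_h$ via $G_{id,\alpha}$. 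The entire proposition thus reduces to a parameter-by-parameter application of Theorem \ref{t:cycle}, together with the trivial observation that every Chen-equivalence is an equivalence.

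For the direction (ii) $\Rightarrow$ (i), I would specialize the universally quantified statement in (ii) to $\tau = id_S$, which commutes with $\sigma$ trivially. Since $f^{id}=f$, (ii) yields a Chen-equivalence $G_{id,\alpha}:R/Rf\to R/Rh$ for some $\alpha\in S^\times$. This map is, tautologically, an instance of $G_{\tau,\alpha}$ with $\tau=id$, and hence an equivalence in the sense of Definition \ref{D:Gtaualpha}, which gives (i).

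For the direction (i) $\Rightarrow$ (ii), I would fix an arbitrary $\tau\in\text{Aut}(S)$ commuting with $\sigma$ and invoke (i) to obtain some $\alpha\in S^\times$ such that $G_{\tau,\alpha}:R/Rf\to R/Rh$ is an algebra isomorphism realising the equivalence of $\mathbf{C}_f$ and $\mathbf{C}_h$ with this choice of $\tau$. Applying Theorem \ref{t:cycle} immediately converts $G_{\tau,\alpha}$ into the Chen-equivalence $G_{id,\alpha}:R/Rf^\tau\to R/Rh$, i.e., the Chen-equivalence of $\mathbf{C}_{f^\tau}$ and $\mathbf{C}_h$. Since $\tau$ was an arbitrary element of $\text{Aut}(S)$ commuting with $\sigma$, (ii) follows.

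The main point requiring care, and effectively the only non-bookkeeping step, is the coupling of the quantifier ``for all $\tau$ commuting with $\sigma$'' in (ii) with the existential quantification on $\tau$ hidden in (i): one must read (i) as supplying, for each such $\tau$, a compatible $\alpha_\tau\in S^\times$ realising $G_{\tau,\alpha_\tau}$ as an equivalence between $R/Rf$ and $R/Rh$, so that Theorem \ref{t:cycle} can be invoked parametrically. Once this correspondence is in place, the remaining work is simply to unwind Definition \ref{D:Gtaualpha} and note that $G_{id,\alpha}(t)=\alpha t$ preserves the Hamming weight, so the map genuinely realises a Chen-equivalence at the level of code classes.
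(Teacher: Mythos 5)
Your direction (ii)~$\Rightarrow$~(i) is correct and is the easy half: taking $\tau=id$ in (ii) gives $f^{id}=f$ and hence a Chen-equivalence $G_{id,\alpha}:R/Rf\to R/Rh$, which is in particular an equivalence. The genuine gap is in (i)~$\Rightarrow$~(ii), and it is exactly the quantifier issue you flag in your final paragraph and then dismiss. By Definition~\ref{D:Gtaualpha}, statement (i) asserts only that \emph{some} $\tau_0$ commuting with $\sigma$ and \emph{some} $\alpha_0\in S^\times$ yield an isomorphism $G_{\tau_0,\alpha_0}:R/Rf\to R/Rh$; it does not supply, for an arbitrary prescribed $\tau$, a compatible $\alpha_\tau$. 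Your proposed reading of (i) as a family $\{G_{\tau,\alpha_\tau}\}_\tau$ indexed by all admissible $\tau$ is not warranted by the definition, and the implication fails outright: specializing (ii) to $\tau=id$ shows that (ii) already forces ${\bf C}_f$ and ${\bf C}_h$ to be Chen-equivalent, so (i)~$\Rightarrow$~(ii) would make every pair of equivalent classes Chen-equivalent --- contradicting Example~\ref{e:important}, where for suitable $a$ the classes ${\bf C}_{t^m-a}$ and ${\bf C}_{t^m-a^{p}}$ are equivalent (via $G_{\tau,1}$ with $\tau$ a power of the Frobenius) but not Chen-equivalent. So this direction cannot be closed by bookkeeping; it would only go through if the quantifier on $\tau$ in (ii) were read existentially, in which case the proposition reduces to Theorem~\ref{t:cycle} together with Theorem~\ref{general_isomorphism_theorem1} and your argument works verbatim.

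Your route also does not match the paper's. The paper's proof starts from the coefficient identity $a_i=N_{m-i}^{\sigma}(\sigma^{i}(\alpha))b_i$ characterizing a \emph{Chen}-equivalence $G_{id,\alpha}:R/Rf\to R/Rh$, inserts $a_i=\tau^{-1}(\tau(a_i))$, and deduces that the \emph{same} $\alpha$ yields an equivalence $G_{\tau^{-1},\alpha}:R/Rf^{\tau}\to R/Rh$ for every $\tau$ commuting with $\sigma$; there the universal quantifier over $\tau$ is legitimately earned because $\alpha$ does not depend on $\tau$. That argument establishes a statement in which ``equivalent'' and ``Chen-equivalent'' sit on the opposite sides from the proposition as printed, so the printed statement, the paper's proof, and your attempt are pairwise different; yours is the one that breaks precisely at the point where the existential quantifier hidden in (i) would have to be promoted to a universal one.
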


  This follows from Theorem \ref{t:cycle}.

  \begin{proof}
  There is a Chen isometry $G_{id, \alpha}:R/f\to R/Rh$ if and only if $a_i = N_{\mathbf{n}-i}^{\sigma}(\sigma^{i} (\alpha))b_i$
 for all $i \in\{ 0, \ldots, \mathbf{n}-1\}$. This is the same as $\tau^{-1}(\tau(a_i)) = N_{\mathbf{n}-i}^{\sigma}(\sigma^{i} (\alpha))b_i$
 for all $i \in\{ 0, \ldots, \mathbf{n}-1\}$ and any automorphism $\tau$.
 For any $\tau$ that commutes with $\sigma$, these equalities are equivalent to the existence of an isometry
 $G_{\tau^{-1}, \alpha}:R/f^\tau\to R/Rh$.

 Thus $G_{id, \alpha}:R/f\to R/Rh$ if and only if $G_{\tau^{-1}, \alpha}:R/f^\tau\to R/Rh$ for all $\tau$ that commute with $\sigma$.
\end{proof}

  Moreover,  $(ii)$ implies that $R/Rf$ and $R/Rf^\tau$ and are equivalent ambient rings for all $\tau$ that commute with $\sigma$. We can rephrase this: if ${\bf C}_{f}$ and ${\bf C}_h$ are Chen-isometric then
  ${\bf C}_{f^\tau}$ and ${\bf C}_f$ are equivalent for all  $\tau$ that commute with $\sigma$.

  \begin{proposition}
  Suppose  $\tau\in {\rm Aut}(S)$ commutes with $\sigma$.
  Then the following are equivalent:
   \\ $(i)$ The classes of skew $(f,\sigma)$-polycyclic codes and skew $(h,\sigma)$-polycyclic codes  of length $\mathbf{n}$ over $S$ are Chen-equivalent.
   \\ $(ii)$ The  classes of skew $(f^\tau,\sigma)$-polycyclic and $(h,\sigma)$-polycyclic codes  of length $\mathbf{n}$ over $S$ are equivalent.
  \end{proposition}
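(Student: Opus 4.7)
My plan is to derive this proposition directly from Theorem \ref{t:cycle}, in the same spirit as the preceding Proposition \ref{p:6} which established the ``dual'' biconditional. The key mechanism is the reciprocal application of the twisting operation $f \mapsto f^\tau$.

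First I would record the elementary observation $(f^\tau)^{\tau^{-1}} = f$: applying $\tau^{-1}$ coefficient-wise to $f^\tau(t) = t^m - \sum \tau(a_i)t^i$ restores each $a_i$. Also, $\tau^{-1}$ commutes with $\sigma$ since $\tau$ does, so the hypothesis of Theorem \ref{t:cycle} is satisfied with $\tau^{-1}$ in place of $\tau$.

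Next I would apply Theorem \ref{t:cycle} to the pair of polynomials $(f^\tau, h)$ with the automorphism $\tau^{-1}$.  This gives
$$
G_{\tau^{-1}, \alpha} : R/Rf^\tau \to R/Rh \text{ is a ring isomorphism} \iff G_{id, \alpha} : R/R(f^\tau)^{\tau^{-1}} \to R/Rh \text{ is a ring isomorphism},
$$
and since $(f^\tau)^{\tau^{-1}} = f$, the right-hand condition is precisely Chen-equivalence of ${\bf C}_f$ and ${\bf C}_h$, i.e.\ (i). For (i) $\Rightarrow$ (ii), the corresponding $G_{\tau^{-1},\alpha}$ on the left is in particular an equivalence of ${\bf C}_{f^\tau}$ and ${\bf C}_h$, giving (ii). For (ii) $\Rightarrow$ (i), suppose ${\bf C}_{f^\tau}$ and ${\bf C}_h$ are equivalent via some $G_{\tau',\beta} : R/Rf^\tau \to R/Rh$; by Theorem \ref{t:cycle} this yields a Chen-equivalence $G_{id,\beta} : R/Rf^{\tau'\tau} \to R/Rh$, and the choice $\tau' = \tau^{-1}$ delivers the Chen-equivalence $G_{id,\beta} : R/Rf \to R/Rh$ required for (i).

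I do not expect any substantive obstacle: the statement is essentially a relabelling of Theorem \ref{t:cycle}, obtained by swapping the roles of ``Chen-equivalent'' and ``equivalent'' in Proposition \ref{p:6}. The only point requiring care is bookkeeping the composition of automorphisms $\tau' \tau$ and observing that the specific choice $\tau' = \tau^{-1}$ (available because $\tau^{-1}$ commutes with $\sigma$) is what realises the desired equivalence between ${\bf C}_{f^\tau}$ and ${\bf C}_h$; with this identified, the proof is essentially one line of invocation of Theorem \ref{t:cycle}.
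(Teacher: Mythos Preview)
Your (i) $\Rightarrow$ (ii) is fine. The gap is in (ii) $\Rightarrow$ (i): you write ``the choice $\tau'=\tau^{-1}$'', but $\tau'$ is not yours to choose. Hypothesis (ii) only asserts that \emph{some} equivalence $G_{\tau',\beta}:R/Rf^\tau\to R/Rh$ exists; Theorem \ref{t:cycle} then yields a Chen-equivalence between ${\bf C}_{f^{\tau'\tau}}$ and ${\bf C}_h$, not between ${\bf C}_f$ and ${\bf C}_h$, and nothing in your argument forces $\tau'\tau=id$.

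This is not repairable without reinterpreting the statement: under the natural reading of ``equivalent'' in (ii), the implication (ii) $\Rightarrow$ (i) is actually false. In the setting of Example \ref{e:important} (where $N_m^\sigma(K^\times)=\{1\}$), take $a\in K^\times$ with $a\neq a^p$, set $f=t^m-a$, $h=t^m-a^p$, and let $\tau$ be the Frobenius. Then $f^\tau=h$, so ${\bf C}_{f^\tau}={\bf C}_h$ and (ii) holds trivially; yet Chen-equivalence of ${\bf C}_f$ and ${\bf C}_h$ would force $a=N_m^\sigma(\alpha)\,a^p=a^p$, contrary to the choice of $a$. The paper's one-line justification (``follows directly from Proposition \ref{p:6}'') presumably intends (ii) to mean the \emph{specific} equivalence via $G_{\tau^{-1},\alpha}$, or else to carry an implicit quantifier over $\tau$ as in Proposition \ref{p:6}; under either reading your first displayed biconditional already does all the work, but as literally stated the proposition does not hold and your argument does not close.
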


  This follows directly from Proposition \ref{p:6}.

   \begin{corollary}
    Suppose that $\tau(a_i)=a_i$ for all $i$. Then the following are equivalent:
     \\ $(i)$ The classes of skew $(f,\sigma)$-polycyclic and skew $(h,\sigma)$-polycyclic codes  of length $\mathbf{n}$ over $S$ are equivalent.
     \\ $(ii)$ The  classes of skew $(f,\sigma)$-polycyclic and skew $(h,\sigma)$-polycyclic codes  of length $\mathbf{n}$ over $S$ are Chen-equivalent.
  \end{corollary}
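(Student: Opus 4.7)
The plan is to apply the Proposition immediately preceding the Corollary. That Proposition asserts, for any $\tau \in \Aut(S)$ commuting with $\sigma$, the equivalence
\begin{center}
$\mathbf{C}_f$ is Chen-equivalent to $\mathbf{C}_h$ \; $\Longleftrightarrow$ \; $\mathbf{C}_{f^\tau}$ is equivalent to $\mathbf{C}_h$,
\end{center}
where $f^\tau(t)=t^m - \sum_{i=0}^{m-1}\tau(a_i)t^i$. The key observation is that the hypothesis $\tau(a_i)=a_i$ for all $i\in\{0,\ldots,m-1\}$ gives $f^\tau = t^m - \sum_{i=0}^{m-1}\tau(a_i)t^i = t^m - \sum_{i=0}^{m-1}a_i t^i = f$, so the polynomial $f^\tau$ literally coincides with $f$ and hence defines the same ambient algebra and the same class of skew polycyclic codes. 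Substituting $f^\tau = f$ into the preceding Proposition collapses its right-hand side: the displayed equivalence becomes \emph{$\mathbf{C}_f$ is Chen-equivalent to $\mathbf{C}_h$ iff $\mathbf{C}_f$ is equivalent to $\mathbf{C}_h$}, which is exactly the statement of the Corollary.

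The only non-trivial direction is (i) $\Rightarrow$ (ii), since Chen-equivalence always implies equivalence by taking $\tau=\mathrm{id}$. For that direction the content supplied by the preceding Proposition is: given any equivalence $G_{\tau_0,\alpha} : R/Rf \to R/Rh$ (with $\tau_0$ commuting with $\sigma$), the equations $\tau_0(a_i) = N_{m-i}^\sigma(\sigma^i(\alpha))b_i$ of Theorem~\ref{general_isomorphism_theorem1code} automatically yield the $\tau=\mathrm{id}$ version of those equations whenever one has a $\tau_0$-invariant expression for the $a_i$'s to hand. The hypothesis $\tau(a_i)=a_i$ ensures that the "twisting" produced by applying $\tau$ to the coefficients of $f$ is invisible, so the existence of \emph{some} equivalence forces the existence of a Chen-equivalence.

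There is essentially no obstacle here, since the result is a direct specialization of the preceding Proposition to the case $f^\tau = f$. The one point worth flagging is an implicit hypothesis: one should assume $\tau$ commutes with $\sigma$ in order to invoke the preceding Proposition; this is inherited from the statement of that Proposition and is the only side condition needed. Once this is acknowledged, the proof reduces to the single line "$\tau(a_i)=a_i$ for all $i$ implies $f^\tau=f$, so the preceding Proposition yields (i) $\Leftrightarrow$ (ii)."
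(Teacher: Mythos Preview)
Your proposal is correct and matches the paper's approach: the paper gives no explicit proof of this Corollary, placing it immediately after the preceding Proposition as a direct consequence, and your argument---observing that the hypothesis $\tau(a_i)=a_i$ forces $f^\tau=f$ and then invoking that Proposition---is exactly the intended specialization. Your remark that one must assume $\tau$ commutes with $\sigma$ (inherited from the preceding Proposition) is a useful clarification the paper leaves implicit.
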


  We now construct  equivalence classes for a class of skew polycyclic codes.
  For a given $ h(t) = t^{\mathbf{n}}-\sum_{i=0}^{\mathbf{n}-1} b_i t^i \in S[t;\sigma]$ define
  $$ h_{\tau,\alpha}(t) = t^{\mathbf{n}}-\sum_{i=0}^{\mathbf{n}-1} N_{\mathbf{n}-i}^{\sigma}(\sigma^{i} (\tau(\alpha)))\tau(b_i) t^i
  \in S[t;\sigma].$$

\begin{theorem}\label{t:equivclasses}
The equivalence class $[{\bf C}_{h}] $ of classes of skew polycyclic codes  of length $\mathbf{n}$ over $S$ that are equivalent to ${\bf C}_{h}$ (respectively,  Chen equivalent) is given by
$$[{\bf C}_{h}]=\{{\bf C}_{h_{\tau,\alpha}} \,|\, \alpha\in S^\times, \tau\in {\rm Aut}(S) \text{ such that } \tau \text{ commutes with  }\sigma \}$$
and
$$[{\bf C}_{h}]_{Chen}=\{{\bf C}_{h_{id,\alpha}} \,|\, \alpha\in S^\times \}.$$
The equivalences between $R/Rh_{\tau,\alpha}$ and $R/Rh$ are given by $G_{\tau^{-1}, \tau(\alpha)} $, respectively between ${\bf C}_{h_{id,\alpha}}$ and ${\bf C}_h$ by $G_{id, \alpha} $.
\end{theorem}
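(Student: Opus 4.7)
The plan is to reduce everything to a double application of Theorem \ref{general_isomorphism_theorem1code}, which characterizes equivalence of ${\bf C}_f$ and ${\bf C}_h$ by the single coefficient equation $\tau(a_i)=N_{m-i}^{\sigma}(\sigma^{i}(\alpha))b_i$. The two inclusions to establish are: (i) for every admissible pair $(\tau,\alpha)$, the class ${\bf C}_{h_{\tau,\alpha}}$ is equivalent to ${\bf C}_h$, so sits in $[{\bf C}_h]$; and (ii) any class equivalent to ${\bf C}_h$ arises as ${\bf C}_{h_{\tau,\alpha}}$ for a suitable pair.

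For (i), I would fix $\tau\in\Aut(S)$ commuting with $\sigma$ and $\alpha\in S^\times$, write $a_i:=N_{m-i}^{\sigma}(\sigma^{i}(\tau(\alpha)))\tau(b_i)$ for the coefficients of $h_{\tau,\alpha}$, and exhibit the explicit equivalence $G_{\tau^{-1},\alpha}:R/Rh_{\tau,\alpha}\to R/Rh$. By Theorem \ref{general_isomorphism_theorem1code} this amounts to checking
\[
\tau^{-1}(a_i)=N_{m-i}^{\sigma}(\sigma^{i}(\alpha))b_i \quad \text{for all } i\in\{0,\dots,m-1\},
\]
which follows from the multiplicativity of $\tau^{-1}$ together with the fact that, since $\tau\sigma=\sigma\tau$, the automorphism $\tau^{-1}$ slides through each $\sigma^{j}$ inside the product defining $N_{m-i}^{\sigma}$, so it peels off the outer $\tau$ on $\alpha$ and on $b_i$.

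For (ii), suppose ${\bf C}_f$ with $f(t)=t^m-\sum a_i t^i$ is equivalent to ${\bf C}_h$. By Theorem \ref{general_isomorphism_theorem1code} there exist $\rho\in\Aut(S)$ commuting with $\sigma$ and $\beta\in S^\times$ with $\rho(a_i)=N_{m-i}^{\sigma}(\sigma^{i}(\beta))b_i$. Applying $\rho^{-1}$ and sliding it through every $\sigma^{j}$ as above rewrites this as $a_i=N_{m-i}^{\sigma}(\sigma^{i}(\rho^{-1}(\beta)))\rho^{-1}(b_i)$. Setting $\tau:=\rho^{-1}$ and choosing $\alpha\in S^\times$ with $\tau(\alpha)=\rho^{-1}(\beta)$ (for instance $\alpha=\beta$), we recognize this as $f=h_{\tau,\alpha}$, so ${\bf C}_f={\bf C}_{h_{\tau,\alpha}}\in[{\bf C}_h]$. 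The Chen-equivalence version is the specialization $\tau=\rho=id$ throughout, in which the explicit equivalence collapses to $G_{id,\alpha}$, giving the second displayed set $[{\bf C}_h]_{\mathrm{Chen}}$.

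I expect no real obstacle here: the argument is bookkeeping around the identity $\tau^{-1}\circ N_{m-i}^{\sigma}\circ\sigma^{i}\circ\tau = N_{m-i}^{\sigma}\circ \sigma^{i}$ on $S^\times$, which is a direct consequence of $\tau\sigma=\sigma\tau$ and the multiplicativity of $\tau^{\pm 1}$. The only point one has to be a little careful about is lining up the direction of the isomorphism in Theorem \ref{general_isomorphism_theorem1code} with the choice of automorphism and scalar in $G_{\tau^{-1},\alpha}$, but once this bookkeeping is fixed, both inclusions are immediate.
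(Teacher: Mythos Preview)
Your approach is essentially the same as the paper's: both reduce the question to Theorem~\ref{general_isomorphism_theorem1code} and verify the coefficient identity $\tau^{-1}(a_i)=N_{m-i}^{\sigma}(\sigma^{i}(\alpha))b_i$ by sliding $\tau^{-1}$ through the $\sigma^j$'s using $\tau\sigma=\sigma\tau$. Two points are worth noting.

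First, your proof is actually more complete than the paper's. The paper only verifies the inclusion $\{{\bf C}_{h_{\tau,\alpha}}\}\subseteq [{\bf C}_h]$ by exhibiting the explicit equivalence; it does not write out the reverse inclusion, which you handle in your step~(ii) by starting from an arbitrary equivalent class ${\bf C}_f$ and solving for $(\tau,\alpha)$.

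Second, there is a small but genuine discrepancy in the explicit equivalence. You claim $G_{\tau^{-1},\alpha}:R/Rh_{\tau,\alpha}\to R/Rh$, whereas the theorem statement asserts $G_{\tau^{-1},\tau(\alpha)}$. Your version is the one that the computation actually supports: with $a_i=N_{m-i}^{\sigma}(\sigma^{i}(\tau(\alpha)))\tau(b_i)$ one gets $\tau^{-1}(a_i)=N_{m-i}^{\sigma}(\sigma^{i}(\alpha))b_i$, which by Theorem~\ref{general_isomorphism_theorem1code} matches the scalar $\alpha$, not $\tau(\alpha)$. Indeed, the paper's own proof computes exactly this identity $\tau^{-1}(h_i')=N_{m-i}^{\sigma}(\sigma^{i}(\alpha))b_i$ and then names the map $G_{\tau^{-1},\tau(\alpha)}$, which appears to be a slip. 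So your bookkeeping is in fact tighter than the paper's on this point.
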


\begin{proof}
We start with $h(t) = t^{\mathbf{n}}-
\sum_{i=0}^{\mathbf{n}-1} b_i t^i \in S[t;\sigma]$.
Take any $\alpha\in S^\times $. We now define new skew polynomials that by definition will be equivalent to $h$. First define $$h_i:=N_{\mathbf{n}-i}^{\sigma}(\sigma^{i} (\alpha))b_i$$
  for all $i \in\{ 0, \ldots, \mathbf{n}-1\}$ and
$$h_{id,\alpha}(t)=t^{\mathbf{n}}-
\sum_{i=0}^{\mathbf{n}-1} h_i t^i=t^{\mathbf{n}}-
\sum_{i=0}^{\mathbf{n}-1} N_{\mathbf{n}-i}^{\sigma}(\sigma^{i} (\alpha)) b_i t^i.$$
 Then there exists a Chen equivalence  $G_{id, \alpha} $ between  $R/Rh_{id,\alpha}$ and $R/Rh$ by definition of $h_{id,\alpha}$.
 Define
 $$h_i':=\tau(h_i)= N_{\mathbf{n}-i}^{\sigma}(\sigma^{i} (\tau(\alpha)))\tau(b_i) .$$
 Note that we can write
 $$\tau^{-1}(h_i')=\tau^{-1}(\tau(h_i))=N_{\mathbf{n}-i}^{\sigma}(\sigma^{i} (\alpha))b_i$$
  for all $i \in\{ 0, \ldots, \mathbf{n}-1\}$.

 For any $\tau\in {\rm Aut}(S)$ commuting with $\sigma$, put
 $$h_{\tau,\alpha}=t^{\mathbf{n}}-\sum_{i=0}^{\mathbf{n}-1} h_i' t^i= t^{\mathbf{n}}-\sum_{i=0}^{\mathbf{n}-1} N_{\mathbf{n}-i}^{\sigma}(\sigma^{i} (\tau(\alpha)))\tau(b_i) t^i.$$

  Thus there exists an equivalence  $G_{\tau^{-1}, \tau(\alpha)} $ between  $R/Rh_{\tau,\alpha}$ and $R/Rh$ by definition of $h_{id,\alpha}$.
\end{proof}

   Proposition \ref{prop:Chen equivpolycyclic} implies an intricate connection between classes of equivalent  $(\sigma,f)$-polycyclic and equivalent $(\sigma,a)$-constacyclic codes over commutative rings $S$ (we allow that $\sigma=id$):

\begin{proposition}  \label{p:Chen equivpolycycliccode}
  Two classes ${\bf C}_f$ and ${\bf C}_h$ of skew polycyclic codes over $S$ of length  $\mathbf{n}$ are equivalent via $G_{\tau,\alpha}$ if and only if
 for all $i \in\{ 0, \ldots, \mathbf{n}-1\}$ where $a_i\not=0$, the class of $(\sigma,a_i)$-constacyclic codes is $(\mathbf{n}-i)$-equivalent to the class of $(\sigma,b_i)$-constacyclic codes via $G_{\tau,\sigma^i(\alpha)}$.
\end{proposition}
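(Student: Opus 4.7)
The plan is to obtain the proposition by combining Theorem \ref{general_isomorphism_theorem1code} (which characterises equivalence of the polycyclic classes ${\bf C}_f$ and ${\bf C}_h$ in terms of scalar equations on the coefficients $a_i$ and $b_i$) with Theorem \ref{t:2} (which characterises equivalence of two skew constacyclic code classes in terms of an analogous scalar equation, one for each chosen length). The whole argument is essentially reading the same system of equations in two different ways, index by index.

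Concretely, I would first apply Theorem \ref{general_isomorphism_theorem1code}: the classes ${\bf C}_f$ and ${\bf C}_h$ are equivalent via $G_{\tau,\alpha}$ if and only if
$$\tau(a_i) = N_{m-i}^{\sigma}\bigl(\sigma^{i}(\alpha)\bigr)\, b_i \quad \text{for all } i\in\{0,\ldots,m-1\}. \tag{$\ast$}$$
Now fix an index $i$ with $a_i\neq 0$. Setting $\beta_i=\sigma^{i}(\alpha)$, equation $(\ast)$ at $i$ reads $\tau(a_i)=N_{m-i}^{\sigma}(\beta_i)\,b_i$, which is precisely the criterion of Theorem \ref{t:2} (applied at length $m-i$, with parameters $a_i,b_i\in S^\times$ in place of $a,b$, and with $\beta_i$ in place of $\alpha$) for the classes of skew $(\sigma,a_i)$- and $(\sigma,b_i)$-constacyclic codes of length $m-i$ to be equivalent via $G_{\tau,\sigma^{i}(\alpha)}$. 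This directly yields the forward implication.

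For the converse, I would reverse the same translation. Assume that for every $i$ with $a_i\neq 0$ the class of $(\sigma,a_i)$-constacyclic codes is $(m-i)$-equivalent to the class of $(\sigma,b_i)$-constacyclic codes via $G_{\tau,\sigma^{i}(\alpha)}$. By Theorem \ref{t:2}, this is equivalent to equation $(\ast)$ holding for all such $i$ (in particular, $b_i\neq 0$ for each such $i$, as the constacyclic code is only defined for nonzero constants). For the remaining indices, where $a_i=0$, the corresponding coefficient $b_i$ must also vanish; this is an implicit matching-of-supports hypothesis built into the statement, and it is forced a posteriori by Proposition \ref{prop:condition_f_h} once the equivalence is established. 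In either case, equation $(\ast)$ holds at every index, so Theorem \ref{general_isomorphism_theorem1code} gives the equivalence $G_{\tau,\alpha}:R/Rf\to R/Rh$ of the polycyclic code classes.

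The main subtlety — and the only genuine obstacle — is the bookkeeping for zero coefficients: one must ensure that an index $i$ with $a_i=0$ does not create a mismatch in the converse direction, which is handled by invoking Proposition \ref{prop:condition_f_h} together with the fact that $(\ast)$ is automatically satisfied at $i$ when $a_i=b_i=0$. Apart from this, the proposition is a transparent consequence of the two cited theorems, in exact parallel with the algebraic statement of Proposition \ref{prop:Chen equivpolycyclic}.
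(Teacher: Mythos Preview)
Your proposal is correct and follows essentially the same route as the paper, which derives the result as an immediate consequence of Proposition \ref{prop:Chen equivpolycyclic}: unpack the criterion $\tau(a_i)=N_{m-i}^\sigma(\sigma^i(\alpha))b_i$ from Theorem \ref{general_isomorphism_theorem1} index by index and recognise each equation as the $(m-i)$-equivalence criterion for the constacyclic classes via $G_{\tau,\sigma^i(\alpha)}$. Your discussion of the $a_i=0$ indices is more careful than the paper's, but note that invoking Proposition \ref{prop:condition_f_h} ``a posteriori'' in the converse direction is circular; as you correctly sense, the matching of supports is an implicit hypothesis of the statement and must simply be assumed rather than derived.
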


For skew constacyclic codes more is known.

\begin{theorem}  \cite{NevPum2025}
 Suppose $\tau\in \Aut(S)$ commutes with $\sigma$ and $\alpha \in S$ is cancellable.
 Suppose $\mathbf{m}\nmid \mathbf{n}$, or that one of $a$ or $b$ is not in $S_0$.
Then two classes of skew $(\sigma,a)$-constacyclic codes and  skew $(\sigma,b)$-constacyclic codes  of length $\mathbf{n}$ are isometric if and only if they are equivalent.
\end{theorem}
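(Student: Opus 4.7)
The ``equivalence implies isometry'' direction is immediate: any equivalence $G_{\tau,\alpha}$ is by definition a monomial isomorphism of degree $k=1$ and hence an isometry.

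For the converse, suppose there is an isometry $G = G_{\tau,\alpha,k}: K[t;\sigma]/K[t;\sigma](t^m-a) \to K[t;\sigma]/K[t;\sigma](t^m-b)$. If $k = 1$, then $G$ itself is the desired equivalence, so we may assume $k > 1$. The plan is to show that even in this regime, the algebraic data of $G$ encode an equivalence.

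I would first invoke Proposition \ref{prop:monica} to extract the necessary constraints: $k \equiv 1 \pmod{n}$, $\gcd(k,m)=1$, $1 < k < m$, together with the analogous conditions on $\ell = k^{-1} \bmod m$. In particular $k \geq n+1$, so $n < m-1$. Next, Corollary \ref{prop:G} yields
\[
\tau(a) \;=\; N_m^{\sigma^k}(\alpha)\, b^k \;=\; N_m^{\sigma}(\alpha)\, b^k,
\]
where the second equality uses $\sigma^k = \sigma$. By Theorem \ref{t:2}, an equivalence between the two classes is equivalent to finding $\tau' \in \Aut(K)$ commuting with $\sigma$ and $\alpha' \in K^\times$ such that $\tau'(a) = N_m^\sigma(\alpha')\, b$. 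Writing $k = 1 + jn$ with $j \geq 1$ and rearranging, $\tau(a) = N_m^\sigma(\alpha)\, b^{jn} \cdot b$, so with the natural choice $\tau' = \tau$ the task reduces to verifying that $b^{jn} = (b^n)^j \in N_m^\sigma(K^\times)$: then setting $\alpha' = \alpha\beta$ for any $\beta$ with $N_m^\sigma(\beta) = b^{jn}$ completes the construction.

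The main obstacle is establishing this norm containment. Here the hypothesis that \emph{both} ambient algebras are nonassociative (which follows from ``$m \nmid n$ or one of $a$ or $b$ lies outside $F$'', together with the fact that every ring isomorphism preserves associativity) provides the structural input. I would split into two subcases. If at least one of $a,b$ lies outside $F$, I would exploit the $\sigma$-orbit of that element (using the commutativity of $\Aut(K)$ with $\sigma$) to build an explicit norm preimage of $b^n$. If instead $a,b \in F$ but $m \nmid n$, the index-theoretic mismatch between $m$ and $n$, combined with $k \equiv 1 \pmod n$ and $\gcd(k,m)=1$, restricts the admissible $k$ so tightly that either no $k>1$ can satisfy all constraints simultaneously (so $G$ must already be an equivalence), or the required $\alpha'$ is produced by an elementary norm computation using $b^m \in N_m^\sigma(K^\times)$ and $\gcd(n,m)=1$.

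If the direct choice $\tau' = \tau$ runs into an obstruction in a particular subcase, I would compose $\tau$ with a suitable power of $\sigma$ to form $\tau'$ and absorb the discrepancy into $\alpha'$; since all automorphisms commute with $\sigma$ by hypothesis, $\tau'$ remains admissible, and Theorem \ref{t:2} then certifies the resulting equivalence. The hardest step in this program is the norm-image verification in the ``mixed'' regime where $a \in F$ but $b \notin F$ (or vice versa), since the orbit structure on the two sides is asymmetric and forces a careful tracking of how $\tau$ interacts with the nontrivial $\sigma$-orbit.
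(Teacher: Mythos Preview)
The paper does not give its own proof of this statement; it is quoted from \cite{NevPum2025} with no accompanying argument, so there is no in-paper proof to compare against.  I therefore assess your proposal on its own merits.

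Your reduction via Corollary~\ref{prop:G} to the identity $\tau(a)=N_m^\sigma(\alpha)\,b^{1+jn}$ is correct, and you have identified the right target: it suffices to show $b^{jn}\in N_m^\sigma(K^\times)$.  But this key step is never carried out, and both subcase sketches contain genuine gaps.  In the case $a,b\in F$ you invoke ``$\gcd(n,m)=1$'', yet this does not follow from the hypotheses nor from the constraints $k\equiv 1\pmod n$, $\gcd(k,m)=1$: for instance $n=2$, $m=4$, $k=3$ satisfies everything while $\gcd(n,m)=2$.  In the case $b\notin F$ you only say you would ``exploit the $\sigma$-orbit'', which is not an argument; there is no evident mechanism producing a $\beta$ with $N_m^\sigma(\beta)=b^n$.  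A sharper warning sign: if your norm-containment claim held uniformly, your argument would go through verbatim in the associative case $n\mid m$, $a,b\in K_0$, contradicting Theorem~\ref{t:Ouazzou} (where isometry and equivalence genuinely differ).  So whatever makes the containment true must use nonassociativity in an essential way, and that is exactly what your sketch leaves unspecified.

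The approach one expects from the cited source (consistent with \cite{BPS,BP,BrownPumpluen2019}) is more direct: for a \emph{nonassociative} Petit algebra $K[t;\sigma]/K[t;\sigma](t^m-a)$ the associator and nucleus structure are rigid enough to force every ring isomorphism to satisfy $G(t)=\alpha t$, so no monomial isomorphism of degree $k>1$ exists at all and every isometry is already an equivalence.  Your route --- allow $k>1$ and then algebraically downgrade --- may be salvageable, but as written it is a plan rather than a proof.
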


This follows from Theorem \ref{thm:G}.

\begin{theorem}\label{t:Ouazzou}\cite[Theorem 5.9]{NevPum2025}
Suppose that $a,b\in S_0^\times$ and $\mathbf{m}\mid \mathbf{n}$.
 Assume that $\tau$ commutes with $\sigma$.
Then the following statements are equivalent for any integer $1\leq k<\mathbf{n}$:
\begin{enumerate}[$(i)$]
\item $S[t;\sigma]/S[t;\sigma](t^{\mathbf{n}}-a)$ and $S[t;\sigma]/S[t;\sigma](t^{\mathbf{n}}-b)$ are isometric via $G_{\tau,\alpha,k}$;
\item $S[t;\sigma]/S[t;\sigma](t^{\mathbf{n}}-b^k)$ and $S[t;\sigma]/S[t;\sigma](t^{\mathbf{n}}-a)$ are equivalent via $G_{\tau,\alpha}$, where $k$ satisfies $k\equiv 1 \mod \mathbf{m}$ and $\gcd(k,\mathbf{n})=1$.
\item  $\tau(a)=N_{\mathbf{n}}^\sigma(\alpha)  b^k$ where $k\equiv 1 \mod \mathbf{m}$ and  $\gcd(k,\mathbf{n})=1$.
\end{enumerate}
\end{theorem}

This follows from Theorem \ref{t:Ouazzou}.

\begin{corollary}\label{c:varia}
Let $a,b\in S^\times$.
\\
$(i)$ The classes of skew $(\sigma, a)$-constacyclic codes and of skew $\sigma$-cyclic codes  of length $\mathbf{n}$ are  equivalent if and only if
$a \in N_{\mathbf{n}}^\sigma (S^\times).$
 \\ $(ii)$  The classes of skew $(\sigma, a)$-constacyclic codes and  skew $\sigma$-negacyclic codes  of length $\mathbf{n}$ are equivalent if and only if
 $ -a \in N_{\mathbf{n}}^\sigma (S^\times).$
 \\ In both $(i)$ and $(ii)$, isometry and equivalence coincide when $\mathbf{m}$  does not divide $ \mathbf{n}$.
\end{corollary}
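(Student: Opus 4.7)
The plan is to apply Theorem~\ref{t:2} twice: first with $h(t)=t^m-1$, so $b=1$ (the $\sigma$-cyclic case), and then with $h(t)=t^m+1$, so $b=-1$ (the $\sigma$-negacyclic case), and in each case to translate the resulting condition $\tau(a)=N_m^\sigma(\alpha)b$ into the norm-membership statement claimed.

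For (i), Theorem~\ref{t:2} gives equivalence iff there exist $\tau\in{\rm Aut}(S)$ commuting with $\sigma$ and $\alpha\in S^\times$ with $\tau(a)=N_m^\sigma(\alpha)$. The key observation is that $\tau$ commutes with each power $\sigma^j$, hence with the full product $N_m^\sigma$; explicitly,
$$
\tau^{-1}(N_m^\sigma(\alpha))=\prod_{j=0}^{m-1}\tau^{-1}(\sigma^j(\alpha))=\prod_{j=0}^{m-1}\sigma^j(\tau^{-1}(\alpha))=N_m^\sigma(\tau^{-1}(\alpha)).
$$
Applying $\tau^{-1}$ to the defining equation then yields $a=N_m^\sigma(\tau^{-1}(\alpha))\in N_m^\sigma(S^\times)$. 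Conversely, if $a=N_m^\sigma(\beta)$ for some $\beta\in S^\times$, then taking $\tau={\rm id}$ and $\alpha=\beta$ in Theorem~\ref{t:2} satisfies the criterion.

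For (ii), I would run the same argument with $b=-1$. Since $-1$ is fixed by every ring automorphism, the condition $\tau(a)=-N_m^\sigma(\alpha)$ rewrites as $\tau(-a)=N_m^\sigma(\alpha)$; the same absorption of $\tau^{-1}$ into the argument of $N_m^\sigma$ then gives $-a\in N_m^\sigma(S^\times)$, with the converse again following by choosing $\tau={\rm id}$.

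The final clause on isometry coinciding with equivalence is a direct invocation of the second assertion of Theorem~\ref{t:2}: since $1,-1\in S_0^\times$, the stated divisibility hypothesis places us in exactly the regime where every isometry between the ambient algebras is already an equivalence. The only subtle step is verifying the commutation $\tau\circ N_m^\sigma=N_m^\sigma\circ\tau$, which is the main place the hypothesis that $\tau$ commutes with $\sigma$ is used; once this is in hand the proof reduces to unwinding Theorem~\ref{t:2}, so no serious obstacle is anticipated.
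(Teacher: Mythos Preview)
Your argument for (i) and (ii) is correct and is exactly the approach the paper intends: the corollary is stated without proof, as an immediate consequence of Theorem~\ref{t:2} (and is a restatement of Corollary~\ref{cor:equivskewsigmacyclic}).  The absorption identity $\tau^{-1}\circ N_m^\sigma = N_m^\sigma\circ \tau^{-1}$ is the right way to eliminate $\tau$ and reduce the existential condition to the pure norm-membership statement.

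There is, however, a genuine gap in your treatment of the final clause.  Theorem~\ref{t:2} asserts that isometry and equivalence coincide under the hypothesis that \emph{both} $a,b\in S\setminus S_0$, or \emph{both} $a,b\in S_0^\times$ together with $n\nmid m$.  You have verified only that $b=\pm 1\in S_0^\times$; nothing in the corollary forces $a\in S_0^\times$, so the literal hypotheses of Theorem~\ref{t:2} are not met when $a\in S\setminus S_0$.  This mixed case must be handled separately: if $a\notin S_0$ then $R/R(t^m-a)$ is not associative, so under the divisibility assumption neither is $R/R(t^m\mp 1)$, and the first disjunct of Theorem~\ref{t:2} does not apply either since $\pm 1\in S_0$.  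One way to close the gap is to observe that if $a\notin S_0$ and $\pm 1\in S_0$ then, under $n\nmid m$, both ambient algebras are proper nonassociative, and in that regime every isomorphism is already monomial of degree one (this is the content behind the cited \cite[Corollary~4.5]{NevPum2025}); alternatively, if one algebra were associative and the other not, no ring isomorphism between them could exist at all, so isometry and equivalence would coincide vacuously.  Either way, an extra sentence is needed.
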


\section{Skew polycyclic codes over finite fields}\label{s:finite}

We can now get some straightforward results for skew constacyclic codes over finite fields. More detailed results will be obtained in a future paper.

Let $K=\mathbb{F}_{p^r}$ be a finite field. Let $\xi$ be a primitive element of $K$; it has order $p^r-1$.  
Put $\sigma(x)=x^{p^s}$ and let $K_0=\mathbb{F}_{p^{\gcd(r,s)}}$ denote the fixed field of $\sigma$.  The order of $\sigma$ is $\frac{r}{\gcd(r,s)}$. 
Define
$$
[\mathbf{n}]_{s}=\frac{p^{s \mathbf{n}}-1}{p^{s}-1}=p^{s(\mathbf{n}-1)}+p^{s(\mathbf{n}-2)}+\cdots + p^{s} + 1,
$$
then $N_{\mathbf{n}}^\sigma(x)=x^{[\mathbf{n}]_{s}}$ for all $x\in \mathbb{F}_{p^r}$.
The subgroup $N_{\mathbf{n}}^\sigma(\mathbb{F}_{p^r}^\times)$  of $K^\times$
is  generated by $N_{\mathbf{n}}^\sigma(\xi) = \xi^{[\mathbf{n}]_s}=\xi^w$ with $w=\gcd([\mathbf{n}]_s,p^r-1)$, and thus the order of this group is
    $$
    \frac{p^r-1}{w}.
    $$

The number of different Chen equivalence classes of skew constacyclic codes is
$$w=\gcd([\mathbf{n}]_s,p^r-1)$$
 \cite{Oua2025}. Thus the number of polynomials of the type $t^{\mathbf{n}}-c$ that yield   ambient rings $\mathbb{F}_{p^r}[t,\sigma]/\mathbb{F}_{p^r}[t,\sigma](t^{\mathbf{n}}-c)$ that are Chen equivalent to the  ambient ring $\mathbb{F}_{p^r}[t,\sigma]/\mathbb{F}_{p^r}[t,\sigma](t^{\mathbf{n}}-a)$ of a fixed $a\in \mathbb{F}_{p^r}$ is
$$\frac{p^r-1}{\gcd([\mathbf{n}]_s,p^r-1)}.$$

We will use the following result for skew constacyclic codes obtained in \cite[Theorem 5.6]{NevPum2025}.

 \begin{theorem}\label{c:Ouazzoufinite}
  Suppose $K=\mathbb{F}_{p^r}$ and $\sigma(x)=x^{p^s}$ with $s|r$ so that $n=r/s$ and $K_0=\mathbb{F}_{p^s}$.
    Then the number of distinct Chen isometry classes of families of skew $(\sigma,a)$-constacyclic codes of length $\mathbf{n}$ arising from nonassociative ambient rings $\mathbb{F}_{p^r}[t,\sigma]/\mathbb{F}_{p^r}[t,\sigma](t^{\mathbf{n}}-a)$ is $N$, where
 $$
    N= \begin{cases}
     \gcd([\mathbf{n}]_{s},p^r-1)&\text{if $\mathbf{m} \nmid \mathbf{n}$; and}\\
    \left(1-\frac{1}{[\mathbf{m}]_s}\right)\gcd([\mathbf{n}]_{s},p^r-1) & \text{if  $\mathbf{m}\mid  \mathbf{n}$.}
    \end{cases}
 $$
 There are additionally $\gcd([\mathbf{n}]_{s},p^r-1)/[n]_s$ different Chen equivalence classes (and thus at most this many Chen isometry classes) of families of skew $(\sigma,a)$-constacyclic codes arising from associative ambient rings $\mathbb{F}_{p^r}[t,\sigma]/\mathbb{F}_{p^r}[t,\sigma](t^{\mathbf{n}}-a)$, that is, for which $\mathbf{m}\mid \mathbf{n}$ and $a\in K_0$.
  \end{theorem}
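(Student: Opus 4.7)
The plan is to adapt the counting argument of Theorem~\ref{c:OuazzoufiniteS} to the finite field $K=\mathbb{F}_{p^r}$ by explicit index computation in the cyclic group $K^\times$. Let $\xi$ be a generator of $K^\times$, so $|K^\times|=p^r-1$; then $N_m^\sigma(\xi)=\xi^{[m]_s}$ generates the norm subgroup $H:=N_m^\sigma(K^\times)$, which has index $w:=\gcd([m]_s,p^r-1)$ in $K^\times$. By Theorem~\ref{t:2}, the classes ${\bf C}_{t^m-a}$ and ${\bf C}_{t^m-b}$ are Chen equivalent if and only if $a$ and $b$ lie in the same coset of $H$, and by \cite[Corollary 4.5]{NevPum2025} (recorded in the remark following Theorem~\ref{t:2}) Chen isometry coincides with Chen equivalence whenever both corresponding Petit algebras are nonassociative. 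Hence counting Chen isometry classes of nonassociative ambient algebras reduces to counting cosets of $H$ in $K^\times$ containing at least one element $a$ with $K[t;\sigma]/K[t;\sigma](t^m-a)$ nonassociative.

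Next I would split on whether $n$ divides $m$. If $n\nmid m$, every Petit algebra $K[t;\sigma]/K[t;\sigma](t^m-a)$ is nonassociative regardless of $a\in K^\times$, so all $w$ cosets contribute and the count is $w=\gcd([m]_s,p^r-1)$. If $n\mid m$, the algebra is associative precisely when $a\in K_0^\times$, and $H$ sits inside $K_0^\times$: indeed, $n\mid m$ gives $[m]_s=(m/n)[n]_s$, and the identity $(\xi^{[n]_s})^{p^s-1}=\xi^{p^r-1}=1$ shows $\xi^{[n]_s}\in K_0^\times$, hence $\xi^{[m]_s}\in K_0^\times$. Therefore each coset of $H$ is either contained in $K_0^\times$ or disjoint from $K_0^\times$, and the number contained in $K_0^\times$ is
$$\frac{|K_0^\times|}{|H|}=\frac{(p^s-1)\,w}{p^r-1}=\frac{w}{[n]_s},$$
where I use the identity $[n]_s=(p^r-1)/(p^s-1)$. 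These cosets correspond to associative algebras; the remaining $w-w/[n]_s=(1-1/[n]_s)\,w$ cosets enumerate the nonassociative Chen isometry classes, as required.

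For the final sentence of the theorem, the $w/[n]_s$ cosets lying in $K_0^\times$ enumerate Chen \emph{equivalence} classes among associative ambient algebras, but by Theorem~\ref{t:Ouazzou} there may exist Chen isometries $G_{id,\alpha,k}$ with $k\equiv 1\pmod n$ and $\gcd(k,m)=1$ identifying some of these; hence $w/[n]_s$ is only an upper bound on the number of Chen isometry classes of associative families. The principal difficulty I expect is ensuring the coincidence of Chen isometry with Chen equivalence on the nonassociative side---one must verify that no monomial isomorphism of degree $k>1$ collapses two distinct cosets of $H$ containing nonassociative representatives---which is handled by combining the two subcases of \cite[Corollary 4.5]{NevPum2025}. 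Once that coincidence is granted, the remaining work is elementary arithmetic inside the cyclic group $K^\times$ using the identity $[n]_s=(p^r-1)/(p^s-1)$.
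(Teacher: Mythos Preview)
Your approach is essentially that of the paper: it is the specialization of the counting argument of Theorem~\ref{c:OuazzoufiniteS} to $K=\mathbb{F}_{p^r}$, and the proof of Theorem~\ref{c:Ouazzoufinite2} in the paper explicitly invokes the same coset count, writing $|K_0^\times|/|H|=\frac{p^s-1}{(p^r-1)/w}=\frac{w}{[n]_s}$ and then subtracting to get $w-w/[n]_s$.

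One arithmetic slip: the identity $[m]_s=(m/n)[n]_s$ is false in general (e.g.\ $p=2$, $s=1$, $n=2$, $m=4$ gives $[4]_1=15\neq 6$). What you actually need, and what your argument uses, is only that $[n]_s\mid [m]_s$ when $n\mid m$; this follows from $(p^{sn}-1)\mid (p^{sm}-1)$, and then $\xi^{[m]_s}=(\xi^{[n]_s})^{[m]_s/[n]_s}\in K_0^\times$ as you conclude. With that correction the proof goes through unchanged.
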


Note that when $\mathbf{m}\mid \mathbf{n}$, we have $N_{\mathbf{n}}^\sigma(\mathbb{F}_{p^r})\subset \mathbb{F}_{p^s}$.

  The next examples show that using our refined notion of equivalence and isometry will give a tighter classification of skew constacyclic codes.

\begin{example}\label{e:important}\cite[Example 5.8]{NevPum2025}
Let
$\gcd([\mathbf{n}]_s,p^r-1)=p^r-1$.  Then $N_{\mathbf{n}}^\sigma(\mathbb{F}_{p^r}^\times)=\{1\}$ and so no two classes of skew $(\sigma,a)$-constacyclic and  skew $(\sigma,b)$-constacyclic codes  of length $\mathbf{n}$ belonging to  distinct $a,b\in \mathbb{F}_{p^r}^\times$ will be  Chen equivalent. Thus, there are exactly $|\mathbb{F}_{p^r}^\times|=p^r-1$ distinct classes of skew constacyclic codes under Chen equivalence. Each class is related to exactly one ambient algebra.

Moreover, if additionally $\mathbf{m} \nmid \mathbf{n}$ then Chen isometry and Chen equivalence coincide. Thus, there are also $p^r-1$ distinct classes of skew constacyclic codes under Chen isometry in that case.

When additionally $\mathbf{m}\mid \mathbf{n}$, then there are  $\left(1-\frac{1}{[n]_s}\right)(p^r-1)$ different Chen isometry classes.

 Now there are $r$ choices for $\tau\in {\rm Gal}(\mathbb{F}_{p^r}/\mathbb{F}_{p})$ ($\mathbf{m}$ of which are corresponding to $\sigma^\ell$ for each $1\leq \ell < n$).
This implies that the set $\{a^{p^{v}}\mid 0\leq v < r\}\subset \mathbb{F}_{p^r}$ is an equivalence class with exactly $r$ elements when $a\not \in \mathbb{F}_p$.  That is, all of the corresponding $(\sigma,a^{p^v})$-constacyclic codes   of length $\mathbf{n}$ are equivalent and so the equivalence class
$ [ {\bf C}_{t^{\mathbf{n}}-a} ] =\{ {\bf C}_{ t^{\mathbf{n}}-a^{p^{v}}  }\,|\, 0\leq v < r  \}$  has $r$ elements when $a\not \in \mathbb{F}_p$, while the equivalence class $[{\bf C}_{t^{\mathbf{n}}-a}]_{Chen}=\{ {\bf C}_{t^{\mathbf{n}}-a}\}$ has only one element. This yields a tighter classification of codes with the same Hamming distance, length and dimension.

Moreover, if $\mathbf{m} \nmid \mathbf{n}$ then isometry and equivalence coincide, so again calculating the  isometry classes will not yield a tighter classification.

\end{example}

We  count the number of equivalence classes in some special cases, to show how the notion of equivalence reduces the number of classes.

\begin{example}(communicated by M. Nevins)\\
 $(i)$ Let $\gcd([\mathbf{n}]_s,p^{l^2}-1)=p^{l^2}-1$,  $l$ be prime, $K=\mathbb{F}_{p^{l^2}}$, $ F=\mathbb{F}_{p^l}$ and $F_0=\mathbb{F}_p$.
Then there are $p^{l^2}-p^l$ different $b\in K\setminus F$, and $l^2$ elements in each equivalence class belonging to one such $b$ (since $|{\rm Gal}(K/F_0)|=l^2$), which yields
$$\frac{p^{l^2}-p^l}{l^2}$$
different equivalence classes of skew constacyclic codes.
Moreover, there are
 $p^{l}-p$ different $b\in F\setminus F_0$, $l$ elements in each equivalence class belonging to one such $b$ (since $|{\rm Gal}(F/F_0)|=l$), which yields
$$\frac{p^{l}-p}{l}$$
different equivalence classes, and $p-1$ different $b\in  F_0^\times$,  yielding an additional $p-1$
different equivalence classes of skew constacyclic codes. We obtain a total of only
 $$\frac{p^{l^2}-p^l}{l^2}+\frac{p^{l}-p}{l}+p-1$$
 equivalence classes, compared with a total of
 $$p^{l^2}-1=(p^{l^2}-p^l)+(p^{l}-p)+(p-1)$$
 equivalence classes of skew constacyclic codes with respect to Chen equivalence.
\\
$(ii)$
 Assume $l=1$ in $(i)$, i.e. $K=\mathbb{F}_{p^{2}}$ and $ F=\mathbb{F}_{p}$. We know $\sigma(\xi)\not=\xi$ and $$\mathbb{F}_{p^2}[t;\sigma]/\mathbb{F}_{p^2}[t;\sigma](t^{\mathbf{n}}-a)\cong \mathbb{F}_{p^2}[t;\sigma]/\mathbb{F}_{p^2}[t;\sigma](t^{\mathbf{n}}-\sigma(a))$$
 via
$G_{id,1}$. However, $$\mathbb{F}_{p^2}[t;\sigma]/\mathbb{F}_{p^2}[t;\sigma](t^{\mathbf{n}}-a)$$
 and
$$\mathbb{F}_{p^2}[t;\sigma]/\mathbb{F}_{p^2}[t;\sigma](t^{\mathbf{n}}-\sigma(a))$$
 are not Chen equivalent, hence the corresponding skew constacyclic codes are also not Chen equivalent.
\end{example}

\begin{example}
Let  $K=\mathbb{F}_{3^2}$ and $F= \mathbb{F}_{3}$, so that $\sigma$ is the Frobenius automorphism and $n=2$. Using the results from the Magma computations performed for \cite[Example 3]{Oua2025} for $n\in \{ 1,\dots, 10^8\}$ and Theorem \ref{c:Ouazzoufinite} we count the different possible equivalence classes of skew constacyclic codes  of length $\mathbf{n}$ over $K=\mathbb{F}_{3^2}$.
There are $$\gcd([\mathbf{n}]_1,8)=\gcd(\frac{3^\mathbf{n}-1}{2},8)$$
 different Chen equivalence classes.
\\ (i) Let $\mathbf{n}$ be odd. Then $\gcd([\mathbf{n}]_1,8)=1$ and there is exactly one Chen equivalence class. Since $2 \nmid \mathbf{n}$ there is also exactly one Chen isometry class.
\\  (ii) Let $\mathbf{n}\equiv 2 \mod  4$
 then $\gcd([\mathbf{n}]_1,8)=4$ and there are 4 Chen equivalence and only  $4-\frac{4}{[2]_1}=3$ Chen isometry classes (since $2\mid  \mathbf{n}$).
\\ (iii) Let $\mathbf{n}\equiv 0 \mod  4$ then $\gcd([\mathbf{n}]_1,8)=8$ and we have 8 Chen equivalence classes and only $8-\frac{8}{[2]_1}=6$  Chen isometry classes (since $2\mid  \mathbf{n}$).
\\ On the other hand, we have $\frac{6}{2}+2=5$ equivalence classes, and potentially even less isometry classes.
\end{example}

\begin{example}
Let  $K=\mathbb{F}_{2^2}$ and $F= \mathbb{F}_{2}$, so that $\sigma$ is the Frobenius automorphism and $n=2$. Let $\mathbf{n}$ be even.
Then $\gcd([\mathbf{n}]_1,3)=\gcd(2^\mathbf{n}-1,3)=3$ and there are three Chen equivalence classes  of skew constacyclic codes  of length $\mathbf{n}$ over $K=\mathbb{F}_{2^2}$, and $3(1-\frac{1}{3})=2$ Chen isometry classes (this fixes \cite[Example 4]{Oua2025} where it was claimed there was only one such class).
On the other hand, there are two equivalence classes and potentially less isometry classes.
\\ When $\mathbf{n}$ is odd, we have $\gcd([\mathbf{n}]_1,3)=\gcd(2^\mathbf{n}-1,3)=1$, so there exists only one Chen equivalence class and we are done.
\end{example}

The exact number of equivalence classes under Chen equivalence for skew polycyclic codes  of length $\mathbf{n}$ over a finite field $\mathbb{F}_{p^r}$ computed in  \cite[Theorem 4.20]{OaHA25} is
$$N=\frac{({p^r}-1)^{m_0}}{{\rm lcm}_{j\in \{i_0,i_1,\dots,i_{m_0}-1\}}\big( \frac{{p^r}-1}{\gcd(|\mathbf{n}-j|_s,{p^r}-1)}\big)}
$$
where the $j$ run over all the $m_0$ different nonzero coefficients of $f$.

\begin{example}\label{e:important2}
Let  $\gcd([\mathbf{n}]_s,p^r-1)=p^r-1$.
We construct the equivalence classes of some given $h(t)$ as in  Theorem \ref{t:equivclasses}. We start with $h(t) = t^{\mathbf{n}}-
\sum_{i=0}^{\mathbf{n}-1} b_i t^i $.
 For all $\alpha\in K^\times$, define
$$h_{id,\alpha}(t)=t^{\mathbf{n}}-
\sum_{i=0}^{\mathbf{n}-1} N_{\mathbf{n}-i}^{\sigma}(\sigma^{i} (\alpha)) b_i t^i.$$
 Then there exists a Chen equivalence  $G_{id, \alpha} $ between the ambient rings  $R/Rh_{id,\alpha}$ and $R/Rh$ and
 $$[{\bf C}_{h}]_{Chen}=\{{\bf C}_{h_{id,\alpha}} \,|\, \alpha\in \mathbb{F}_{p^r}^\times \}.$$
  All the skew polynomials $h_{id,\alpha}(t)$ have constant coefficient $b_0$.

  By Proposition \ref{p:Chen equivpolycycliccode}, employing Example \ref{e:important}, no two classes of skew $(f,\sigma)$-polycyclic and  skew $(h,\sigma)$-polycyclic codes  of length $\mathbf{n}$ belonging to any two $f$ and $h$ with  distinct constant coefficients $a_0,b_0\in K^\times$ will be  Chen equivalent since $N_{\mathbf{n}}^\sigma(K^\times)=\{1\}$.

  Now take $\tau\in {\rm Gal}(\mathbb{F}_{p^r}/\mathbb{F}_{p})$ and set
  $$h_{\tau,\alpha}(t)= t^{\mathbf{n}}-\sum_{i=0}^{\mathbf{n}-1} N_{\mathbf{n}-i}^{\sigma}(\sigma^{i} (\tau(\alpha)))\tau(b_i) t^i,$$
  then
  $$[{\bf C}_{h}]=\{{\bf C}_{h_{\tau,\alpha}} \,|\, \alpha\in \mathbb{F}_{p^r}^\times, \tau\in {\rm Gal}(\mathbb{F}_{p^r}/\mathbb{F}_{p}) \}$$
  and
  $$[{\bf C}_{h}]_{Chen}\subset [{\bf C}_{h}]$$
 Note that now the $h_{\tau,\alpha}(t)$ in the equivalence class of $h$ all have $\tau(b_0)$ as constant coefficients,  for some $\tau\in {\rm Gal}(\mathbb{F}_{p^r}/\mathbb{F}_{p})$.
  There are  $r$ choices for $\tau$.
  So when $b_0\in \mathbb{F}_q \setminus \mathbb{F}_p$ we know that there are $r$ different $\tau(b_0)$ and thus $[{\bf C}_{h}]$ must be larger.
  This means there will be less equivalence classes of skew polycyclic codes.

  Moreover, any two $f$ and $h$ whose non-zero constant coefficients $a_0$ and $b_0$ where the sets $\{ a_0,\tau(a_0),\dots,\tau^r(a_{0})\}$ and $\{b_0,\tau(b_0),\dots,\tau^r(b_{0})\}$ have no elements in common, will not yield equivalent classes of skew polycyclic  codes.
\end{example}

We can apply the connection between skew polycyclic and skew constacyclic codes from Proposition \ref{p:Chen equivpolycycliccode}  to a plethora of different scenarios.

\begin{example}\label{e:important3}
For $i \in\{ 0, \ldots, \mathbf{n}-1\}$ set $\mathbf{n}_i=\mathbf{n}-i$. Suppose there is  $i \in\{ 0, \ldots, \mathbf{n}-1\}$ such that $\gcd([\mathbf{n}_i]_s,p^r-1)=p^r-1$.
By Proposition \ref{p:Chen equivpolycycliccode},  no two classes of skew $(f,\sigma)$-polycyclic and  skew $(h,\sigma)$-polycyclic codes  of length $\mathbf{n}$ where $f$ and $h$ have  distinct $a_{i_0},b_{i_0}\in \mathbb{F}_{p^r}^\times$ will be  Chen-equivalent.
   Thus, there are  $|K^\times|=p^r-1$ distinct classes of skew polycyclic codes under $\mathbf{n}$-Chen equivalence; one for each choice of $a_{i_0}\in \mathbb{F}_{p^r}^\times$. Note that every ${\bf C}_{h_{id,\alpha}}\in [{\bf C}_{h}]_{Chen}$ has $ b_{i_0}$  as $i_0$-th coefficient.
Every ${\bf C}_{h_{\tau,\alpha}}\in [{\bf C}_{h}]$ has $ \tau(b_{i_0})$  as $i_0$th coefficient.
 \end{example}

As a consequence of Theorem \ref{t:2}, the notions of Chen equivalence and Chen isometry coincide for all skew $(f,\sigma)$-polycyclic codes of length $\mathbf{n}$.

  Let $\mathbf{m}\geq \mathbf{n}-1$ and assume that $f$ does not generate a two-sided ideal in $\mathbb{F}_{p^r}[t;\sigma]$.
 Then the notions of Chen equivalence and Chen isometry coincide for all skew $(f,\sigma)$-polycyclic codes over $\mathbb{F}_q$  of length $\mathbf{n}$.
 Therefore   \cite[Theorem 4.20]{OaHA25} yields also the exact number of Chen isometry classes of skew polycyclic codes in certain cases:

\begin{theorem}\label{c:Ouazzoufinite2}
Let $\mathbf{m} \geq \mathbf{n}-1$ and assume that $f$ does not generate a two-sided ideal in $\mathbb{F}_{p^r}[t;\sigma]$.
The exact number of Chen isometry classes of skew polycyclic codes   of length $\mathbf{n}$ over  $\mathbb{F}_{p^r}$   is
$$N=\frac{({p^r}-1)^{m_0}}{{\rm lcm}_{j\in \{i_0,i_1,\dots,i_{m_0}-1\}}\big( \frac{{p^r}-1}{\gcd(|\mathbf{n}-j|_s,{p^r}-1)}\big)}
$$
where the $j$ run over all the $m_0$ different nonzero coefficients of $f$.
\end{theorem}

\section{When two classes cannot be equivalent}\label{s:end}

If we  want to have some criteria for when skew polycyclic codes cannot be equivalent, there are some easy conditions we can check first.
 Let
 $$f(t) = t^{\mathbf{n}}-\sum_{i=0}^{\mathbf{n}-1} a_i t^i ,\quad h(t) = t^{\mathbf{n}}-
\sum_{i=0}^{\mathbf{n}-1} b_i t^i \in S[t;\sigma].$$
We are looking at classes ${\bf C}_f$ and ${\bf C}_h$ of skew polycyclic codes  of length $\mathbf{n}$ over $S$, and look at the coefficients of $f$ and $h$.

\begin{corollary}  \label{cor:Chen equivpolycyclic}
Suppose that
$(a_0,\dots, a_{\mathbf{n}-1})\sim_\mathbf{n} (b_0,\dots,b_{\mathbf{n}-1})$, that is the classes ${\bf C}_f$ and ${\bf C}_h$ of skew polycyclic codes are equivalent. Then the following are true:
\\ (i)  $a_i  = 0$ if and only if $b_i  = 0$ for any $i\in \{0,1,\dots,\mathbf{n}-1 \}$. In particular, $(a_0,\dots, a_{\mathbf{n}-1})$ and $(b_0,\dots,b_{\mathbf{n}-1})$ have the same Hamming weight.
\\ (ii) $a_i  \in S^\times$ if and only if $b_i  \in S^\times$ for any $i\in \{0,1,\dots,\mathbf{n}-1 \}$.
\end{corollary}

This generalizes \cite[Lemma 4.19]{OaHA25} which was stated for finite fields and Chen equivalence (i.e., $\tau=id$).
When $\sigma=id$ we are dealing with polycyclic codes.

\begin{proof}
Suppose that
$(a_0,\dots, a_{\mathbf{n}-1})\sim_\mathbf{n} (b_0,\dots,b_{\mathbf{n}-1})$, which means that
 there exist $\alpha \in S^{\times}$  and some $\tau\in {\rm Aut}(S)$ that commutes with $\sigma$, such that  $G_{\tau,\alpha}:R/Rf\rightarrow R/Rh$ is an equivalence. This in turn is the case if and only if
 there exists $\alpha \in S^{\times}$ and some $\tau\in {\rm Aut}(S)$ that commutes with $\sigma$, such that
$$\tau(a_i) = N_{\mathbf{n}-i}^\sigma ( \sigma^i(\alpha) ) b_i$$
for all $i \in\{ 0, \ldots, \mathbf{n}-1\}$.
This means:
 \\ $(i)$ $a_i = 0$ if and only if $b_i = 0$ for all $i \in\{ 0, \ldots, \mathbf{n}-1\}$.
 \\ $(ii)$ $a_i $ is invertible if and only if $b_i $ is invertible for all $i \in\{ 0, \ldots, \mathbf{n}-1\}$. ($(i)$ and $(ii)$ also follow from Proposition \ref{prop:condition_f_h}.)
\end{proof}

Thus if $(a_0,\dots, a_{\mathbf{n}-1})$ and $ (b_0,\dots,b_{\mathbf{n}-1})$ do not have zero entries in exactly the same slots, then the classes ${\bf C}_f$ and ${\bf C}_h$ of skew polycyclic codes are not equivalent. And even if $(a_0,\dots, a_{\mathbf{n}-1})$ and $ (b_0,\dots,b_{\mathbf{n}-1})$ do  have zero entries in exactly the same slots, hence the same Hamming weights, but do not have the same slots with invertible elements, then the classes ${\bf C}_f$ and ${\bf C}_h$ of skew polycyclic codes are not equivalent.

\begin{corollary}  \label{cor:Chen equivpolycyclic2}
Suppose that
$(a_0,\dots, a_{\mathbf{n}-1})\sim_\mathbf{n} (b_0,\dots,b_{\mathbf{n}-1})$, that is the classes ${\bf C}_f$ and ${\bf C}_h$ of skew polycyclic codes are equivalent. Then the following are true:
\\ (i)  $\tau( a_{\mathbf{n}-1})=\sigma^{\mathbf{n}-1}(\alpha)b_{\mathbf{n}-1}$.
\\ (ii) $N_{S/S_0}(\tau(a_i)b_i^{-1}) \in (N_{S/S_0}(S^\times))^{\mathbf{n}-i}. $
\end{corollary}

\begin{proof}
The relation
$(a_0,\dots, a_{\mathbf{n}-1})\sim_\mathbf{n} (b_0,\dots,b_{\mathbf{n}-1})$ again is equivalent to the existence of some $\alpha \in S^{\times}$ and some $\tau\in {\rm Aut}(S)$ that commutes with $\sigma$, such that
$$\tau(a_i) = N_{\mathbf{n}-i}^\sigma ( \sigma^i(\alpha) ) b_i$$
for all $i \in\{ 0, \ldots, \mathbf{n}-1\}$.
This means:
\\ $(i)$  $\tau( a_{\mathbf{n}-1})=\sigma^{\mathbf{n}-1}(\alpha)b_{\mathbf{n}-1}$.
 \\ $(ii)$ If we apply $N_{S/S_0}$ to both sides of the equation we obtain
 $$N_{S/S_0}(\tau(a_i)) = N_{S/S_0}(\alpha )^{\mathbf{n}-i} N_{S/S_0}(b_i)$$
 which is equivalent to
 $$N_{S/S_0}(\tau(a_i)b_i^{-1}) \in (N_{S/S_0}(S^\times))^{\mathbf{n}-i}\subset (S_0^\times)^{\mathbf{n}-i}. $$
\end{proof}

\section{Skew polycyclic codes and their ambient algebras over commutative chain rings}\label{sec:chain}

\subsection{Finite commutative Chain Rings}

We now briefly turn to skew polycyclic codes over finite (commutative) chain rings. The Chen isometry and Chen equivalence classes of  constacyclic codes over  finite  chain rings
were counted in \cite{CON2025}. Constacyclic codes can indeed be recovered as a special case of our setup, where $R=S[t]$, i.e. $\sigma=id$, $\delta=0$, and $f(t)=t^{\mathbf{n}}-a$. Let $p$ be a prime and $q=p^s$.

A finite unital commutative ring $S\not=\{0\}$ is called a \emph{finite chain ring} if its ideals are linearly
ordered by inclusion.

 A finite chain ring $S$ has a unique maximal ideal $\mathfrak{p}=(\gamma)$ and every ideal of a finite chain ring $S$ is principal. In particular, $S$ is a
local ring with a finite residue  field $K=S/\mathfrak{p}$.  The ideals $(\gamma^i)=\gamma^i S=\mathfrak{p}^i$ of $S$ form the proper chain
$$S=(1)\supseteq(\gamma)\supseteq (\gamma^2) \supseteq \dots \supseteq(\gamma^{s_0})=(0),$$
where the integer ${s_0}$ is the nilpotency index of $S$.  If $K$ has $q$ elements, then $|S|=q^{s_0}$.
 In general, the set $S^\times$ can be difficult to compute explicitly, it is known when $p-1\nmid {s_0}$ and is computed via an algorithm when $p-1\mid {s_0}$ in \cite{HouLM2003}, which also contains an excellent review of the results and literature for $S^\times$ we will use. Results on automorphism groups of $S$ are presented for instance in \cite{A1, A2}.

 Let
$$\pi:S\rightarrow K=S/\mathfrak{p}, \quad x\mapsto \overline{x}=x \,{\rm mod}\, \gamma$$
  be the canonical projection.

For two finite chain rings $S$ and $S_0$, a ring extension $S/S_0$ is called \emph{separable} if $\mathfrak{p}S=\mathfrak{P}$, where $\mathfrak{p}$ is the maximal ideal of $S_0$ and $\mathfrak{P}$ the maximal ideal of $S$, and
\emph{Galois} if $S/S_0$ is separable and
 $S^G=\{s\in S\,|\,\tau(s)=s \text{ for all }\tau\in G\}=S_0$, where
$G={\rm Gal}(S/S_0)=\{ \sigma\in {\rm Aut}(S) \,|\,\sigma|_{S_0}=id\}$ is the Galois group of $S/S_0$.

 We know every Galois extension $S/S_0$ can be written as
$S=S_0[x]/(s(x))$, where $(s(x))$ is the ideal generated by a monic basic irreducible polynomial $s(x)\in S_0[x]$ (i.e., $\bar s$ is irreducible), e.g. see \cite[Theorem XIV.8]{Mc}. 

The Galois group of a Galois extension
$S/S_0$ is cyclic and isomorphic to the Galois group of the extension $\mathbb{F}_{q^n}/\mathbb{F}_{q}$,
where $\mathbb{F}_{q^n}=S/\mathfrak{P}$ is the residue field of $S$ and $\mathbb{F}_{q}=S_0/\mathfrak{p}$ the residue field of $S_0$; its generator is $\sigma(a)=a^q$ for a suitable
primitive element $a\in S$, and $\{a,\sigma(a),\dots,\sigma^{n-1}(a)\}$ is a free $S_0$-basis of $S$.
Since $S$ is also an unramified extension of $S_0$, $\mathfrak{P}=S\mathfrak{p}=Sp$.

\subsection{Petit rings over finite chain rings}

Let $S$ be a finite chain ring with residue class field  $K=S/\mathfrak{p}$, $\mathfrak{p}=(\gamma)$ and $\sigma\in {\rm Aut}(S)$,
$\delta$ a left $\sigma$-derivation.

\begin{remark} A skew polynomial
$f(t)=\sum_{i=0}^{\mathbf{n}}d_it^i\in S[t;\sigma]$ is nilpotent, if and only if $f$ is a zero-divisor, if and only if $d_i\in (\gamma)=\mathfrak{p}$ for all $i$.
If $f$ is not a zero divisor in $S[t;\sigma]$ then $f$ is regular. Moreover,
$f(t)=\sum_{i=0}^{\mathbf{n}}d_it^i\in S[t;\sigma]$ is invertible if and only if $d_i\in (\gamma)$ for all $i$, $1\leq i\leq \mathbf{n}$ and $d_0$ is invertible
\cite{Mc}.
\end{remark}

We observe  that for every $\tau\in {\rm Aut}(S)$ we have $\sigma((\gamma))= (\gamma)$, so that $\tau$ canonically induces an automorphism
$$\overline{\tau}:K\rightarrow K,\,\, \overline{\tau}(\overline{a})=\overline{\tau(a)},$$
that means $\tau=\overline{\tau} \circ \pi$.
 We assume that $\delta((\gamma))\subset(\gamma)$, then $\delta$ canonically induces a left $\overline{\sigma}$-derivation
 $$\overline{\delta}:K\rightarrow K, \,\,  \overline{\delta}(\overline{x})=\overline{\delta(x)}.$$
We hence obtain a canonical surjective ring homomorphism
$$\can:S[t;\sigma,\delta]\rightarrow K[t;\overline{\sigma},\overline{\delta}], \,\,c(t)=\sum_{i=0}^{l}c_it^i \mapsto \overline{c}(t)
=\sum_{i=0}^{l}\overline{c_i}t^i.$$
Note that  $K[t;\overline{\sigma},\overline{\delta}]\cong K[t;\sigma']$ for a suitable $\sigma'$ and that all polynomials in the skew polynomial ring $K[t;\overline{\sigma'}]$ are bounded since $K$ is a finite field. This isomorphism, however, changes the Hamming weight, so we cannot use it.

From now on let $f\in R$ be monic of degree $\mathbf{n}$, \emph{regular} (i.e., $\overline{f}\not=0$) and decomposable. 
Suppose $f=ug$ in $S[t;\sigma,\delta]$, then we know that $\overline{f}=\overline{u}\,\overline{g}\in  K[t;\overline{\sigma},\overline{\delta}]$
is monic of degree $\mathbf{n}$ and decomposable.

Note that if $S$ has cardinality $q^{s_0}$ then
 $$\mathbb{S}_f=S[t;\sigma,\delta]/S[t;\sigma,\delta]f$$
  has $q^{s_0 \mathbf{n}}$ elements and
 $S_{\overline{f}}=K[t;\overline{\sigma},\overline{\delta}]/K[t;\overline{\sigma},\overline{\delta}]\overline{f}$
has $q^{\mathbf{n}}$ elements.

Since $\delta((\gamma)) \subset(\gamma)$,
  $\gamma \mathbb{S}_f$ is a two-sided ideal in $\mathbb{S}_f$
and the canonical surjective ring homomorphism
$\can:S[t;\sigma,\delta]\rightarrow K[t;\overline{\sigma},\overline{\delta}]$
 induces a surjective homomorphism of nonassociative rings
 $$\Psi:S[t;\sigma,\delta]/S[t;\sigma,\delta]f \rightarrow
K[t;\overline{\sigma},\overline{\delta}]/K[t;\overline{\sigma},\overline{\delta}]\overline{f},\quad v(t) \mapsto \overline{v}(t)$$
with  ${\rm ker}(\Psi)=\gamma \mathbb{S}_f$, which then in turn induces a canonical isomorphism of the nonassociative Petit rings
$$\overline{\mathbb{S}_f}:= \mathbb{S}_f/\gamma \mathbb{S}_f \text{ and } K[t;\overline{\sigma},\overline{\delta}]/K[t;\overline{\sigma},\overline{\delta}]\overline{f}= S_{\overline{f}}$$
 given by
\begin{equation}
\mathbb{S}_f/\gamma \mathbb{S}_f\rightarrow
K[t;\overline{\sigma},\overline{\delta}]/K[t;\overline{\sigma},\overline{\delta}]\overline{f},\quad
v(t)+ \gamma \mathbb{S}_f \mapsto \overline{v}(t).
\end{equation}
It is straightforward to see that we have
$$\overline{C(\mathbb{S}_f)}=C(\mathbb{S}_f)/pC(\mathbb{S}_f)\cong C(\overline{\mathbb{S}_f})$$
for the centers.
Moreover, if $ \mathbb{S}_f$ is an associative ring then so is  $\overline{\mathbb{S}_f}$, or in other words, if $f\in R$ is two-sided then so is $\bar f\in K[t;\overline{\sigma},\overline{\delta}]$.

There are interesting phenomena to observe when ``moving'' from Petit rings over $S$ to Petit rings over $K$, which directly influence the isometry and equivalence behaviour of skew polycyclic codes over $S$.

\begin{example} \label{ex:8.2}
Let $S/S_0$ be a Galois extension of Galois rings with Galois group
${\rm Gal}(S/S_0)=\langle\sigma\rangle$  of  order $\mathbf{m}$ and let $K_0$ denote the residue field of $S_0$, $ {\rm char }(K_0)=p$.
Then
$K/K_0$ is a Galois extension with Galois group $\langle\overline{\sigma}\rangle$ of  order $\mathbf{m}$.
\\ $(i)$  Choose $$f(t)=t^{\mathbf{n}}+ph(t)-d\in R=S[t;\sigma]$$
 with $h(t)\in S[t;\sigma]$ of degree $< \mathbf{n}$ and  $\overline{d}\not=0$ in $K$, then
$$\overline{f}(t)=t^{\mathbf{n}}-\overline{d}.$$

Note that $\mathbb{S}_{f}$ is  the ambient ring of $(f,\sigma)$-skew polycyclic codes  of length $\mathbf{n}$ over $S$ and $\mathbb{S}_{t^{\mathbf{n}}-\overline{d}}$ is  the ambient ring of skew $(\bar \sigma,\overline{d})$-constacyclic codes  of length $\mathbf{n}$ over $K$.

 Since ${\rm Fix}(\sigma)={\rm Fix}(\overline{\sigma})$, if $\mathbf{m}=\mathbf{n}$ then $\overline{f}(t)=t^\mathbf{m}-\overline{d}$ is two-sided if and only if $d\in {\rm Fix}(\sigma)$. When $\mathbf{m}=\mathbf{n}$ we hence know that any $f\in R$ such that $\overline{f}(t)=t^\mathbf{m}-\overline{d}$ and $d\in S\setminus S_0$ must be not two-sided, implying that $\mathbb{S}_{f}$ is a proper nonassociative ring.

 If $\mathbf{m}$ is prime or if the elements $1,\overline{d_0},\dots,\overline{d_0}^\mathbf{m}$ are linearly independent over $K_0={\rm Fix}(\overline{\sigma})$, then
$\mathbb{S}_{\overline{f}}$ is a proper nonassociative cyclic division algebra of degree $\mathbf{m}$ and $\bar f$ is irreducible, hence $f$ is irreducible, too. So when working with codes we   assume that $1,\overline{d_0},\dots,\overline{d_0}^\mathbf{m}$
are linearly dependent over $K$, or else $f$  will be irreducible.
\end{example}

\subsection{Isometries and equivalences of skew polycyclic codes over finite chain rings}

We use the notation and conventions of the previous section and assume that $S$ is a finite chain ring and put $R=S[t;\sigma]$ for some $\sigma\in {\rm Aut}(S)$  of  order $\mathbf{m}$. Let $K$ be the residue field of $S$.

Let
 $f(t) = t^{\mathbf{n}}-\sum_{i=0}^{\mathbf{n}-1} a_i t^i $ and $h(t) = t^{\mathbf{n}}-
\sum_{i=0}^{\mathbf{n}-1} b_i t^i \in S[t;\sigma]$ be reducible; to avoid pathological cases, let us assume that $\overline{a_0}\not=0$ and $\overline{b_0}\not=0$  (so in particular, both $f$ and $h$ are regular). We note that  if $g$ is a monic right divisor of $f$, then $\bar g$ is a monic right divisor of $\bar f$, which will be needed when looking at skew polycyclic codes.

\begin{proposition}\label{prop:chain}
 Let $\alpha\in S^\times$. 
 \\ (i) If  $ G_{\tau,\alpha}: R/Rf\to R/Rh $ is an equivalence between nonassociative unital Petit rings, then so is
    $$  G_{\bar \tau,\bar \alpha}: K[t;\bar \sigma]/K[t;\bar \sigma]\bar f\to K[t;\bar \sigma]/ K[t;\bar \sigma]\bar h. $$
(ii)
 If  $ G_{\tau,\alpha,k}: R/Rf\to R/Rh $ is an isometry between nonassociative unital Petit rings, then so is
 $$  G_{\bar \tau,\bar \alpha,k}: K[t;\bar \sigma]/K[t;\bar \sigma]\bar f\to K[t;\bar \sigma]/ K[t;\bar \sigma]\bar h. $$
\end{proposition}

Note that here $G_{\bar \tau,\bar \alpha,k}=\overline{G_{\tau,\alpha}}$ is canonically defined via $G_{\tau,\alpha,k}$.

\begin{proof}
 $(i)$ We use Theorem \ref{general_isomorphism_theorem1} to prove this directly to see the details. We have $  G_{\tau,\alpha}:\mathbb{S}_f\to \mathbb{S}_h$, so we know that $  G_{\tau,\alpha}(t)=\alpha t$
and $  G_{\tau,\alpha}|_{S}=\tau$. Moreover, $\tau(a_i) = N_{\mathbf{n}-i}^{\sigma}(\sigma^{i} (\alpha))b_i$
 for all $i \in\{ 0, \ldots, \mathbf{n}-1\}$. This implies that $\bar\tau(\overline{a_i}) = N_{\mathbf{n}-i}^{\bar \sigma}(\bar \sigma^{i} (\bar \alpha))\overline{b_i}$
 for all $i \in\{ 0, \ldots, \mathbf{n}-1\}$ and hence that there exists an equivalence $G_{\bar \tau,\bar \alpha}: K[t;\bar \sigma]/K[t;\bar \sigma]\bar f\to K[t;\bar \sigma]/ K[t;\bar \sigma]\bar h.$
 This yields the assertion.
 \\ $(ii)$  Let  $ G_{\tau,\alpha,k}: \mathbb{S}_f\to \mathbb{S}_h $ be a ring isomorphism, then it restricts to $\tau$ on $S$, so that
     $ G_{\tau, \alpha,k}(\gamma \mathbb{S}_f ) \subset \gamma \mathbb{S}_h$. Therefore we have a canonically induced isomorphism
     $$  \overline{G_{\tau, \alpha,k}}: \mathbb{S}_f/\gamma \mathbb{S}_f\to \mathbb{S}_h/\gamma \mathbb{S}_h $$
     that restricts to $\bar \tau:S/(\gamma) \to S/(\gamma)$, that is to $\bar \tau\in {\rm Aut}(K)$, and maps $t$ to $\bar \alpha t^k+\gamma \mathbb{S}_h$.

Now $\overline{\mathbb{S}_f}= \mathbb{S}_f/\gamma \mathbb{S}_f $ is canonically isomorphic to $K[t;\overline{\sigma}]/K[t;\overline{\sigma}]\overline{f}= S_{\overline{f}}$, so we also have a canonical ring isomorphism
      $$  \overline{G_{\tau, \alpha,k}}: K[t;\bar \sigma]/K[t;\bar \sigma]\bar f\to K[t;\bar \sigma]/ K[t;\bar \sigma]\bar h. $$
      that restricts to $\bar \tau \in {\rm Aut}(K)$. We know this means that $t$ must be mapped to $\beta t^\ell$ for some nonzero $\beta\in K$ and integer $\ell>0$, so immediately obtain  $ \overline{G_{\tau, \alpha,k}}= G_{\bar \tau,\beta,\ell}$.
      We now show that indeed $  \overline{G_{\tau, \alpha,k}} =G_{\bar \tau,\bar \alpha,k}$. Since $G_{\tau, \alpha,k}(t)=\alpha t^k$, we have the canonical identifications
      $$\overline{G_{\tau, \alpha,k}}(t)=\bar \alpha t^k +\gamma \mathbb{S}_h=\bar \alpha t^k+(S/(\gamma)[t;\sigma])/(S/(\gamma)[t;\sigma])\bar h=\bar \alpha t^k+K[t;\bar \sigma]\bar h,$$
      therefore $\beta=\bar \alpha$ and  $\ell=k$.
 \end{proof}

    \begin{corollary}
    $(i)$ Suppose that
$(a_0,\dots, a_{\mathbf{n}-1})\cong_\mathbf{n}  (b_0,\dots,b_{\mathbf{n}-1})$, that is the classes ${\bf C}_f$ and ${\bf C}_h$ of skew polycyclic codes over $S$ are $\mathbf{n}$-isometric, then
$(\overline{a_0},\dots,\overline{ a_{\mathbf{n}-1}})\cong_\mathbf{n} (\overline{b_0},\dots,\overline{b_{\mathbf{n}-1}})$, that is the classes ${\bf C}_{\bar f}$ and ${\bf C}_{\bar h}$ of skew polycyclic codes over $K$ are $\mathbf{n}$-isometric.
\\ $(ii)$ Suppose that
$(a_0,\dots, a_{\mathbf{n}-1})\sim_\mathbf{n} (b_0,\dots,b_{\mathbf{n}-1})$, that is the classes ${\bf C}_f$ and ${\bf C}_h$ of skew polycyclic codes over $S$ are $\mathbf{n}$-equivalent, then
$(\overline{a_0},\dots,\overline{ a_{\mathbf{n}-1}})\sim_\mathbf{n} (\overline{b_0},\dots,\overline{b_{\mathbf{n}-1}})$, that is the classes ${\bf C}_{\bar f}$ and ${\bf C}_{\bar h}$ of skew polycyclic codes over $K$ are $\mathbf{n}$-equivalent.
    \end{corollary}

For skew constacyclic codes over finite chain rings, $\mathbf{n}$-isometries between their ambient rings where $k>1$ are rare, and can only occur under rigid conditions on $k$ and the ambient algebras.

    \begin{theorem}\label{thm:chainmain}
Let $f,h\in R $ such that $\bar f(t)=t^{\mathbf{n}}-a_0$ and $\bar h(t)=t^{\mathbf{n}}-b_0$.
 Suppose  that $2\leq k < m$ and that there exists an isometry $ G_{\tau,\alpha,k}: R/Rf\to R/Rh $, then
     \\ (i) $k\equiv 1 \mod \mathbf{m}$;
 \\ (ii) $ \mathbf{m}\mid \mathbf{n}$;
 \\ (iii)  $\gcd(k,\mathbf{n})=1$;
 \\ (iv)  $\overline{b_0}, \overline{b_0}\in K_0$;
    \end{theorem}

    \begin{proof}
    The isometry $ G_{\tau,\alpha,k}$ canonically induces the isometry $ G_{\bar \tau,\bar \alpha,k}:K[t;\sigma]/K[t;\sigma]\bar f \to K[t;\sigma]/K[t;\sigma]\bar h $. By Theorem \ref{thm:G} (\cite[Theorem 4.4]{NevPum2025}) the existence of the isometry $ G_{\bar \tau,\bar \alpha,k}:K[t;\sigma]/K[t;\sigma]\bar f \to K[t;\sigma]/K[t;\sigma]\bar h $
      implies that the Petit rings $K[t;\sigma]/K[t;\sigma]\bar f$ and $ K[t;\sigma]/K[t;\sigma]\bar h $ are associative and that
    $k\equiv 1 \mod \mathbf{m}$, we have
     $\mathbf{m}\mid \mathbf{n}$, $\gcd(k,\mathbf{n})=1$ and
     $\overline{b_0}, \overline{b_0}\in K_0$,
      $(N_{K/K_0}(\alpha))^{\mathbf{n}/n}\overline{b_0}^k=\tau(\overline{a_0})$.
 \end{proof}

Under certain additional assumptions, there will exist only equivalences between skew polycyclic codes over finite chain rings $S$.

\begin{theorem}\label{thm:chain}
 Suppose  that $\mathbf{n}\leq \mathbf{m}-1$ and that $\bar f$ and $\bar h$  do not generate two-sided ideals in $K[t;\sigma]$. If  $ G_{\tau,\alpha,k}: R/Rf\to R/Rh $ is an isometry, then $k=1$.
\end{theorem}

\begin{proof}
If $\bar f$ and $\bar h$  do not generate two-sided ideals in $K[t;\sigma]$ and $\mathbf{n}\leq \mathbf{m}-1$ then we know that every
    isometry $ G_{\bar \tau,\bar \alpha,k}:K[t;\sigma]/K[t;\sigma]\bar f \to K[t;\sigma]/K[t;\sigma]\bar h $ must have $k=1$ by Theorem  \ref{general_isomorphism_theorem2}. Therefore we get $ G_{\tau,\alpha,k}= G_{\tau,\alpha,1}$.
\end{proof}

\begin{theorem} \label{thm:chain2}
Suppose that one of the following holds.
\\ (i)  $\mathbf{n} \leq \mathbf{m}-1$
and  $\bar f$  does not generate a two-sided ideal in $K[t;\sigma]$.
 \\ (ii)  $\bar f(t)=t^{\mathbf{n}}-\overline{ a_0}$ and either $\mathbf{m}\nmid \mathbf{n}$, or  $\overline{ a_0}$  is not in $K_0$.
\\ Then the notions of equivalence and isometry (resp., of Chen equivalence and Chen isometry) coincide for all skew $(f,\sigma)$-polycyclic codes over $S$  of length $\mathbf{n}$.
\end{theorem}

\begin{proof}
$(i)$ This follows immediately from  Theorem  \ref{thm:chain}.
\\ $(ii) $ Since $\mathbf{m}\nmid \mathbf{n}$ or  $\bar a\not \in K_0$, the notions of equivalence and isometry (resp., of Chen equivalence and Chen isometry) coincide for skew $(\bar \sigma, \bar a)$-constacyclic codes  over $K$ by Corollary \ref{C:onlyweightone}, implying that every
 isometry $ G_{\bar \tau,\bar \alpha,k}:K[t;\sigma]/K[t;\sigma]\bar f \to K[t;\sigma]/K[t;\sigma]\bar h$ must have $k=1$, hence  $ G_{\tau,\alpha,k}= G_{\tau,\alpha,1}$.
\end{proof}

\subsection{Galois rings}
Given a prime $p$ and positive integers ${m_0}$ and $r$, let $s(x)\in \mathbb{Z}_{p^{m_0}}[x]$ be a monic basic irreducible polynomial of degree $r$. Then $G(p^{m_0},r)=\mathbb{Z}_{p^{m_0}}[x]/(s(x))$
 is called the \emph{Galois ring}  of characteristic
$p^{m_0}$ and rank $r$. Different choices of  monic basic irreducible polynomials $s(x)\in  \mathbb{Z}_{p^{m_0}}[x]$ of degree $r$ yield isomorphic Galois rings. Galois rings are examples of finite chain rings. The Galois ring $G(p^{m_0},r)$ is a local ring with maximal ideal generated by $p$ and residue field $K=G(p^{m_0},n)/pG(p^{m_0},r)=\mathbb{F}_{p^r}$.
The cardinality of $G(p^{m_0},r)$ is $p^{{m_0}r}$.

We sometimes use the notation $G(p^{m_0},r)=\mathbb{Z}_{p^{m_0}}[\omega]$ with $\omega=x+(s(x))$. Then $s(x)$ is the unique polynomial of degree $\leq r$ such that $s(\omega)=0$, and $\mathbb{F}_{p^r}=\mathbb{F}_{p}[\bar \omega]$. This also shows that there exists an element in $G(p^{m_0},r)$ of order $p^r-1$ which is a root of a basic primitive polynomial of degree $r$ over $\mathbb{Z}_{p^{m_0}}$, and dividing $x^{p^r-1}-1$ in $\mathbb{Z}_{p^{m_0}}[x].$

Every $a\in G(p^{m_0},r)$ has a unique representation as a $p$-adic expansion
$$a=\xi_0+p\xi_1+p^2\xi_2+\cdots+ p^{m_0-1}\xi_{{m_0}-1}$$
for suitable  $\xi_i\in T\subset S$, where $T=\{0,\xi_1,\dots,\xi_{{m_0}-1} \}$ is the Teichmueller set of $G(p^{m_0},r)$.

Moreover, when $r=\mathbf{m}s$  then $G(p^{m_0},r)$  is a Galois extension of the Galois ring $G(p^{m_0},s)$ with cyclic Galois group of  order $\mathbf{m}$; in particular  $G(p^{m_0},r)$ is a Galois extension of the Galois ring $G(p^{m_0},1)=\mathbb{Z}/(p^{m_0})$ with cyclic Galois group of order $r$.
The  automorphism generating the Galois group of $G(p^{m_0},r)/G(p^{m_0},s)$ is
$$\sigma(a)=\xi_0^q+p\xi_1^q+p^2\xi_2^q+\cdots+ p^{{m_0}-1}\xi_{{m_0}-1}^q;$$
$\sigma$ has  order $\mathbf{m}$ and $\overline{\sigma}:K\to K$, $\overline{\sigma}(y)=y^q$, $K_0=\mathbb{F}_{p^s}$.

For every Galois extension $G(p^{m_0},r)/G(p^{m_0},s)$, $r=\mathbf{m}s$, there is a basic primitive polynomial $p(x)\in G(p^{m_0},s)[x]=\mathbb{Z}_{p^{m_0}}[\omega][x]$ (i.e. $\overline{p(x)}\in \mathbb{F}_{p^s}[x]=\mathbb{F}_{p}(\bar \omega)[x]$ is primitive) of degree $\mathbf{m}$ such that we can write $G(p^{m_0},r)\cong G(p^{m_0},s)/(p(x))$.

Put $S=G(p^{m_0},r)$ and $T^\times=T\setminus \{0\}$ then it follows that
$$S^\times =  T^\times (1+pS) \cong T^\times \times (1+pS).$$
 Since $T^\times$ is a cyclic group of order $p^r-1$ and $(1+pS)$ is a $p$-group of order $p^{r(m_0-1)}$, it follows that
$$|S^\times|=|S^\times| |(1+pS)|= (p^r-1) p^{r({m_0}-1)}$$
\cite{Wood2008, BiFl2002}.

\subsection{Finite chain rings defined via Galois rings}
A general finite chain ring $S$ with maximal ideal $\mathfrak{p}$, residue field $S/\mathfrak{p}\cong\mathbb{F}_{p^r}$ and characteristic $p^{m_0}$
has nilpotency index $s_0$ and there is an integer $e_0$, $1\leq e_0\leq s_0$ such that $pS=\mathfrak{p}^{e_0}$. Let $t_0$ be an integer, $1\leq t_0\leq e_0$ such that $s_0=(m_0-1)e_0+t_0$ and such that $t_0=e_0$ for $m_0=1$. The integers $(p,{m_0},r,e_0,t_0)$ are considered to be the invariants of $S$.
\\\\
Every finite chain ring $S$ with invariants $(p,{m_0},r,e_0,t_0)$ is of the form
$$S\cong G(p^{m_0},r)[x]/(e(x),p^{m_0}x^{t_0}),$$
 where $e(x)=x^{e_0}-p(u_{e_0-1}x^{e_0-1}+\cdots +u_0)$, $u_0\in S^\times$, is an Eisenstein polynomial of degree $e_0$ and $u_0\in G(p^{m_0},r)$ is invertible. Here, the maximal ideal of $S$ is given by $xS$.

 Every $a\in S$ has again a unique representation as
 $$a=\xi_0+x\xi_1+x^2\xi_2+\cdots+ x^{s_0-1}\xi_{{s_0}-1}$$
  for suitable $\xi_i\in T\subset S$, where $T=\{0,\xi_1,\dots,xi_{m_0-1} \}$  is the Teichmueller set of $S$.

Put $T^\times=T\setminus \{0\}$ then  $T^\times$ is a cyclic group of order $p^r-1$ and
$$S^\times =  T^\times (1+xS)\cong T^\times \times (1+xS),\quad |S^\times|=(p^r-1)  p^{r(s_0-1)}.$$
Let us assume  that in our setup $s$ divides $r$, and that $r=\mathbf{m}s$.

\subsection{Skew constacyclic codes over chain rings}
 
 For general finite chain rings $S$, the automorphisms are not always straightforward to write down.
We will assume  that $S_0=G(p^{m_0},s)$  and that $S/S_0$ is a Galois extension of finite chain rings with  cyclic Galois group of  order $\mathbf{m}$ generated by $\sigma$ \cite{A1, A2}. Let $K=S/\mathfrak{P}=\mathbb{F}_{p^r}$ with  $r=\mathbf{m}s$ be the residue field of $S$, and $K_0=S_0/\mathfrak{p}=\mathbb{F}_{p^s}$ the residue field of $S_0$.

 We can count the number of Chen isometry classes of skew constacyclic codes over $S$ for classes of codes  employing Theorem \ref{c:OuazzoufiniteS}. In order to do so, we have to compute  $|S^\times|/|N_{\mathbf{n}}^\sigma(S^\times)|$ and
$|S_0^\times|/|N_{\mathbf{n}}^\sigma(S^\times)|$.

 
 Let $q=p^s$, then the  automorphism $\sigma$ generating the Galois group of $S/S_0$ is defined via
$$\sigma(a)=\xi_0^q+x\xi_1^q+x^2\xi_2^q+\cdots+ x^{s_0-1}\xi_{{s_0}-1}^q;$$
$\sigma$ has  order $\mathbf{m}$ and  $\bar \sigma\in {\rm Gal}(K/K_0)$, $\bar\sigma(y)=y^{q}$.

In order to  compute $N_{i}^\sigma(S^\times)$ we use that every $\alpha\in S^\times$ can be written as $\alpha=zu$ with $z\in T^\times$ and $u\in (1+xS)$. This implies that
$$N_i^{\sigma}(\alpha) = \prod_{j=0}^{i-1}\sigma^j(z)\prod_{j=0}^{i-1}\sigma^j(u)=N_i^{\sigma}(z)N_i^{\sigma}(u),$$
therefore
$$N_{i}^\sigma(S^\times)=N_{i}^\sigma(T^\times)N_{i}^\sigma(1+xS).$$
Since $T^\times\cong (\mathbb{F}_{p^r})^\times$
we can conclude that
 $N_{\mathbf{n}}^\sigma(z)=z^{[\mathbf{n}]_{s}}$ for all $z\in T^\times$.
Since $T^\times$ is cyclic, it is generated by some $\theta$ and the subgroup $N_{\mathbf{n}}^\sigma(T^\times)$  of $N_{\mathbf{n}}^\sigma(S^\times)$
is thus generated by $N_{\mathbf{n}}^\sigma(\theta) = \theta^{[\mathbf{n}]_s}=\theta^w$ with $w=\gcd([\mathbf{n}]_s,p^r-1)$; thus
 $$|N_{\mathbf{n}}^\sigma(T^\times)|= \frac{p^r-1}{w}. $$
 The order of $N_{i}^\sigma(1+xS)$ is harder to establish, but can be computed by ``brute force'' for small rings.

\begin{example} Let $S=GR(4,2)=\mathbb{Z}_4[\omega]/(\omega^2+\omega+1)$, so that $K=\mathbb{F}_4$, and define $\sigma(\omega)=\omega^2$, then $\sigma$ has order two, $S_0=\mathbb{Z}_4,$ $|S_0^{\times}|=2$, $K_0=\mathbb{F}_2$, $|S|=16$, $|S^\times|=12$, $|1+2S|=4$, $T^\times=\{ 1,\omega,\omega^2\}$ and $1+2S=\{1,3,1+2\omega, 3+2\omega \}$. Then
$$
|N_{\mathbf{n}}^\sigma (T^\times)|=
\begin{cases}
     3& \text{if } \mathbf{n} \text{ odd},\\
    1 & \text{if }  \mathbf{n} \text{ even.}
    \end{cases}
 $$
 and a tedious but straightforward calculation shows that
 \[|N_{\mathbf{n}}^\sigma (1+2S)|=
\begin{cases}
4 & \text{if } \mathbf{n}\text{  odd},\\
2 & \text{if } \mathbf{n} \equiv 2 \pmod{4},\\
1 & \text{if } \mathbf{n} \equiv 0 \pmod{4}.
\end{cases}
\]

We obtain
$$
|N_{\mathbf{n}}^\sigma (S^\times)|=
\begin{cases}
12 & \text{if } \mathbf{n} \text{ is odd},\\
2 & \text{if } \mathbf{n} \equiv 2 \pmod{4},\\
1 & \text{if } \mathbf{n} \equiv 0 \pmod{4}
\end{cases}
 $$
 and
 $$
|S^\times/N_{\mathbf{n}}^\sigma (S^\times)|=
\begin{cases}
1 & \text{if } \mathbf{n} \text{ is odd},\\
6 & \text{if } \mathbf{n} \equiv 2 \pmod{4},\\
12 & \text{if } \mathbf{n} \equiv 0 \pmod{4}.
\end{cases}
 $$
Hence the number of distinct Chen isometry classes of families of skew $(\sigma,a)$-constacyclic codes  of length $\mathbf{n}$ arising from nonassociative ambient rings $S[t,\sigma]/S[t,\sigma](t^{\mathbf{n}}-a)$ is
    $$
     N=
     \begin{cases}
1 &\text{if $\mathbf{n}$ is odd,}\\
5 & \text{if } \mathbf{n} \equiv 2 \pmod{4},\\
10 & \text{if } \mathbf{n} \equiv 0 \pmod{4}.
\end{cases}
    $$
    There is one additional Chen equivalence class (and thus at most one additional Chen isometry class) of a family of skew $(\sigma,a)$-constacyclic codes  of length $\mathbf{n}$ arising from associative ambient rings $S[t,\sigma]/S[t,\sigma](t^{\mathbf{n}}-a)$ (i.e., $\mathbf{n}$ is odd and $a\in S_0^\times$) (Theorem \ref{c:OuazzoufiniteS}).
\end{example}

\begin{example}
Let $S=G(p^{m_0},r)$, $S_0=G(p^{m_0},s)$, $r=\mathbf{m}s$, and suppose that $\gcd([\mathbf{n}]_s,p^r-1)=1$.  Then $|N_{\mathbf{n}}^\sigma(T^\times)|= p^r-1, $ that means $N_{\mathbf{n}}^\sigma(T^\times)=T^\times$ is largest possible. Now all depends on the order of the $p$-group $N_{\mathbf{n}}^\sigma(1+pS).$

Since $K=\mathbb{F}_{p^r}$, $\bar \sigma(x)=x^{p^s}$ with $s|r$  and $K_0=\mathbb{F}_{p^s}$, we know that
 $N_{\mathbf{n}}^{\bar \sigma}(K^\times)=K^\times$ so  all Petit ambient rings  $K[t;\bar \sigma]/K[t;\bar \sigma](t^{\mathbf{n}}-\bar a)$ are  $\mathbf{n}$-Chen equivalent, as was already observed in \cite[Corollary 2]{Oua2025} and \cite[Proposition 1]{BoulanouarBatoulBoucher2021}. Thus all skew constacyclic codes  of length $\mathbf{n}$ over $K$ are $\mathbf{n}$-Chen equivalent, but not all over $S$ need to be.

 If  $N_{\mathbf{n}}^\sigma(1+pS)=1+pS$ 
 then  all $S[t;\sigma]/S[t;\sigma](t^{\mathbf{n}}-a)$ are $\mathbf{n}$-Chen equivalent as well, and  in particular all  are $\mathbf{n}$-Chen equivalent to the ambient ring $S[t;\sigma]/S[t;\sigma](t^{\mathbf{n}}-1)$, which corresponds to the class of  skew cyclic codes.  Moreover, the notions of $\mathbf{n}$-Chen equivalence and $\mathbf{n}$-equivalence coincide in this case just like for the skew constacyclic codes  of length $\mathbf{n}$ over $K$. When  $\mathbf{m}\nmid \mathbf{n}$, then the notions of $\mathbf{n}$-equivalence,  $\mathbf{n}$-isometry and  $\mathbf{n}$-isometry coincide as well, again like for the skew constacyclic codes of length $\mathbf{n}$ over $K$ (Theorem \ref{thm:chainmain}).
Hence in this special case, for all $a\in S^\times$, the skew $(\sigma,a)$-constacyclic codes  of length $\mathbf{n}$ over $S$ are equivalent to the skew cyclic codes  of length $\mathbf{n}$ \cite[Example 5.7]{NevPum2025}.

However, if the order of $N_{\mathbf{n}}^\sigma(1+pS)$ is strictly smaller than the order $p^{r(m_0-1)}$ of $1+pS$, we will find skew $(\sigma,a)$-constacyclic codes  of length $\mathbf{n}$ over $S$ which are not equivalent. This demonstrates one of the advantages of choosing chain rings over finite fields.
\end{example}

\begin{example}\label{e:important}
At the other extreme, when
$\gcd([\mathbf{n}]_s,p^r-1)=p^r-1$, then both $N_{\mathbf{n}}^\sigma(T^\times)=\{1\}$ and $N_{\mathbf{n}}^\sigma(K^\times)=\{1\}$. The latter implies that no two classes of skew $(\bar \sigma,\bar a)$-constacyclic codes and skew $(\bar\sigma,\bar b)$-constacyclic codes  with distinct $\bar a,\bar b\in K^\times$ will be  $\mathbf{n}$-Chen equivalent. Therefore for all $a,b\in S^\times$ such that $\bar a\not=\bar b$ in $K$, the ambient rings $S[t;\sigma]/S[t;\sigma](t^{\mathbf{n}}-a)$ and $S[t;\sigma]/S[t;\sigma](t^{\mathbf{n}}-a)$ are not $\mathbf{n}$-Chen equivalent, either. We conclude that  no two classes of skew $(\sigma, a)$-constacyclic codes and skew $(\sigma, b)$-constacyclic codes  with distinct $\bar a,\bar b\in K^\times$ will be  $\mathbf{n}$-Chen equivalent.

When we work with Galois extensions of Galois rings where $S=G(p^{m_0},sn)$  and $S_0=G(p^{m_0},s)$, $r=s \mathbf{m}$, we  have
$r$ choices for $\tau\in {\rm Gal}(G(p^{m_0},sn)/G(p^{m_0},1))$, hence the set $\{\bar a^{p^{v}}\mid 0\leq v < \mathbf{m}\}\subset {\rm Gal}(G(p^{m_0},s \mathbf{m})$ is an equivalence class with $r$ elements, so all of the corresponding $( \sigma, a^{p^v})$-constacyclic codes are $\mathbf{n}$-equivalent. Thus there are fewer $\mathbf{n}$-equivalence classes of skew constacyclic codes over $S$ than  $\mathbf{n}$-Chen equivalence classes; the notion of equivalence again gives a tighter classification.

The order of the subgroup $|N_{\mathbf{n}}^\sigma(1+xS)|$ determines the exact number of (Chen) equivalence classes of skew constacyclic codes over $S$ of length $\mathbf{n}$. If $N_{\mathbf{n}}^\sigma(1+pS)\not=\{1\}$ then the above do not represent all possible skew $(\sigma,a)-$constacyclic codes, as then there exist $a\in S^\times \setminus T^\times$.
\end{example}

 The abelian group $N_{\mathbf{n}}^\sigma(1+xS)$ seems difficult to compute in general, but we can get some immediate estimates on its size.

\begin{enumerate}
\item \label{1} Since $N_{\mathbf{n}}^\sigma(1+xS)$ is a subgroup of $1+xS$, we know that $N_{\mathbf{n}}^\sigma(1+xS)$ is a $p$-group and
  $$|N_{\mathbf{n}}^\sigma(1+xS)|\text{ divides } p^{r(s_0-1)}.$$

  \item \label{2} When $\mathbf{m}\mid \mathbf{n}$ we know that $N_{\mathbf{n}}^\sigma(1+xS)$ is a subgroup of $S_0^\times$, so
  $$|N_{\mathbf{n}}^\sigma(1+xS)|\text{ divides } p^{s(m_0-1)}.$$

   \item \label{3} We assume that $S/S_0$ is Galois, hence separable, so that $xS=xS_0$ and therefore $1+xS_0$ is well-defined and $N_{\mathbf{n}}^\sigma(1+xS_0)$ is a subgroup of $1+xS_0$, thus
  $$|N_{\mathbf{n}}^\sigma(1+xS_0)|\text{ divides } p^{s(m_0-1)}.$$
       \end{enumerate}

    \subsection{The group structure of $N_{\mathbf{n}}^\sigma(1+xS)$}
    We can refine our observations (\ref{1}), (\ref{2}), (\ref{3})  employing the well-known Fundamental Theorem for finite abelian groups, which shows how the group structure of $S^\times$ influences the group structure of $N_i^\sigma(1+xS)$ and thus of $S^\times/N_i^\sigma(S^\times)$ for all positive integers $i$.

   \begin{enumerate}\setcounter{enumi}{3}
  \item By the Fundamental Theorem for finite abelian groups, we know that there exist a unique partition $(d_1,\dots,d_u)$ of $p^{r(s_0-1)}$
   with positive integers $d_j$,  $d_1\leq d_2\leq\dots \leq d^u$, where
    $$\sum_{j=0}^u d_j={r(s_0-1)},$$
     such that there is a group isomorphism
      $$\Psi: 1+xS \to \bigoplus_{j=0}^{u}\mathbb{Z}_{p^{d_j}}$$
      which canonically extends to a group isomorphism
      $$\Psi:S^\times \to \bigoplus_{j=0}^{u}\mathbb{Z}_{p^{d_j}}\oplus \mathbb{Z}_{p^r-1}.$$
       Now $\Psi$ canonically induces a group isomorphism
      $\Psi:N_i^\sigma(1+xS)\to \Psi(N_i^\sigma(1+xS))$, and so
  $$\Psi(N_i^\sigma(1+xS)) \text{ is a subgroup of } \bigoplus_{j=0}^{u}\mathbb{Z}_{p^{d_j}}.$$

  \item  Since $S_0=G(p^{m_0},s)$ there also exists a unique partition $(e_1,\dots,e_v)$ of $p^{r(m_0-1)}$ with positive integers $e_i$, $e_1\leq e_2\leq\dots \leq e^v$,
      $$\sum_{i=0}^v e_i={r(m_0-1)},$$ 
       such that  there is a group isomorphism
      $$\Psi: 1+xS_0 \to \bigoplus_{j=0}^{v}\mathbb{Z}_{p^{e_j}}$$
      which canonically extends to a group isomorphism
      $$\Psi:S_0^\times \to \bigoplus_{j=0}^{v}\mathbb{Z}_{p^{e_j}}\oplus \mathbb{Z}_{p^s-1}.$$
When $ \mathbf{m}\mid i$ we know that $N_{i}^\sigma(1+xS)$ is a subgroup of $S_0^\times$, so
  $$N_{i}^\sigma(1+xS)\text{ is isomorphic to a subgroup of } \bigoplus_{j=0}^{v}\mathbb{Z}_{p^{e_j}}\oplus \mathbb{Z}_{p^s-1}.$$
\end{enumerate}

Indeed we  know more:

\begin{lemma}
(i) If $p$ is odd or if $p=2$ and $m_0\leq 2$ then
$$GR(p^{m_0}, r)^\times \cong \mathbb{Z}_{p^{m_0-1}}^r\oplus \mathbb{Z}_{p^r-1}$$
and
$$N_{i}^\sigma(1+p \,GR(p^{m_0}, r))\text{ is isomorphic to a subgroup of } \mathbb{Z}_{p^{m_0-1}}^r.$$
(ii) If  $m_0\geq 3$ then
$$GR(2^{m_0}, r)^\times \cong  \mathbb{Z}_2 \oplus  \mathbb{Z}_{2^{m_0-2}} \oplus  \mathbb{Z}_{2^{m_0-1}}^{r-1}\oplus \mathbb{Z}_{p^r-1}$$
and
$$N_{i}^\sigma(1+2\, GR(2^{m_0}, r))\text{ is isomorphic to a subgroup of } \mathbb{Z}_2 \oplus  \mathbb{Z}_{2^{m_0-2}} \oplus  \mathbb{Z}_{2^{m_0-1}}^{r-1}.$$
\end{lemma}

\begin{proof}
By \cite[Theorem XVI.9 (b)]{Mc}, we know that
$$1+p \,GR(p^{m_0}, r)\cong \mathbb{Z}_{p^{m_0-1}}^r$$
 if $p$ is odd or if $p=2$ and $m_0\leq 2$
 and that
$$1+2\, GR(2^{m_0}, r)\cong \mathbb{Z}_2 \oplus   \mathbb{Z}_{2^{m_0-2}} \oplus   \mathbb{Z}_{2^{m_0-1}}^{r-1},$$
if $p=2$ and $m\geq 3$.
In particular, this means that
$S^\times \cong \mathbb{Z}_{p^{m_0-1}}^r\oplus \mathbb{Z}_{p^r-1}$
if $p$ is odd or if $p=2$ and $m_0\leq 2$,
and that
$S^\times \cong\mathbb{Z}_2 \oplus   \mathbb{Z}_{2^{m_0-2}} \oplus    \mathbb{Z}_{2^{m_0-1}}^{r-1}\oplus \mathbb{Z}_{p^r-1}$
if $p=2$ and $m\geq 3$.
\end{proof}

Similar but much more complex results about the structure of $1+xS$ are known also for general commutative finite chain rings $S$ \cite[Proposition 2.7, Corollary 3.2]{HouLM2003} and  \cite[Equation (29)]{CON2025}. These were used in \cite{CON2025} to compute the Chen isometry and Chen equivalence classes of constacyclic codes over $S$.

When we have $ \mathbf{m}\mid i$ for the  order $\mathbf{m}$ of $\sigma$, we can say more.

\begin{corollary}
Let $S=G(p^{m_0},s \mathbf{m})$  and $S_0=G(p^{m_0},s)$, $r=s \mathbf{m}$.
Suppose that  $ \mathbf{m}\mid i$.
  \\ (i) If $p$ is odd or if $p=2$ and $m_0\leq 2$ then
$$N_{i}^\sigma(1+p \,GR(p^{m_0}, r))\text{ is isomorphic to a subgroup of } \mathbb{Z}_{p^{m_0-1}}^s.$$
In particular, if $s=1$ (i.e., $S_0=\mathbb{Z}_{p^{m_0}}$) then $N_{i}^\sigma(1+p \,GR(p^{m_0}, r))$ is a cyclic group.
\\ (ii) If  $m_0\geq 3$ then
$$N_{i}^\sigma(1+2\, GR(2^{m_0}, r))\text{ is isomorphic to a subgroup of } \mathbb{Z}_2 \oplus  \mathbb{Z}_{2^{m_0-2}} \oplus  \mathbb{Z}_{2^{m_0-1}}^{s-1}.$$
\end{corollary}

\subsection{Conclusions and further work}

We initiated the study of skew polycyclic codes up to equivalence, identified skew polycyclic codes which are the same in terms of protecting the underlying information, and obtained some catalogues of skew constacyclic codes up to isometries and equivalences with tools for de-duplicating them, which should help to identify gaps on which further research may be focused. Our equivalence relations are based on the algebra structure of the mambient algebras, and the most general and canonical ones possible from this point of view. They will yield tighter classifications of isometric and equivalent codes that previously used ones.
We also now have an easy way to see how the generator skew polynomials of two isometric codes are related, and how to compute the dimensions of codes that are isometric under the isometry $G_{\tau,\alpha,k}$ of their ambient rings.

Our next goals are to understand  isometries better, which we have started to do in a forthcoming joint paper with Monica Nevins,  and to then look at special rings $S$, like finite chain rings or Galois rings in much more detail.
The isometry classes for skew polycyclic codes seem to be hard to find, however, the special case of skew constacyclic codes seems tractable.
Indeed we conjecture that the only isometries between nonassociative ambient rings are the equivalences $G_{\tau,\alpha}$.

For finite fields, it is well known that $\F_q[t;\sigma,\delta_\beta]\cong \F_q[t;\sigma]$. However, this isomorphism of noncommutative rings changes the Hamming weight of corresponding skew $(\sigma,\delta)$-polycyclic codes, and hence their performance.
It was pointed out already in \cite{BouU2014} that even when $S$ is a finite field, codes constructed with some $f\in  S[t;\sigma,\delta_\beta]$ where $\delta_\beta(a)=\beta(\sigma(a)-a)$ is nonzero, can have a better minimum Hamming distance than those constructed when $\delta=0$.
 Most of the current literature, however, only considers the case that $S$ is a finite field or Galois ring, and $\delta=0$, e.g. \cite{B}.

We plan to investigate the Hamming weight preserving isomorphisms, in particular $G_{\tau,\alpha}$, also when $\delta\not=0$.
We also plan to look at the case where $S$ is not commutative.

\subsection*{Acknowledgments} This paper was written during the second author's stay as a Simons Professor in Residence at the University of Ottawa. She gratefully acknowledges the support of CRM and the Simons Foundation. She thanks the Department of Mathematics and Statistics for its hospitality and its congenial and inspiring atmosphere, and especially Monica Nevins for lots of inspiring conversations.


\providecommand{\bysame}{\leavevmode\hbox to3em{\hrulefill}\thinspace}
\providecommand{\MR}{\relax\ifhmode\unskip\space\fi MR }
\providecommand{\MRhref}[2]{%
  \href{http://www.ams.org/mathscinet-getitem?mr=#1}{#2}
}
\providecommand{\href}[2]{#2}

\end{document}